\PassOptionsToPackage{dvipsnames}{xcolor}
\documentclass[11pt, thm-restate, cleveref, letterpaper]{article}
\newif\ifdagstuhl
\dagstuhlfalse

\ifdagstuhl
    
\usepackage{enumerate}
\usepackage{tikz}
\ifdagstuhl \else
\usepackage[dvipsnames]{xcolor}
\fi
\usepackage{amsmath,amsfonts,amssymb,amsthm}
\usepackage{xspace}
\usepackage{cancel} 
\usepackage{mathtools}
\usepackage{braket,amsfonts}
\usepackage{thmtools}
\usepackage{algorithm}
\usepackage[noend]{algpseudocode}
\usepackage{lineno}


\newcommand{\hide}[1]{}

\ifdagstuhl

\renewcommand{\st}{^*}

\else 


\newtheorem{theorem}{Theorem}[section]

\newtheorem{lemma}[theorem]{Lemma}
\newtheorem{definition}[theorem]{Definition}
\newtheorem{observation}[theorem]{Observation}
\newtheorem{corollary}[theorem]{Corollary}
\newtheorem{claim}[theorem]{Claim}

\newcommand{\st}{^*}
\fi 

\newcommand{\reals}{\mathbb{R}}
\newcommand{\bl}{bi-Lipschitz\xspace}

\newcommand{\band}{band thickness\xspace}

\newcommand{\expec}{\mathop{\mathbb{E}}}
\newcommand{\cA}{\mathcal{A}}
\newcommand{\cD}{\mathcal{D}}

\newcommand{\cU}{\mathcal{U}}
\newcommand{\cH}{\mathcal{H}}
\newcommand{\cN}{\mathcal{N}}
\newcommand{\cX}{\mathcal{X}}
\newcommand{\cY}{\mathcal{Y}}
\newcommand{\cK}{\mathcal{K}}

\newcommand{\sminout}[1]{\mathcal{K}(#1)}
\newcommand{\sminoutval}[1]{\text{OutlierCost}(#1)}
\newcommand{\sminoutset}[1]{\mathcal{K}(#1)}
\newcommand{\probembed}[1]{\xhookrightarrow{#1}}

\newcommand{\embedcontr}[2]{\xhookrightarrow{#1,#2}}

\newcommand{\onion}{onion\xspace}
\newcommand{\Onion}{Onion\xspace}

\newcommand{\from}{\leftarrow}

\newcommand{\merge}{\texttt{MergeHST}}
\newcommand{\mergel}{\texttt{MergeL1}}

\newcommand{\goodmerge}{\text{perfect}}

\newcommand{\mergeplain}{\texttt{Merge}}
\newcommand{\pmerge}{\texttt{PerfectMerge}}
\newcommand{\dista}[1][\alpha]{d_{#1}}
\newcommand{\change}[1]{{\color{Mahogany} #1}}
\newcommand{\levelval}{\eta}
\newcommand{\eps}{\varepsilon}

\newtheorem{clm}{Claim}

\newenvironment{proofof}[1]{\vspace{0.1in}\noindent{\em Proof of #1.}}{\qed}

\usepackage[colorinlistoftodos,prependcaption,textsize=tiny]{todonotes}


    
    \bibliographystyle{plainurl}
    
    \title{Bi-Lipschitz extensions and outlier embeddings into trees}
    
    \author{Shuchi Chawla}{Department of Computer Science, University of Texas at Austin, USA  }{shuchi@cs.utexas.edu}{https://orcid.org/0000-0001-5583-2320}{This research was funded in part by NSF awards CCF-2217069 and CCF-2225259.}
    \author{Arnold Filtser}{Department of Computer Science, Bar-Ilan University, Ramat Gan, Israel}{arnold.filtser@biu.ac.il}{https://orcid.org/0000-0001-9578-9304}{This research was supported by the ISRAEL SCIENCE FOUNDATION (grant No. 1042/22).}
    \author{Kristin Sheridan}{Department of Computer Science, University of Texas at Austin, USA }{kristin@cs.utexas.edu}{https://orcid.org/0000-0002-5524-3259}{This research was funded in part by NSF awards CCF-2217069 and CCF-2225259 and by JPMorgan Chase \& Co.  Any views or opinions expressed herein are solely those of the authors listed, and may differ from the views and opinions expressed by JPMorgan Chase \& Co. or its affiliates. This material is not a product of the Research Department of J.P. Morgan Securities LLC. This material should not be construed as an individual recommendation for any particular client and is not intended as a recommendation of particular securities, financial instruments or strategies for a particular client.  This material does not constitute a solicitation or offer in any jurisdiction. }
    \author{Yonatan Trachtenberg}{Department of Computer Science, Bar-Ilan University, Ramat Gan, Israel}{ytrachtenberg18@gmail.com}{https://orcid.org/0009-0000-1181-3697}{}
    
    \authorrunning{S. Chawla, A. Filtser, K. Sheridan, Y. Trachtenberg} 
    \Copyright{Shuchi Chawla, Arnold Filtser, Kristin Sheridan, and Yonatan Trachtenberg} 
    
    \ccsdesc[500]{Theory of computation~Rounding techniques}
    \keywords{metric embeddings, hierarchically separated trees, outliers}
    \nolinenumbers 

    \category{APPROX}
    \relatedversion{}
    \relatedversiondetails{Full version}{https://arxiv.org/abs/2601.15470}
    
    \EventEditors{Mohit Singh and Tom Gur}
    \EventNoEds{2}
    \EventLongTitle{Approximation, Randomization, and Combinatorial Optimization. Algorithms and Techniques (APPROX/RANDOM 2026)}
    \EventShortTitle{\mbox{\scriptsize APPROX/RANDOM 2026}}
    \EventAcronym{APPROX/RANDOM}
    \EventYear{2026}
    \EventDate{August 19--21, 2026}
    \EventLocation{Boston University, Boston, Massachusetts, USA}
    \EventLogo{}
    \SeriesVolume{392}
    \ArticleNo{17}

\else
    
    \usepackage[margin=1in]{geometry}
    \usepackage[T1]{fontenc}
    \usepackage{lmodern}
    
    \usepackage{amsmath, amsthm, amssymb}
    \usepackage{hyperref}
    \usepackage{cleveref}
    \usepackage{thm-restate}
    
    \usepackage{authblk}

    \bibliographystyle{alpha}
    
    \title{Bi-Lipschitz extensions and outlier embeddings into trees}
    
    \author[1]{Shuchi Chawla\thanks{This research was funded in part by NSF awards CCF-2217069 and CCF-2225259.}}
    \author[2]{Arnold Filtser\thanks{This research was supported by the ISRAEL SCIENCE FOUNDATION (grant No. 1042/22).}}
    \author[1]{Kristin Sheridan\thanks{This research was funded in part by NSF awards CCF-2217069 and CCF-2225259 and by JPMorgan Chase \& Co.  Any views or opinions expressed herein are solely those of the authors listed, and may differ from the views and opinions expressed by JPMorgan Chase \& Co. or its affiliates. This material is not a product of the Research Department of J.P. Morgan Securities LLC. This material should not be construed as an individual recommendation for any particular client and is not intended as a recommendation of particular securities, financial instruments or strategies for a particular client.  This material does not constitute a solicitation or offer in any jurisdiction.}}
    \author[2]{Yonatan Trachtenberg}
    
    \affil[1]{Department of Computer Science, University of Texas at Austin, USA \protect\\ \texttt{\{shuchi, kristin\}@cs.utexas.edu}}
    \affil[2]{Department of Computer Science, Bar-Ilan University, Ramat Gan, Israel \protect\\ \texttt{arnold.filtser@biu.ac.il, ytrachtenberg18@gmail.com}}
    
    \date{}
\fi

\begin{document}
\maketitle
\ifdagstuhl

\else 
\thispagestyle{empty}

\fi

\begin{abstract}
{We develop low distortion embeddings with outliers from arbitrary metrics into hierarchically separated trees (HSTs). 
In particular, we develop an efficient algorithm that for any $\epsilon>0$, given an input metric $(X,d)$, and a probabilistic embedding of all but $k$ points from $X$ into HSTs with distortion $c$, samples from a probabilistic embedding of all but $O(\frac{k}{\epsilon}\log k)$ points into HSTs that achieves distortion at most $(32+\epsilon)c$.

Our results are based on two key technical components. First, we extend an algorithm of Munagala et al. \cite{mungala2023} for minimizing the distortion of embeddings without outliers into HSTs to the setting with outliers. We combine this with new results on bi-Lipschitz extensions into trees and $\ell_1$ space. In particular, we show that any probabilistic embedding into HSTs can be extended to $k$ additional points with only a factor $O(\log k)$ of additional distortion. This \bl extension result utilizes a new probabilistic partitioning scheme that we call {\em onion partitioning}.}
\end{abstract}

\ifdagstuhl
\else 
\clearpage

\setcounter{page}{1}

\fi

\section{Introduction}

Low-distortion embeddings into trees are a fundamental algorithmic tool for reducing optimization problems on general graph metrics to the special case of tree metrics, where they are often easier to solve. The distortion of an embedding, defined as the maximum multiplicative change in pairwise distances, directly determines the loss in performance incurred by such a reduction.
A series of works \cite{alon1995graph,bartal1996,bartal1998,frt04} showed that 
{\em probabilistic} embeddings can embed any $n$-point metric into hierarchically separated trees (HSTs) with expected distortion $O(\log n)$.
While uniform distortion bounds are helpful in understanding the general effectiveness of using metric embeddings to solve a problem, from a practical perspective, we may ask how well a particular given metric $(X,d)$ can embed into trees. Indeed, for some metrics the optimal distortion is $\Omega(\log n)$ but for others it may be much smaller, e.g. $O(1)$, providing better algorithmic results.

{\bf In this paper we investigate instance-optimal distortion for probabilistic embeddings into trees by exploring the role of outliers.} For a finite metric $(X,d)$, a $(k,c)$-outlier embedding is a map defined on ``most of'' $X$---that is on all but $k$ vertices, which we designate as outliers---into another metric space $Y$ such that the distortion over the non-outliers is bounded by $c$. Intuitively, we can view the outliers as noisy or adversarial data that may hurt the quality of the instance-optimal embedding from $X$ into $Y$. The goal of the outlier embedding is to identify and remove these outliers upfront so as to obtain a much better embedding over the remaining points. Algorithmically, given a target distortion $c$, we want to find the smallest $k$ such that the removal of $k$ outliers enables embedding the rest of the metric with distortion at most $c$.
In this paper, we develop probabilistic outlier embeddings into trees, providing a tradeoff between outlier set size and distortion. Our main result is as follows.

\begin{restatable}[HSTs admit outlier embeddings]{theorem}{HSToutlier}\label{thm:outlier}
Let $(X,d)$ be any metric that admits a $(k,c)$ probabilistic outlier embedding into HSTs. Then for any constant $\epsilon>0$, there exists an algorithm that efficiently samples from a probabilistic embedding into HSTs of some $S\subseteq X$ with $|X\setminus S|\leq O(\frac{1}{\epsilon}k\log k)$ and expected distortion at most $(32+\epsilon)c$.
\end{restatable}


\paragraph*{Technical contribution: improved bi-Lipschitz extensions.} A key technical component in our construction is a bi-Lipschitz extension for embeddings of arbitrary finite metrics into trees. Given a finite metric $(X,d)$ and a low distortion embedding $\alpha$ from a subset $S\subset X$ into a host space $Y$, a bi-Lipschitz extension extends the domain of $\alpha$ to the entire set $X$, while only modestly increasing the distortion of the embedding. We show that if the distortion of the original embedding $\alpha$ is $c_{\alpha}$, it is possible to obtain a bi-Lipschitz extension with distortion at most $O(c_{\alpha}\log k)$ where $k=|X\setminus S|$. Formally, we achieve the following bounds, where the expansion factor $L_{exp}$ is the multiplicative increase in pairwise distances beyond the expansion of the original embedding $\alpha$, and the contraction factor $L_{contr}$ is the multiplicative decrease beyond the contraction in $\alpha$.



\begin{restatable}[HSTs have probabilistic bi-Lipschitz extensions]{theorem}{HSTblipthm}\label{thm:bilip-hst}
Let $\cH$ be the class of all HSTs.
For any subset $S\subseteq X$ with $k:=|X\setminus S|$ and a probabilistic embedding $\cD_S:S\probembed{}\cH$, there exists a probabilistic bi-Lipschitz extension of $\cD_S$, $\cD:X\probembed{}{}\cH$, with expansion factor $L_{exp}=O(\log k)$ and contraction factor $L_{contr}=4$.

Further, we can efficiently sample from the distribution $\cD$ if we can efficiently sample from $\cD_S$. 

\end{restatable}

A connection between outlier embeddings and Lipschitz extensions was first established by Chawla and Sheridan \cite{chawla2024} in the context of deterministic embeddings into normed metrics. They showed that deterministic embeddings into $\ell_2$ admit a Lipschitz extension with expansion factor $O(\log k)$ (but potentially unbounded contraction), and this suffices to obtain a bicriteria approximation for outlier embeddings into $\ell_2$. We extend this approach to embeddings into HSTs by combining it with an LP-rounding approximation algorithm of 
Munagala et al. \cite{mungala2023} for instance-optimal probabilistic embedding into HSTs.

\paragraph*{Probabilistic extension and bounded contraction.}
Extending the approach of \cite{chawla2024} to HSTs poses two new technical challenges. First, while Chawla and Sheridan \cite{chawla2024}'s construction can tolerate arbitrary contraction in the Lipschitz extension, in order to employ Munagala et al.'s LP relaxation for HST embeddings, we necessarily need to control contraction as well.
The second challenge is applying the notion of Lipschitz extension to probabilistic embeddings. This is important because probabilistic tree embeddings can obtain exponentially smaller distortion than deterministic tree embeddings. Consider, in particular, a distribution $\cD$ over embeddings from some subset $S$ of a given metric $(X,d)$ into a host space $Y$. Simply drawing an embedding $\alpha$ from $\cD$ and extending it to $X$ may result in distortion $\propto\mathbb{E}_{\alpha\sim \cD}[\max_{u,v\in X}(d_{\alpha}(u,v)/ d(u,v))]$. This can be much larger than the desired distortion $\max_{u,v\in X} (\mathbb{E}_\alpha[d_{\alpha}(u,v)]/ d(u,v))$.

We resolve both challenges by constructing a bi-Lipschitz extension through a sequence of ``merge'' operations, each of which combines two embeddings over domains that intersect in a single point. By carefully aligning and stitching each such pair of embeddings, we ensure strong bounds on both contraction and expansion.

\paragraph*{Improved expansion via ``onion-like'' partitions}
Chawla and Sheridan's Lipschitz extension for $\ell_2$ employs a probabilistic partitioning procedure over the outliers $X\setminus S$ similar to that of Calinescu, Karloff, and Rabani \cite{calinescu2005extension}, and attaches an embedding of each component of the partition to its closest point in $S$ (that we call the {\em anchor point}). This approach combined with the merge operation described above yields a {\em \bl} extension for HSTs with expansion factor $O(\log^2 k)$. We develop a new technique to obtain a quadratic improvement over this factor, leading to \Cref{thm:bilip-hst}. Our key observation is that the largest distortion is suffered by pairs that are separated into different components by the CKR partitioning procedure---one logarithmic factor arises from the probability of separation, and the second arises from the distortion between the points and their respective anchor points. 

To improve upon this distortion, we construct a probabilistic partitioning where each component has an ``onion-shell'' shape---all points are roughly equidistant from the anchor point. We then employ an embedding of each component that preserves distances to the anchor point within a constant factor (while potentially inflating other distances by a logarithmic factor). This allows us to bound the distortion between any pair of points by $O(\log k)$ times the distortion of the embedding over $S$.

Our improved partitioning approach also applies to $\ell_1$, providing a \bl extension with expansion factor $O(\log k)$. \cite{chawla2024} previously obtained a \bl extension for $\ell_1$ with expansion factor $O(\frac 1 {c_S} \log^2 k+\log k)$, where $c_S$ is the distortion of the embedding of $S$ into $\ell_1$; this can be a logarithmic factor worse than the bound we achieve when $c_S$ is $O(1)$.

\begin{theorem}[$\ell_1$ has \bl extensions, informal]\label{thm:bilipl1}
Deterministic $\ell_1$ embeddings have deterministic \bl extensions with expansion factor $O(\log k)$ and contraction factor $O(1)$.
\end{theorem}

\subsection{Related work}
Embedding into HSTs has been studied primarily with the objective of minimizing expected distortion. See e.g. \cite{alon1995graph,bartal1996,bartal1998,charikar1998,frt04,Bartal04,KGR25}. 
In our work, we will use as a subroutine the algorithm of Fakcharoenphol et al. \cite{frt04}, which for any $n$ point metric samples from an $O(\log n)$ distortion probabilistic embedding into HSTs.
{\em Instance-optimal} distortion has been studied for target spaces such as constant dimensional Euclidean space \cite{badoiu2006,matousek2008,deberg2010,edmonds2010,sidiropulos2019}, the line  \cite{badoiu2005,matousek2008,fellows2013,nayyeri2015}, trees \cite{badoiu2007,chepoi2012}, and ultrametrics \cite{alon2008,mungala2023}.
The goal in such embeddings is to minimize the distortion needed for the given {\em specific} $(X,d)$, which in some cases may be smaller than the best bound achievable for all metrics of that size.


The notion of outlier embeddings was initially defined by Sidiropolous et al. \cite{sidiropolous17} when they studied deterministic embeddings into ultrametrics, trees, and constant-dimensional $\ell_2$ space. They consider approximating {\em deterministic, isometric} outlier embeddings (where the distortion $c$ is 1) and bicriteria approximations with additive distortion. The structural properties used for isometric approximations don't apply to non-isometric embeddings, and techniques
for bounding additive distortion tend to be quite different from those for multiplicative distortion, the focus of this work. 
Chubarian and Sidiropolous \cite{Chubarian20} continued the work on outlier embeddings and studied non-isometric outlier embeddings of unweighted graph metrics into the line. Later, Chawla and Sheridan \cite{chawla2024} studied deterministic outlier embeddings into $\ell_2$ space with unbounded dimension. 

Several related notions of distortion were previously studied. 
One notion is {{\em embeddings with slack}, in which the goal is to minimize the number of pairs that are distorted by more than some target factor \cite{abraham2005,chan2006,lammersen2009}. 
An even stronger notion is that of \emph{scaling distortion}, where for every $\eps>0$, at most $\eps$ fraction of the pairs suffer from distortion $\ge f(\eps)$ (for some function $f$), see \cite{KSW04,ABN11,ABN15,BFN19}. 
Another notion is \emph{terminal distortion}, where instead of an outlier set, there is a subset of terminals $K\subseteq X$, and the goal is only to preserve distance pairs involving a terminal $K\times X$ \cite{EFN17,MMMR18,NN19,CN24}.
A  stronger notion is \emph{prioritized distortion} \cite{EFN18,EN22,FGN24} where there is a priority ordering over the vertices, and the distortion guarantee is w.r.t. the priorities. 
Another notion is that of local embeddings
\cite{abraham2009,arora2012,charikar2010} where the goal is to minimize the distortion from each point to its $k$ nearest neighbors.

Outlier embeddings are also closely related to metric Ramsey theory \cite{BFM86,bartal2003,BLMN05,BLMN05b,mendel2007ramsey,NT12,BGS16,ACEFN20,Bar21,FL21,FL22tw}. Given inputs $n$ and $c$, metric Ramsey theory asks: What is the largest $m$ such that {\em all} metrics of size $n$ have a submetric of size $m$ that embeds into some target space with distortion at most $c$? In this paper we are primarily focused on {\em instance-optimal} solutions rather than universal bounds that apply to all metrics, and we seek to approximate the size of the outlier set rather than the size of the non-outlier set. 

Many other types of embedding compositions or extensions have been considered. One particular example 
is the notion of a Lipschitz extension \cite{Johnson1984,LN05,naor2017}, which seeks to leave non-outliers unaltered from the original embedding and ensure the expansion (but not necessarily the contraction) of every pair of nodes is bounded. Given an embedding on $S\subseteq X$, Calinescu, Karloff, and Rabani \cite{calinescu2005extension} develop a partitioning algorithm for $X\setminus S$ that implicitly defines Lipschitz extensions with a contraction factor at most $O(\log |S|)$ for all metric spaces. Chawla and Sheridan \cite{chawla2024} define essentially the same partitioning algorithm, except that in their algorithm, \cite{calinescu2005extension} pick a uniformly random permutation over the set $S$, but \cite{chawla2024} take a random permutation over the set of $X\setminus S$'s closest neighbors in $S$. The resulting Lipschitz extension factor is $O(\log |X\setminus S|)$. We will permute over the same set as \cite{chawla2024}, but we follow the cleaner analysis of \cite{calinescu2005extension}.
Another related notion is that of $0$-extension \cite{calinescu2005extension,FHRT03,EGKRTT14,FKT19} where one seeks a stochastic contraction $f:X\rightarrow K$ from the metric point set to a subset $K$ of terminals while minimizing $\max_{x,y\in X}\mathbb{E}\left[\frac{d_X(f(x),f(y))}{d_X(x,y)}\right]$.


\section{Definitions and results \label{sec:definitions}}

\subsubsection*{Embeddings and distortion}

A metric space is a set of points $X$ with distances $d_X(\cdot,\cdot)$ between them that obey the triangle inequality. That is, for all $x,y,z\in X$, $d_X(x,z)\leq d_X(x,y)+d_X(y,z)$. 
{}{We assume without loss of generality (via scaling) that in the metrics we consider, the minimum distance between distinct pairs of points is at least $1$. For subsets $S,T\subseteq X$, denote $d(S,T):=\min_{s\in S,t\in T}\set{d(s,t)}$
}
If $\alpha:X\rightarrow Y$ is a map from $(X,d_X)$ to $(Y,d_Y)$, we write $\dista(x,y) := d_Y(\alpha(x),\alpha(y))$ to denote the distance between the images of two points $x, y \in X$ under this map. 


$\cD$ is a probabilistic embedding of $(X,d_X)$ into $\cY$ if for each function $\alpha$ in the support of $\cD$, $\alpha:X\rightarrow Y(\alpha)$ for some $(Y(\alpha),d_{Y(\alpha)})\in \cY$. This embedding has expected {\em expansion} $c_{exp}\geq 1$, {\em contraction} $c_{contr}\geq 1$, and expected {\em distortion} $c_{contr} \cdot c_{exp}$ if for all $x,y\in X$ we have: 
    \begin{align*}
    \dista(x,y) & \geq \frac{1}{c_{contr}} \cdot d(x,y) \quad\quad \forall 
    \alpha \in \operatorname{support}(\cD) \\
        \expec_{\alpha\sim \cD}[\dista(x,y)] & \leq c_{exp} \cdot d(x,y).
    \end{align*}
Observe that {\em every} embedding in the support of $\cD$ has contraction at most $c_{contr}$, but pairwise expansion (and thus distortion) is bounded in expectation over the randomness in $\cD$. If the support of $\cD$ contains a single embedding, it is a {\em deterministic} embedding. Further, an embedding is {\em expanding} or {\em non-contracting}
if it has contraction $1$. 

We use the notation $\alpha: X \embedcontr{c_{exp}}{c_{contr}} Y$ to denote a deterministic or probabilistic embedding with expansion $c_{exp}$ and contraction $c_{contr}$ from $(X,d_X)$ into $(Y,d_Y)$.
If $c_{contr}$ is not specified, the embedding is expanding.

\subsubsection*{Bi-Lipschitz extensions} 


\begin{definition}[Bi-Lipschitz extension]\label{def:bilip}
Let $(X,d)$ be a metric, $S\subseteq X$, and $\alpha_S:S\embedcontr{c_{S}}{c_{S}'}Y$ for some parameters $c_{S}',c_S$. Then an embedding $\alpha:X\rightarrow Y$ is a (strong) {\em bi-Lipschitz extension} of $\alpha_S$ with extension factors $L_{exp},L_{contr}$ if \footnote{Note that $L_{exp},L_{contr}$ could be factors that depend on $X$ and $S$.}
    \begin{align}
        & \text{for all } u\in S, &  \alpha(u)&=\alpha_S(u), \label{eq:preserveembed} \\
        & \text{for all } u,v\in X& 
        \dista(u,v) & \geq \frac{1}{c_S' \cdot L_{contr}}d(u,v)\\
        & \text{and, for all } u,v\in X, & \dista(u,v) & \leq L_{exp}\cdot c_S \cdot d(u,v). \label{eq:weaknested-constraint2}
    \end{align}
If $Y$ is a normed vector space, for Constraint \ref{eq:preserveembed} we require only that the restriction of $\alpha$ to the subspace spanned by $S$ matches the embedding 
$\alpha_S$.\footnote{
Notably, to get non-trivial extensions we must allow $\alpha$ to ``extend'' the dimension of $\alpha_S$ in some cases. To see this, consider an isometric (distortion $1$) embedding of one or two points of a length $n$ cycle into the real line. Any \bl extension that also embeds into the real line must have $L_{ext}\cdot L_{contr}=\Omega(n)$, as $\Omega(n)$ distortion is required to embed the cycle into a line. }

Further, $\alpha$ is a {\em weak bi-Lipschitz extension} with extension factors $L_{exp},L_{contr}$ if Constraint \ref{eq:preserveembed} is replaced with the following weaker constraint:
\begin{align*}
    & \text{for all } u,v\in S, &  d_\alpha(u,v)&=d_{\alpha_S}(u,v).
\end{align*}
\end{definition}


In this paper, we discuss  bi-Lipschitz extensions in the probabilistic setting. In particular, we model a probabilistic bi-Lipschitz extension on the weak version of a deterministic bi-Lipschitz extension, requiring that the {\em joint distribution} of distances between pairs of nodes in $s$ is the same under the extended embedding.

\begin{definition}[Probabilistic bi-Lipschitz extension]\label{def:prob-bilip}
Let $(X,d)$ be a metric, $S\subseteq X$, and $\cD_S\embedcontr{c_S}{c_S'}\cY$ for some parameters $c_S,c_S'$. Then a probabilistic embedding $\cD:X\rightarrow \cY$ is a {\em probabilistic bi-Lipschitz extension} of $\alpha_S$ with extension factors $L_{exp},L_{contr}$ if
\begin{align}
&(d_\alpha(x,y))_{x,y\in S\times S} \text{ has the same joint distribution for }\alpha \sim \cD \text{ and }\alpha\sim \cD_S \\
&\text{for all } x,y\in X, \text{ and for all } \alpha\in\operatorname{support}(\cD), \quad \quad 
        \dista(x,y)  \geq \frac{1}{L_{contr}\cdot c_S'}\cdot d(x,y) \\
&\text{and, for all } x,y\in X, \quad \qquad\ \ \ \ \    \ \ \ \ \ \ \ \ \ \ \ \ \ \ \expec_{\alpha\sim \cD}[\dista(x,y)]  \leq L_{exp}\cdot c_S \cdot d(x,y). 
    \end{align}

\end{definition} 


%

\subsubsection*{Outlier embeddings}

If $\cD$ is a probabilistic embedding from $(X,d_X)$ into $\cY$ and $K\subseteq X$, define $\cD_{X\setminus K}$ to be the distribution identical to $\cD$ but  with all functions $\alpha$ in the support restricted to the set $X\setminus K$ (i.e. we can sample from $\cD_{X\setminus K}$ by sampling from $\cD$ and restricting the domain of the function drawn).

\begin{definition}\label{randomoutlier}
    A probabilistic embedding $\cD$ of  $(X,d_X)$ into $\cY$ is a {\em probabilistic $(k,c)$-outlier embedding} if there exists $K\subseteq X$ such that $|K|\leq k$ and $\cD_{X\setminus K}$ is a probabilistic embedding of $(X\setminus K,d_X)$ into $\cY$ with distortion $c$. 
\end{definition}

Note that in Definition \ref{randomoutlier}, the  subset $K$ is chosen {\em deterministically} (not sampled). The distances from $K$ to other nodes (and each other) are ignored. The randomness is only on the embedding for the remaining nodes in $X\setminus K$. 

\subsubsection*{Hierarchically Separated Trees (HSTs), ultrametrics, and $\ell_p$}


\begin{definition}
    A metric $(X,d)$ is a {\em $\beta$-hierarchically separated tree ($\beta$-HST)} if there is a map from $X$ to the leaves of a tree such that the following hold:
    \begin{itemize}
        \item Each node $u$ in the tree is associated with (``labeled'' with) some value $\levelval_u$, and $\levelval_v=\beta\cdot \levelval_u$ whenever $u$ is a child of $v$ and $u$ is not a leaf.\footnote{In this paper, we will use the term $\beta$-HSTs to refer to what other sources sometimes call exact $\beta$-HSTs. These sources typically define non-exact $\beta$-HSTs as only requiring $\levelval_v\geq \beta\cdot \levelval_u$.  }   Further, for all leaves $v$, $\levelval_v = 0$, and for nodes $v$ that are parents of leaves, $\levelval(v)=1$. 
        \item For two leaves  $x,y\in X$ with least common ancestor $z$, $d(x,y)=\levelval_z$. 
    \end{itemize}
\end{definition}

For simplicity, we will primarily focus on $2$-HSTs, and in referring to HSTs, we mean $2$-HSTs unless otherwise noted. We use $\cH$ to denote the set of all $2$-HSTs. Note that focusing on $\beta$-HSTs for a different constant $\beta$ at worst changes results by a constant factor. 

\begin{definition}\label{def:ultrametrics}
An {\em ultrametric} is a metric $(X,d)$ that fulfills a strong triangle inequality guarantee: for all $x,y,z\in X$, $d(x,y)\leq \max\set{d(x,z),d(z,y)}$. 
\end{definition}

HSTs form a subset of the set of {\em ultrametrics}. Further, any ultrametric embeds into an HST with distortion at most $2$ \cite{bartal2003}. We will use $\cU$ to denote the set of all ultrametrics.

\begin{definition}[$\ell_p$ space]
    $\ell_p$ in dimension $\delta$, denoted $\ell_p^\delta$, is $(\reals^\delta,d_{\ell_p})$ where for $x,y\in \reals^\delta$, $d_{\ell_p}$ is defined as
    \begin{align*}
        d_{\ell_p}(x,y):= \left(\sum_{i=1}^\delta |x_i-y_i|^p\right)^{1/p}.
    \end{align*}
We let $\ell_p$ space refer to the set of all spaces $\ell_p^\delta$ for $\delta\in[1,\infty)$.
\end{definition}

\subsubsection*{Main results}


Our primary results are for probabilistic bi-Lipschitz extensions of embeddings into HSTs and deterministic bi-Lipschitz extensions of embeddings into $\ell_1$ space. 

\HSTblipthm*




We can also argue for the existence of deterministic extensions for $\ell_1$ embeddings, though we do not know how to find such deterministic embeddings efficiently.

\begingroup
\renewcommand\thetheorem{\ref{thm:bilipl1}}
\begin{restatable}[$\ell_1$ has \bl extensions]{theorem}{biliplonethm}
For any subset $S\subseteq X$ with an embedding $\alpha_S:S\embedcontr{c_S}{c_S'}\ell_1$, there exists
\begin{enumerate}
    \item a deterministic {\em weak} bi-Lipschitz extension of $\alpha_S$, $\alpha:X\probembed{}\ell_1$ with extension factors $L_{exp}=O(\log k)$ and $L_{contr}=1$, where $k:=|X\setminus S|$ and

    \item a deterministic strong bi-Lipschitz extension with $L_{exp}=O(\log k)$ and $L_{contr}=7$.
\end{enumerate}
\end{restatable}
\addtocounter{theorem}{-1} 
\endgroup





Our \bl extensions into $\ell_1$ are asymptotically optimal. Some metrics on $n$ points (namely expander graph metrics) require $\Omega(\log n)$ distortion for embedding into $\ell_1$ \cite{matousek1997lplowerbounds}. We can always ``extend'' a distortion $1$ embedding on one or two points to an embedding on an expander graph metric, obtaining an embedding of distortion at most $L_{exp}\cdot L_{contr}$. Thus, we must have $L_{exp}\cdot L_{contr}=\Omega(\log k)$.
Likewise, some metrics require $\Omega(\log n)$ distortion to embed  probabilistically into HSTs \cite{bartal1996}, so our HST results are also asymptotically optimal.



Our results for probabilistic \bl extensions into HSTs allow us to obtain a bicriteria approximation for probabilistic outlier embeddings into HSTs, which we discussed in the introduction. 

\HSToutlier*


In Section \ref{sec:hst}, we expand our results to the weighted outlier setting, and we use  a modification of the LP from Munagala et al. \cite{mungala2023} to prove Theorem \ref{thm:outlier}.
Our LP-based techniques also allow us to obtain the following result for the weighted variant of the problem.\footnote{In the weighted outlier set problem we are given weights $w_i$ on all nodes $i$ in the metric space. The cost of a subset $K$ of the metric space is $w(K):=\sum_{i\in K}w_i$, and the goal is to find an outlier embedding with some target distortion such  that the cost of the outlier set is minimized.}

\begin{corollary}\label{thm:weighted}
Given a metric $(X,d)$, node weights $w$, and target distortion $c$, let $w\st$ be the optimal outlier set cost for probabilistically embedding $(X,d)$ into HSTs with $c$ distortion and let $k$ be the size of such an optimal outlier set. Then for any constant $\epsilon>0$, there exists an algorithm that efficiently samples from a probabilistic embedding into HSTs of some $S\subseteq X$ with $w(X\setminus S)\leq O(\frac{1}{\epsilon}w\st\log k)$ and expected distortion at most $(32+\epsilon)c$. 
\end{corollary}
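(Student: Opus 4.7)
The plan is to leverage the LP-based approach behind Theorem \ref{thm:outlier} and replace its unit-cost outlier objective with a weighted analogue. Concretely, in the modified LP from Section \ref{sec:hst} one minimizes $\sum_i w_i \delta_i$ in place of $\sum_i \delta_i$, keeping all other constraints --- the level-wise labeling/distance constraints coming from the HST structure, and the outlier-relaxation semantics of the $\delta_i$ variables --- unchanged.

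For feasibility of the weighted LP, I would reuse the argument from the unweighted case essentially verbatim. Starting from an optimal $(k,c)$-outlier embedding with outlier set $K^*$ of weight $w(K^*) = w^*$, apply Theorem \ref{thm:hst-nest} with $S = X \setminus K^*$ to obtain a probabilistic embedding of $X$ into HSTs whose distortion is at most $4c$ on pairs in $S$ and at most $\zeta(c\log k + \log^2 k)$ on pairs involving $K^*$. Set $\delta_i = 1$ for $i \in K^*$ and $\delta_i = 0$ otherwise. This integral setting of the outlier variables already satisfies the LP constraints --- this is precisely what the unweighted feasibility proof of Theorem \ref{thm:outlier} shows --- and it has weighted objective value $\sum_i w_i \delta_i = w^*$. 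Hence the optimal LP value is at most $w^*$.

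Next I would apply, unchanged, the randomized rounding used to prove Theorem \ref{thm:outlier}. That rounding preserves the $(32+\epsilon)c$ distortion guarantee and, crucially, selects node $i$ to be an outlier with probability $O(\delta_i \log^2 k / \epsilon)$ \emph{pointwise}. By linearity of expectation, the expected weighted outlier cost is
\[
\sum_i w_i \cdot O\!\left(\frac{\delta_i \log^2 k}{\epsilon}\right) \;=\; O\!\left(\frac{\log^2 k}{\epsilon}\right) \sum_i w_i \delta_i \;\leq\; O\!\left(\frac{w^* \log^2 k}{\epsilon}\right),
\]
matching the corollary's bound. A standard Markov-plus-repetition step converts this expected bound into an always-on guarantee if needed.

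The main subtlety to verify is that the unweighted rounding from Section \ref{sec:hst} is genuinely oblivious to node weights, in the sense that node $i$'s probability of being declared an outlier depends on its own $\delta_i$ rather than on an aggregate quantity like $\sum_j \delta_j$. For LP-rounding schemes of the type used by Munagala et al., this pointwise bound is typically exactly what the analysis already yields, in which case the weighted generalization is immediate. I expect this check to be the only nontrivial step; once confirmed, no new algorithmic ingredient is needed beyond swapping in the weighted objective.
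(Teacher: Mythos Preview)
Your approach matches the paper's: change the LP objective to $\sum_i w_i\delta_i$, reuse the feasibility argument via Theorem~\ref{thm:hst-nest} (setting $\delta_i=1$ on the optimal outlier set gives objective value $w^*$), and exploit the pointwise dependence of outlier selection on $\delta_i$. Two corrections are worth noting.

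First, the outlier selection in Algorithm~\ref{alg:round} is not randomized: node $i$ is declared an outlier \emph{deterministically} whenever $\delta_i\ge\delta^*=\Theta(\epsilon/\log^2 k)$. The pointwise property you correctly identify then reads: $i\in K$ implies $w_i\le w_i\delta_i/\delta^*$, so $w(K)\le(\sum_i w_i\delta_i)/\delta^*\le O(w^*\log^2 k/\epsilon)$. No expectation, Markov, or repetition is involved.

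Second, and more substantively, the algorithm cannot be applied ``unchanged.'' Algorithm~\ref{alg:round} iterates over the unknown parameter $k$ (which appears explicitly in constraint~\eqref{eqn:hst2}) and, in the unweighted case, outputs the solution for the smallest $k$ with $v_k\le k$. In the weighted setting this selection rule is meaningless, since $v_k$ is now a weight rather than a count. The paper therefore changes the rule to select $k^*$ minimizing $v_{k'}\log^2 k'$, and then argues $v_{k^*}\log^2 k^*\le v_k\log^2 k\le w^*\log^2 k$, which combined with $w(K_{k^*})\le v_{k^*}\cdot O(\log^2 k^*/\epsilon)$ gives the claimed bound. This modified selection rule is the one genuinely new algorithmic ingredient your proposal omits.
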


\section{Components of a bi-Lipschitz extension}\label{sec:components}

The  bi-Lipschitz extensions we discuss in this paper have three primary components. In particular, given a probabilistic embedding $\cD_S:S\rightarrow \cY$ for $S\subseteq X$, we develop bi-Lipschitz extensions for $\cD_S$ when the relevant spaces have certain properties. Namely, the space $X$ should admit what we refer to as ``\onion partitions'' and the space $\cY$ should admit ``bounded-diameter embeddings'' and ``perfect merge functions.'' We show in  Section \ref{sec:partitions} that \onion partitions always exist, so our bi-Lipschitz extensions depend on the existence of bounded-diameter embeddings and perfect merge functions for the space into which we are embedding. 

At a high level, we will use the \onion partitions to find clusters of the set $K=X\setminus S$ and assign an ``anchor point'' in $S$ to each such cluster. Then we find a bounded-diameter embedding on each cluster with its anchor point, and we ``merge'' each of these embeddings with  the original embedding on $S$ using a perfect merge function.

\subsection{\Onion\ partitions}

We begin by defining what we call \onion partitions. Given a metric $(X,d)$ and a subset $S\subseteq X$ with $K=X\setminus S$, an \onion partition of $K$ is a partition $\cK=\set{K_i}$ of $K$ such that each $K_i$ has an assigned ``anchor point'' from $S$. For any point $x \in K_i$, we refer to its cluster's assigned point in $S$ as the anchor point of $x$. The partition should satisfy three properties: (1) for any $x,y\in K$, the probability that $x$ and $y$ are separated in the partition is proportional to $d(x,y)/d(\set{x,y},S)$; (2) for each $x\in K$, the distance between $x$ and its anchor point is (deterministically) similar to $d(x,S)$; and (3) for any cluster $K_i\in \cK$, the distance between $x\in K_i$ and its assigned anchor point is (deterministically) not too much smaller than the diameter of the cluster $K_i$ itself.

\begin{definition}[\Onion\ partitions]\label{def:tentpartition}
Let $(X,d)$ be a metric and let $S\subseteq X$ with $K:=X\setminus S$. Then an {\em \onion partition} of $K$ is a joint distribution over $(\cK=\set{K_i}_{i\in \mathbb{N}},r:\cK\rightarrow S)$ such that $\cK$ is a partition of $K$. 

For any $(\cK,r)$ in the support of the distribution and $K_i\in \cK$, we call $r(K_i)$ the {\em anchor point} of $K_i$. We extend the domain of $r$ to  points in $K$: for $x\in K$, $r(x)$ denotes the anchor point assigned to the component containing $x$ in the random partition $(\cK,r)$. Likewise, let $\Delta_{K_i}$ denote the diameter of a component $K_i$, and $\Delta_x$ denote the (random) diameter of the component containing $x$ in the random partition $(\cK,r)$. Additionally, for any fixed $(\cK,r)$ we let $K_i^+$ denote the set $K_i\cup \set{r(K_i)}$ for each $K_i\in \cK$.




An onion partition has {\em separation parameter} $\rho>0$, {\em closeness} $b>1$, and {\em \band} $\gamma>1$ if it meets the following three criteria:

\begin{enumerate}
    \item \textbf{Low separation probability.} For any $x,y\in K$, the probability that $x$ and $y$ are in different components of the partition is upper bounded by $\rho\cdot \frac{d(x,y)}{d(\set{x,y},S)}$
    
    \item \textbf{An anchor point is a ``close'' point in $S$.} 
    For any $x\in K$ and any $(\cK,r)$ in the support of the distribution, $d(x,r(x))\leq b\cdot d(x,S)$ with probability $1$.

    \item \textbf{Cluster points are arranged in a thin band around their anchor point.} For any $x\in K$ and any $(\cK,r)$ in the support of the distribution, $d(x,r(x))\geq \Delta_x/\gamma$ with probability $1$.
\end{enumerate}


    
\end{definition}

\begin{figure}
    \centering
    \includegraphics[width=0.5\linewidth]{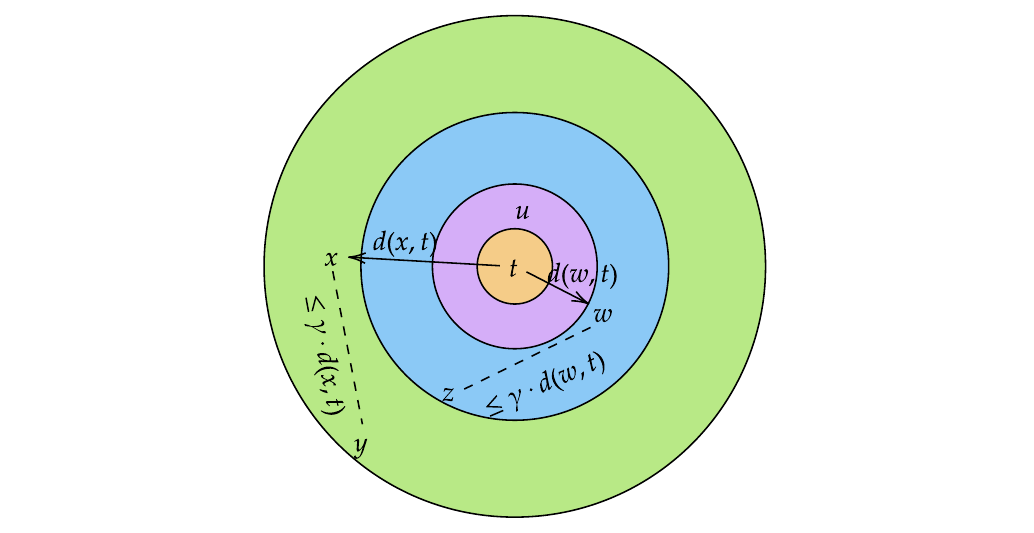}
    \caption{This figure depicts the third property of \onion partitions. In particular, for any point $t\in S$, we think of the clusters with anchor point $t$ as rings around $t$ (like the rings of an onion). For any $x,y\in K$, if $x$ and $y$ are in the same cluster with anchor point $r(x)=r(y)=t$, then the third criterion of \onion partitions states that they must be a distance at most $\gamma \cdot d(x,t)$ apart. }
    \label{fig:onion}
\end{figure}

\Cref{fig:onion} gives a visual depiction of the third property of \onion partitions, and in \Cref{sec:partitions}, we show the following theorem about their existence.

\begin{restatable}[\Onion partitions]{theorem}{Onionthm}\label{thm:onion}
For any metric $(X,d)$ and any $S\subseteq X$, there exists an \onion\ partition on $K=X\setminus S$ with separation parameter $O(\log k)$, closeness $2$, and \band $6$, where $k=|K|$.
\end{restatable}



\subsection{Bounded-diameter embeddings}

The second component of our extension method involves what we call bounded-diameter embeddings. In particular, we consider expanding embeddings that are parameterized not only by their worst expected expansion for any pair of nodes, but also by their expected expansion of the {\em diameter} of the original metric. That is, a bounded-diameter embedding can place a smaller upper bound on the expected expansion for pairs of nodes that are relatively far apart in the original metric than it does for other pairs of nodes. We define this formally in \Cref{def:bound-diam}.

\begin{definition}[Bounded diameter embedding]\label{def:bound-diam}
A probabilistic embedding $\cD:X\rightarrow \cY$ is an expanding {\em bounded-diameter embedding} with expansion $c$ and {\em diameter expansion} $c_\Delta$ if the following conditions hold:
\begin{align*}
    \dista(x,y) & \geq d(x,y) \quad\quad &\forall x,y\in X, \forall 
    \alpha \in \operatorname{support}(\cD) \\
        \expec_{\alpha\sim \cD}[\dista(x,y)] & \leq c \cdot d(x,y)  \quad \quad &\forall x,y\in X \\
    \expec_{\alpha\sim \cD}[d_\alpha(x,y)]
    &\leq c_\Delta \cdot \max_{x',y'\in X}\set{d(x',y')} \quad \quad &\forall x,y\in X
\end{align*}

Further, we say {\em $\cY$ admits bounded-diameter embeddings with expansion $c_k$ and diameter expansion $c_\Delta$} if for all metrics $(X,d)$ of size $k$, there is a bounded-diameter embedding of $X$ into $\cY$ with the given parameters.
\end{definition}

Notably, if all metrics of size $k$  can embed into $\cY$ with expected expansion at most $c_k$, then $\cY$ admits bounded-diameter embeddings with expansion and diameter expansion equal to $c_k$. However, in the settings we consider here (namely HSTs and $\ell_1$ space), we will consider bounded-diameter embeddings where $c_\Delta$ is tighter than $c_k$. In this paper, we will use FRT embeddings \cite{frt04} as our bounded diameter embeddings.

\begin{theorem}[\cite{frt04}]\label{thm:frt}
For any metric $(X,d)$ with $|X|=n$, there exists a probabilistic expanding bounded diameter embedding (an FRT embedding) of $X$ into HSTs with expansion $O(\log k)$ and diameter expansion $2$. 
\end{theorem}

In fact, for HST embeddings, we can assume without loss of generality an even stronger property than that required by \Cref{def:bound-diam}: For any embedding $\alpha$ in the support of an HST embedding on $X$, the diameter of the image of $\alpha$ is at most twice the diameter of $X$.  
To see this, let $\Delta$ be the diameter of $X$ and consider an HST embedding with diameter larger than $2^{\lceil \log \Delta \rceil}$. The nodes of the HST at level $2^{\lceil \log \Delta \rceil}$ or higher can all be merged into a single node without incurring any new contraction between any pair of points, as their new distance under this contraction is the minimum of their old distance and $2^{\lceil \log \Delta \rceil}$.

Further, trees embed isometrically into $\ell_1$, so we immediately get the following corollary. 

\begin{corollary}\label{cor:l1diamembed}
For any metric $(X,d)$ with $|X|=n$, there exists a probabilistic expanding bounded diameter embedding (an FRT embedding) of $X$ into $\ell_1$ with expansion $O(\log k)$ and diameter expansion $2$. 
\end{corollary}

\subsection{Perfect merge functions}

In the final step of our extension method, we ``merge'' the embedding $\alpha_S$ on $S\subseteq X$ with bounded-diameter embeddings on the clusters of our \onion partition. 
In particular, a {\em merge function}  is a function whose input is two embeddings whose domains overlap at a single point and whose output is a  new embedding on the union of their domains. For our purposes, we want to ensure some of the necessary distance properties are maintained. In particular, for any pair of nodes that appeared in the domain of the same input embedding, the distance under the new embedding should be the same as under the original one. Additionally, we want to ensure that no pair of nodes is allowed to be ``too close'' in the new embedding. This extra requirement will allow us to bound contraction in our final bi-Lipschitz extension.



\begin{definition}[Perfect merge]\label{def:perfectmerge}
A procedure $\mergeplain$  is a {\em \goodmerge\ merge function} for embedding $X$ into $\cY$ with {\em contraction factor} $\eta$ if for any $Z_1,Z_2\subseteq X$ with $Z_1\cap Z_2 = \{v\}$ and  any deterministic embeddings $\alpha_1:Z_1\rightarrow Y_1$, $\alpha_2:Z_2\rightarrow Y_2$ for $Y_1,Y_2\in \cY$, it produces an embedding $\alpha:Z_1\cup Z_2\rightarrow Y$ for some $Y\in \cY$ such that:
\begin{enumerate}
    \item For all $x,y\in Z_1$, $\dista(x, y)=\dista[\alpha_1](x,y)$.
    \item For all $x,y\in Z_2$, $\dista(x, y)=\dista[\alpha_2](x,y)$.
    \item For $x\in Z_1$ and $y\in Z_2$, we have $d_\alpha(x,y)\geq \frac{d_{\alpha_1}(x,v)+d_{\alpha_2}(v,y)}{\eta}$.

\end{enumerate}
\end{definition}

\noindent We say $\cY$ admits perfect merge functions if there is a perfect merge function for any metric $X$.

In Section \ref{sec:merge}, we prove the following existence results about perfect merge functions for HSTs and $\ell_1$.

\begin{lemma}\label{lem:merge-hst}
HSTs admit a perfect merge function with contraction factor $2$. 
\end{lemma}

\begin{lemma}\label{lem:merge-l1}
$\ell_1$ admits a perfect merge function with contraction factor $1$.
\end{lemma}

\subsection*{Putting the components together}

With our three components defined, we now put them together to construct \bl extensions for amenable target spaces $\cY$. 
In particular, given a metric $(X,d)$ with subset $S\subseteq X$ and $\cD_S:S\rightarrow \cY$, our extension proceeds in the following manner:

\begin{enumerate}
    \item Sample an embedding on $\alpha_S$ from $\cD_S$.
    \item Sample an onion partition $(\cK,r)$ of $K=X\setminus S$.
    \item For each $K_i\in \cK$, sample an embedding $\alpha_i$ from a bounded diameter embedding of $K_i^+=K_i\cup \set{r(K_i)}$.
    \item Use a perfect merge function to sequentially merge each $\alpha_i$ with $\alpha_S$.
\end{enumerate}

\Cref{alg:extend} formalizes this sampling process, and \Cref{thm:merge-components} states that the embedding sampled by \Cref{alg:extend} samples from a \bl extension of the input embedding $\cD_S$. We highlight in red the places where \Cref{alg:extend} uses each of the three components we have discussed so far.


\begin{algorithm}
\caption{{\bf Bi-Lipschitz extension}: Algorithm for finding a probabilistic bi-Lipschitz extension when the requirements of \Cref{thm:merge-components} are met.}
\label{alg:extend} 
\textbf{Input:} Metric space $(X,d)$, subset $S\subseteq X$, embedding $\cD_S:S\rightarrow\cY $; the spaces $X$ and $\cY$ should meet the conditions of \Cref{thm:merge-components}, and \pmerge\ will be the designated perfect merge function



\textbf{Output:} A randomly sampled embedding $\alpha:X\rightarrow \cY$.
\begin{algorithmic}[1]
\State Sample a partition $(\cK=\set{K_{i}}_{i\in\mathbb{N}},r)$ from an \change{\onion partition} of $K=X\setminus S$
\State Let $K_i^+$ be $K_i\cup \set{r(K_i)}$ for all $K_i\in \cK$
\State Sample $\alpha\sim \cD_S$
\For{each non-empty $K_{i}$}
\State Let $\cD_{i}$ be an expanding \change{bounded-diameter embedding} of $K_{i}^+$ into $\cY$
\State Sample $\alpha_{i}\sim \cD_{i}$
\State $\alpha\leftarrow \change{\pmerge(\alpha,\alpha_{i})}$
\EndFor
\State Return $\alpha$
\end{algorithmic}
\end{algorithm}


\begin{theorem}\label{thm:merge-components}
Let $(X,d)$ be a metric with $S\subseteq X$ and let $\cD_S:S\rightarrow \cY$ be a probabilistic embedding with contraction at most $c_S'$ and expansion at most $c_S$. 

\noindent Then, if:
\begin{enumerate}
    \item $X$ admits an \onion partition with separation parameter $\rho$, closeness $2$,\footnote{This could be  defined for more general closeness parameters, but it greatly complicates the algebra, so we set $b=2$ in our analysis.} and \band $\gamma$,
    \item $\cY$ admits expanding bounded-diameter embeddings on sets of size $k$ with expansion $c_k$ and diameter expansion $c_\Delta$, and  
    \item $\cY$ admits a perfect merge function with contraction factor $\eta$,
\end{enumerate}
\noindent $\cD_S$ has a bi-Lipschitz extension to all of $X$ with extension factors $L_{exp}=O(\gamma \cdot \rho\cdot (c_\Delta/c_S+1)  +c_k/c_S)$ and $L_{contr}=\eta^2$.
Further, \Cref{alg:extend} samples from such an extension.
\end{theorem}


Before proving this theorem, we observe that due to the first two properties of a perfect merge function, it is sufficient to bound the first extension \Cref{alg:extend} produces that includes both $x$ and $y$ in the domain.

\begin{observation}\label{fact:induct}
Let $x,y\in X$. Consider the first iteration of \Cref{alg:extend} in which both $x$ and $y$ are in the domain of $\alpha$, and let $\hat{d}$ be $\dista(x,y)$ at that time step. Then at every subsequent iteration of the algorithm, $\dista(x,y)=\hat{d}$.
\end{observation}

We are now ready to bound the contraction and expansion of the embedding procedure, which we do in \Cref{lem:contraction-general,lem:expansion-general}. We defer the full proofs of these lemmas to
\ifdagstuhl
the full version of this paper
\else 
\Cref{sec:exp_contr_proofs}
\fi
and just give a brief overview of the ideas here.

\begin{lemma}[Contraction bound]\label{lem:contraction-general}
Let $X,S,\cD_S$ be as in \Cref{thm:merge-components} and let $\alpha:X\rightarrow \cY$ be the embedding returned by \Cref{alg:extend}. Then for all $x,y\in X$, $d_\alpha(x,y)\geq d(x,y)/\eta^2$. 
\end{lemma}


To intuitively understand the contraction bound, note that when some node $v\in K_i$ is added to the domain of $\alpha$, there is no contraction between $v$ and other nodes in $K_i^+$, as $\cD_i$ is non-contracting and we use a perfect merge function to combine this with $\alpha$. Further, for any node $u$ already in the domain of $\alpha$ when $v$ was added, property 3 of perfect merge functions ensures $d_\alpha(u,v)\geq ({d_\alpha(u,r(v))+d_\alpha(v,r(v))})/\eta$. Applying the triangle inequality and appropriate contraction bounds on $\alpha$ prior to this point gives us the result.

We are now ready to bound the expected expansion for each pair of nodes. This argument follows a similar line of reasoning to that for bi-Lipschitz extensions for $\ell_1$ in \cite{chawla2024}, but because we are using \onion partitions rather than CKR partitions, we are able to better bound the distortion in the ``bad case'' of the embedding.  

More specifically, the bad  case for the embedding is when a pair of nodes $x,y\in K$ is separated by the partition and they are much closer to each other than they are to $S$ (so routing their path through $S$ potentially causes a large distortion). In this situation, we bound the distance between $x$ and $y$ in terms of $d(x,S)$ and $d(y,S)$, and we use the fact that the probability of separation is bounded by $\rho\cdot \frac{d(x,y)}{d(\set{x,y},S)}$ to obtain our final bound. Crucially, the bound we obtain for the distance between $x\in K_{i}$ and $y\in K_{j}$ depends on walking from $x$ to $r(x)$ (the anchor point of $x$) through the embedding on $K_{i}^+$, then from $r(x)$ to $r(y)$ (the anchor point of $y$) through the embedding on $S$, and then from $r(y)$ to $y$ through the embedding on $K_{j}^+$ (see Figure \ref{fig:partition}). 
In the simplest sense, we could apply a bound of $c_S$ for the expansion of the distance between $r(x)$ and $r(y)$ and a bound of $c_k$ (the distortion bound for embedding subsets of size $k$) for the distances between $x$ and $s$ and between $y$ and $t$. Because we must pay $\rho$ as part of the separation probability in the partition, our overall expected distortion goes up to $O(\rho \cdot (c_S+c_k))$.\footnote{In fact, this bound applies even if we do not have a bounded diameter embedding (i.e. we just have an embedding with expected expansion $c_k$ on sets of size $k$) and even if the partitions we use do not meet the third criterion of \onion partitions (requiring  that the distance between a cluster node and $S$ be similar to the diameter of the cluster). }

However, because of the \onion partitions we use here, $d(x,s)$ is forced to be within a factor of $\gamma$ of the diameter $\Delta_x$ of the cluster $x$ is in. Further, because we use diameter-bounded embeddings on the set $K_{i}^+$, we know that the maximum distance between any pair of nodes in that embedding is at most $c_\Delta\cdot \Delta_{x}$. Thus, the distance between $x$ and $r(x)$ in this embedding is within a constant factor of their true distance, allowing us to reduce our expansion factor in this bad case to $O(\rho \cdot \gamma \cdot (c_S+c_\Delta))$. We potentially still incur a cost of $c_k$ if $x$ and $y$ are in the same component, so our overall cost is $O((c_S+c_\Delta)\cdot \rho\cdot \gamma+c_k)$, which gives us a potential savings if $c_S$ and $c_\Delta$ are much smaller than $c_k$ and $\gamma$ is a constant. Our analysis follows a similar reasoning to that of Calinescu, Karloff, and Rabani \cite{calinescu2005extension} for the $0$-extension problem, though we deal with an additional merge step and resulting effect on the embedding. 
\ifdagstuhl
See the full version of the paper for a full proof.
\else 
See Appendix \ref{sec:exp_contr_proofs} for a full proof.
\fi

\begin{figure}
    \centering
    \includegraphics[width=0.5\textwidth]{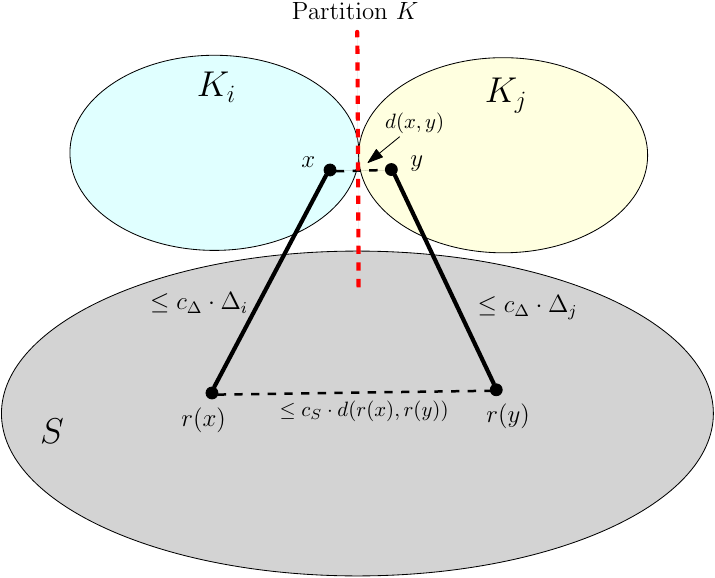}
    \caption{The "bad case" routing scenario for expansion bounds. Nodes $x \in K_i$ and $y \in K_j$ are physically close but separated by the onion terminal partition $\mathcal{K}$. The expected distance between $x$ and $y$ in the merged embedding is bounded by routing the path through their respective anchor points $r(x)$ and $r(y)$ within the base embedding $S$.}
    \label{fig:partition}
\end{figure}


\begin{lemma}[Expansion bound]\label{lem:expansion-general}
Let $X,S,\cD_S$ be as in \Cref{thm:merge-components}, and let $\cD$ be the distribution over embeddings from which \Cref{alg:extend} draws when given inputs $X,S,\cD_S$. Then for all $x,y\in X$ 
\begin{align*}
    \expec_{\alpha\sim \cD}[d_\alpha(x,y)] &\leq O(c_k+\gamma \cdot \rho \cdot (c_S+c_\Delta))\cdot d(x,y),
\end{align*}

where $c_k$ is the bound on the expansion of the bounded-diameter embeddings for each part of the partition used by the algorithm, $c_S$ is the expansion of $\cD_S$, $c_\Delta$ is the diameter expansion of the bounded-diameter embeddings used by the algorithm, and $k=|X\setminus S|$. 
\end{lemma}



Noting that \Cref{alg:extend} always outputs an embedding in which the distances on pairs of points in $S$ are the same as under the sampled embedding $\alpha_S$,
\Cref{thm:bilip-hst} (finding \bl extensions for HSTs) is immediate from \Cref{thm:merge-components,thm:frt,thm:onion,lem:merge-hst}.  
\Cref{thm:bilipl1} (finding \bl extensions for $\ell_1$) takes a little more work, and we prove it in \Cref{sec:merge}.

\section{\Onion\ partitions}
\label{sec:partitions}

In this section we prove \Cref{thm:onion}. 

\Onionthm*

We begin with a discussion of the partition algorithm given by Calinescu, Karloff, and Rabani in their work on the $0$-extension problem \cite{calinescu2005extension}, which we refer to as CKR partitions.
Our \onion partitions will consist of CKR partitions intersected with a randomized bucketing scheme for $K=X\setminus S$. The CKR partitions ensure that nodes are separated with a probability proportional to their distance, and the randomized bucketing allows us to control the diameter of each cluster relative to the designated anchor point without increasing the probability of separation between $x$ and $y$ by too much.

Given a metric $(X,d)$ and a subset $S\subseteq X$, \cite{calinescu2005extension} finds a partition $\{K_1,K_2,\ldots,K_{|S|}\}$ of $K:=X\setminus S$ such that the ``anchor point'' of $K_s$ is $s\in S$, the probability of $x,y\in K$ being separated is $O(\log k\cdot \frac{d(x,y)}{d(\set{x,y},S)})$, and $x\in K_s$ for $s\in S$ implies $d(x,s)\leq 2d(x,S)$. 
Algorithm \ref{alg:ckr} presents a variation of the CKR algorithm in which we use a subset  
$T\subseteq S$ as the set of anchor points, rather than the entire set $S$.


\begin{algorithm}
\caption{{\bf CKR partitioning}}
\label{alg:ckr}
\textbf{Input:} Metric space $(X,d)$, subset $S\subseteq X$, a set of terminals $T\subseteq S$

\textbf{Output:} A partition $\set{K_t}_{t\in T}$ of $X\setminus S$ 
\begin{algorithmic}[1]
\State Sample a permutation $\pi$ uniformly at random over $T$
\State Sample $\mu\in [1,2)$ uniformly at random. 
\For{Each node $x\in X\setminus S$}
\State Let $t\in T$ be the first terminal under the ordering $\pi$ such that $d(x,t)\leq \mu\cdot d(x,T)$
\State Add $x$ to $K_t$
\EndFor
\State Return $\set{K_t}_{t\in T}$, where the anchor point of $K_t$ is set to $t$
\end{algorithmic}
\end{algorithm}

The CKR partitioning algorithm works by selecting a random permutation on some set of terminals and then ``assigning'' points in $X\setminus S$ to terminals in $T\subseteq S$ such that (1) a node in $X\setminus S$ is always assigned to a terminal that is fairly close to it, and (2) the probability that a pair of nodes $x,y\in X$ are placed in separate part of the partition is proportional to $d(x,y)/d(\set{x,y},T)$. We state these two properties as Observation \ref{obs:ckr-properties} and Lemma \ref{lem:ckr_separation}. Note that if we let the anchor point of a cluster $K_t$ for $t\in T$ be $t$ itself, then  these two facts imply that CKR partitions meet criteria (1) and (2) of \onion partitions, so we're left with fulfilling criterion (3).

\begin{observation}[Deterministic Proximity of CKR Rounding, implicit in \cite{calinescu2005extension}]\label{obs:ckr-properties}
For any $x\in X\setminus S$, if Algorithm \ref{alg:ckr} assigns $x$ to $K_t$ for some $t\in T$, then 
\begin{align*}
    d(x,t) &\leq \mu \cdot d(x,T) \leq 2\cdot d(x,T),
\end{align*}
where the last inequality holds because $\mu \in [1,2)$.
\end{observation}

\begin{lemma}[CKR Separation, Lemma 2.2 in \cite{calinescu2005extension}]\label{lem:ckr_separation}
Fix $x,y \in X$ and let $\delta := d(x,y)$. If $0 < \delta \le \frac{1}{4} \min\{d(x,T), d(y,T)\}$, then:
		\[
		\Pr[x,y \text{ are separated}] \le 4\mathcal{H}_{|T|} \left( \frac{\delta}{d(x,T)} + \frac{\delta}{d(y,T)} \right),
		\]
where $\mathcal{H}_m = 1 + \frac{1}{2} + \dots + \frac{1}{m}$ denotes the $m$-th harmonic number.
\end{lemma}

For our partitions, we will pick $T$ to be the set of ``nearest neighbors'' of nodes in $K$. That is, $T:=\set{t\in \arg\min_{s\in S}\set{d(s,x)}|x\in K}$. We assume that for each $x\in K$, we add only one of the nearest neighbors of $x$ to $T$ so that $|T|\leq k$. Moreover, by the choice of $T$, we also have $d(x,T)=d(x,S)$ for all $x\in K$.


\subsection*{Randomized bucketing}

Now we show how to refine the CKR partition technique so that the diameter of each partition is bounded by a constant factor times the distance from that partition to the terminal set $T$. In particular, we will use a technique called {\em randomized scale partitioning}. For any point $x \in K$, let $A_x = d(x,T)$ denote its distance to the nearest terminal.   We choose a shift parameter $u \in [0,1)$ uniformly at random and define $\beta=2^u$. We then 
(randomly) bucket $K$ into disjoint scales $B_i$ by distance from $T$:
\[
B_i := \{ x \in K \mid \beta \cdot 2^i \le A_x < \beta \cdot 2^{i+1} \}.
\]

Given a metric $(X,d)$ with diameter $\Delta$, subset $S\subseteq X$, and terminal set $T\subseteq S$, we will randomly partition $K=X\setminus S$ into sets $\set{K_t}_{t\in T}$ and buckets $\set{B_i}_{i\in [\lceil \log \Delta\rceil]}$,\footnote{Here, $[n]$ denotes $\set{0,1,2,\ldots,n}$.} where the $K_t$ are chosen by finding a CKR partition on $K$ with terminal set $T$ and the $B_i$ are chosen by randomized scale partition. Then we define a new partition
$\set{K_{t,i}}_{t\in T,i\in [\lceil \log \Delta\rceil]}$, where $K_{t,i}:=(K_t\cap B_i)$ and let the anchor point of this set be $t$ --- that is, $r(K_{t,i})=t$.

The primary properties of this partitioning that we will use are (1) for any node $x\in K_{t,i}$, the distance from $x$ to another node in $K_{t,i}$ is not too much larger than the distance from $x$ to $T$, and (2) the probability that a pair of nodes $x$ and $y$ are separated in this partition is not too much larger than the probability that $x$ and $y$ are separated under the CKR embedding. We express these as Observation \ref{lem:partition-diam} and Lemma \ref{lem:scale-separation}, respectively.

\begin{observation}[Partition diameter bound]\label{lem:partition-diam}
For any part $K_{t,i}$ of our scale partition and any node $x\in K_{t,i}$, if $\Delta_{t,i}$ is the diameter of $K_{t,i}$, then $d(x,t)\geq \Delta_{t,i}/6$


\end{observation}

\begin{proof}
For any pair $x,y\in K_{t,i}$, we have $d(x,y)\leq d(x,t)+d(t,y)\leq 2d(x,T)+2d(y,T)\leq 6d(x,T)$. The first inequality is by the deterministic properties of CKR partitions, and the second is because $x$ and $y$ being in the same bucket implies $d(y,T)$ is no more than twice $d(x,T)$. Thus, $\Delta_{t,i}\leq 6\cdot d(x,T)\leq 6\cdot d(x,t)$.
\end{proof}

\begin{lemma}[Scale Partition Separation]\label{lem:scale-separation}

For any pair of points $x,y\in K=X\setminus S$, let $\delta:=d(x,y)$, and assume without loss of generality that $A_x \le A_y$. The probability that $x$ and $y$ assigned to distinct buckets $B_i$ and $B_j$ is bounded by $ O\left(\frac{\delta}{A_x}\right)$.

Thus, by the union bound and Lemma \ref{lem:ckr_separation}, the probability that $x$ and $y$ are in different parts of the partition $\set{K_{t, i}}$ is at most $O(\log |T|\cdot \frac{\delta}{A_x})$.

\end{lemma}
	
\begin{proof}
Let $x$ be in bucket $i$ and $y$ be in bucket $j$.
The scale boundaries are determined by $\beta 2^i = 2^{i+u}$, which is equivalent to placing a grid of integers shifted by a uniformly random $u \in [0,1)$ on a logarithmic scale. The probability that an integer boundary falls strictly between $\log_2 A_x$ and $\log_2 A_y$ is bounded by the length of the interval between their logarithmic distances. By the metric triangle inequality, $A_y \le A_x + \delta$. Therefore:
\begin{align*}
    \Pr[i \neq j] &\le \log_2(A_y) - \log_2(A_x) \\
    &= \log_2\left(\frac{A_y}{A_x}\right) \\
    &\le \log_2\left(\frac{A_x + \delta}{A_x}\right) = \log_2\left(1 + \frac{\delta}{A_x}\right) \\
    &\le \frac{\delta}{A_x \ln 2} \;=\; O\left(\frac{\delta}{A_x}\right)
\end{align*}
where the final inequality follows from the standard logarithmic bound $\ln(1+z) \le z$ for all $z \ge 0$.
\end{proof}


Thus, the partitions we have constructed inherit property (2) of \onion partitions with closeness $2$ from the CKR bounds on anchor point distances and our choice of $T$ (as the nearest neighbors of nodes in $K$), and they inherit property (3) of \onion partitions with \band $6$ from bounds on the diameter of randomized bucketing (and the fact that taking a subset of a cluster never increases its diameter). They also achieve property (1) with parameter $O(\log k)$ by \Cref{lem:ckr_separation,lem:scale-separation}, our choice of $T$, and a union bound. Thus, we get \Cref{thm:onion}.


\section{Merge functions for HSTs and \texorpdfstring{$\ell_1$}{l1}}\label{sec:merge}

In this section, we give explicit perfect merge functions for $\ell_1$ and HSTs, and we prove \Cref{thm:bilipl1}.

\subsection{Merge for \texorpdfstring{$\ell_1$}{l1}}

\begin{algorithm}
\caption{$\mergel(\alpha_1,\alpha_2)$ Algorithm for merging two $\ell_1$ embeddings}
\label{alg:mergel1}
\textbf{Input:} Embeddings $\alpha_1:Z_1\rightarrow \ell_1$, $\alpha_2:Z_2\rightarrow \ell_1$, where $|Z_1\cap Z_2|=1$. \\
\textbf{Output:} { An embedding of $Z_1\cup Z_2$ into $\ell_1$} 
\begin{algorithmic}[1]
\State Let $u=Z_1\cap Z_2$
\State For all $x\in Z_1$, let $\alpha(x)\leftarrow \alpha_1(x)\circ \alpha_2(u)$, where $\circ$ is concatenation
\State For all $x\in Z_2$, let $\alpha(x)\leftarrow \alpha_1(u)\circ\alpha_2(x)$
\State Return $\alpha$
\end{algorithmic}
\end{algorithm}

We begin by presenting a perfect merge function for $\ell_1$ embeddings. This is captured in \Cref{alg:mergel1}, which essentially just extends each input embedding to the entire set $Z_1\cup Z_2$ by mapping everything not already in the domain to the same point as $u\in Z_1\cap Z_2$ and concatenating the resulting embeddings together. This is easily seen to meet the requirements of a perfect merge function with contraction bound $1$ and thus we get \Cref{lem:merge-l1} which states that $\ell_1$ has perfect merge functions.
We now move on to proving \Cref{thm:bilipl1}.

\biliplonethm*

\begin{proof}
Part (1) of the proof is almost immediate from \Cref{thm:merge-components,thm:frt,thm:onion}, the fact that $\ell_1$ has a perfect merge function with contraction factor $1$, and the fact that \Cref{alg:extend} outputs an embedding in which pairs of nodes in $S$ have the same distance as under $\alpha_S$. The problem is that \Cref{thm:merge-components} implies the existence of a {\em probabilistic} extension of the original embedding. However, for any finite support probabilistic embedding $\cD$ of $X$ into $\ell_1$, there exists a deterministic embedding $\alpha:X\rightarrow \ell_1$ with $d_\alpha(x,y)=\expec_{\alpha'\sim \cD}[d_{\alpha'}(x,y)]$. (Note that the dimension of $\alpha$ may be much larger than those of the embeddings in the support of $\cD$, and it is not clear how to efficiently find $\alpha$ when $\cD$ has a large support.) See e.g. \cite{chawla2024} for more.
Proving the second part of this theorem requires a bit more of a sophisticated de-randomization method, and we defer it to
\ifdagstuhl
the full version of the paper.
\else 
Appendix \ref{sec:bilipl1-lastproof}.
\fi
\end{proof}

\subsection{Merge for HSTs}

We now present a \goodmerge\ merge function for HSTs to prove \Cref{lem:merge-hst}, which states that HSTs have perfect merge functions with contraction factor $2$.

Algorithm \ref{alg:merge} takes in two $\beta$-HSTs for a fixed ratio $\beta$ and returns another $\beta$-HST. 
This algorithm works by ``combining'' the ancestors of  the common point $u$ at each level of the two HSTs. That is, the new HST consists of $u$ and all its ancestors, and each ancestor of $u$ has as its child subtrees the child subtrees of the ancestor of $u$ at its same level in each HST. \Cref{fig:merge} gives a visual depiction of this algorithm.
We defer the correctness proof for \Cref{lem:merge-hst} to
\ifdagstuhl
the full version of the paper
\else 
Appendix \ref{sec:hstmergeproof} 
\fi 
and briefly outline the idea here. 

For any $x,y\in Z_1$ or $x,y\in Z_2$, the least common ancestor of $x$ and $y$ must be at the same level in the new tree as it was in the original embedding of $Z_1$ or $Z_2$, so their  distance is unchanged from the original embedding.
To bound the distance between $x\in Z_1,y\in Z_2$, consider the least common ancestor of $x$ and $y$ in the HST produced at the algorithm. The first ancestor of $y$ that has descendants from $Z_1$ is the least common ancestor of $y$ and $u$. Thus, the least common ancestor of $x$ and $y$ cannot be lower than the least common ancestor of $y$ and $u$, so the distance between $x$ and $y$ is at least that between $y$ and $u$ under the original embedding on $Z_2$. Likewise, the distance between $x$ and $y$ is at least the distance between $x$ and $u$ under the original embedding on $Z_1$, so we get contraction at most $2$.



\begin{figure}
    \centering
    \includegraphics[width=0.8\linewidth]{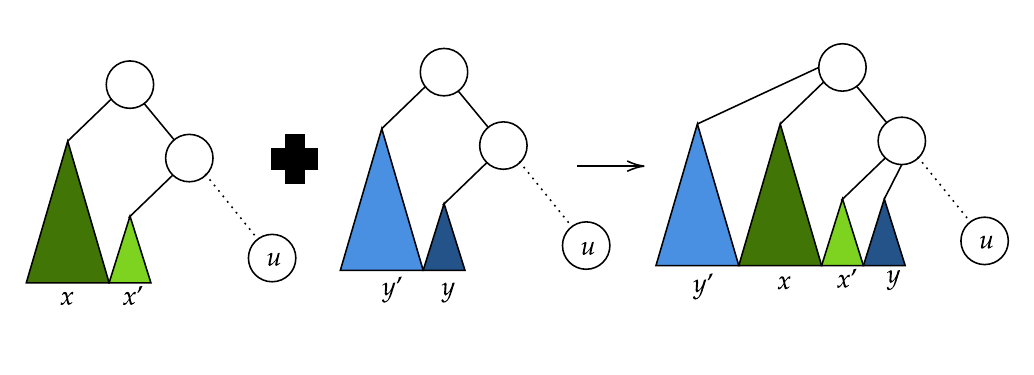}
    \caption{This figure depicts what happens in the algorithm $\merge$. In particular, $u$ is the node common to both HSTs on the left side of the figure. The triangles represent subtrees, and the letters below the triangles indicate the presence of the corresponding node as a leaf of that subtree. Note that after merging trees, the least common ancestor of $x$ and $y$ is that of $x$ and $u$ from the first tree, as this was higher than that of $y$ and $u$.
    This is the foundational idea behind why  $\dista[\alpha](x,y)\geq \max\set{\dista[\alpha_1](u,x),\dista[\alpha_2](u,y)}$ holds when $\alpha$ is the output embedding for $\merge$ on inputs $\alpha_1$ and $\alpha_2$.  }
    \label{fig:merge}
\end{figure}

\begin{algorithm}
\caption{$\merge(\alpha_1,\alpha_2)$ Algorithm for merging two $\beta$-HSTs}
\label{alg:merge}
\textbf{Input:} Embeddings $\alpha_1:Z_1\rightarrow T_1$, $\alpha_2:Z_2\rightarrow T_2$ for $\beta$-HSTs $T_1,T_2$, where $|Z_1\cap Z_2|=1$. {We assume without loss of generality that the height of $T_1$ is {at least as large as} the height of $T_2$.}\\
\textbf{Output:} { An embedding of $Z_1\cup Z_2$ into a single $\beta$-HST} 
\begin{algorithmic}[1]
\State Let $u=Z_1\cap Z_2$
\State Create a new HST $T$ that is initially a copy of $T_1$
\State $\alpha(x)\from \alpha_1(x)$ for all $x\in Z_1$
\For{$i=0$ to the height of {$T_1$}} 
\For{the subtree of each child of $\alpha_2(u)$'s ancestor  at height $i$ in $T_2$} 
\If{ the subtree does not contain $\alpha_2(u)$} \Comment{{\scriptsize \change{i.e. if the subtree is not yet merged into $T_1$}}}
\State Copy the subtree and make it a child of {$\alpha_1(u)$}'s ancestor at level $i$ in $T$
\For{any $x\in Z_2\setminus \set{u}$ such that $\alpha_2(x)$ is in the subtree at this step }
\State Set $\alpha(x)$ to be the copy of $\alpha_2(x)$ created in $T$ at this step
\EndFor
\EndIf
\EndFor
\EndFor
\State Return $\alpha$
\end{algorithmic}
\end{algorithm}

\section{Outlier embeddings for HSTs\label{sec:hst}}

Now, given the existence of \bl extensions for HSTs, we turn to approximating outlier embeddings into HSTs.  In \cite{chawla2024}, this type of approximation is obtained for $\ell_2$ by introducing new ``outlier variables'' into a semidefinite program for a minimum distortion $\ell_2$ embedding and rounding. We will employ a similar technique here using the linear program  and associated rounding algorithm of \cite{mungala2023} to approximate the optimal distortion for probabilistic HST embeddings.

Let $\Delta$ be the diameter of $(X,d)$ and let $M=\set{2^0,2,2^2,\ldots,2^{\lceil \log\Delta \rceil}}$. Let $\zeta$ be the constant factor from Theorem \ref{thm:bilip-hst} and let $B_i(r):=\set{j\in X: d(i,j)\leq r}$. We define (\ref{eqn:hst1}), which is parameterized by inputs $((X,d),k,c)$. The changes from \cite{mungala2023} are highlighted in red.

Table \ref{tab:variables} gives an intuitive description of the variables in the LP.
Note that in our LP $c$ is a parameter, but in \cite{mungala2023} it is the variable they are minimizing.  The biggest change from \cite{mungala2023} is that we have introduced new variables $\delta_i$ for each $i\in X$ that we use as an indicator for whether $i$ is an outlier in the embedding.

\begin{align}
     \min \change{\sum_{i=1}^n \delta_i}
    \label{eqn:hst1}\tag{HST LP}  
    \\
    \forall j,j'   \sum_{r\in M} r\cdot \gamma_{jj'}^r &\leq \change{(4+\zeta\cdot (\delta_j+\delta_{j'})\cdot \log k)}\cdot c\cdot d(j,j') \label{eqn:hst2} \\
    \forall i,r,(j,j'\in B_i(r)) \ \ \ \ \  \min(x_{i,j}^r,x_{ij'}^r) &\geq z_{ijj'}^r  \label{eqn:hst3}\\
     \forall j,j',i,r \sum_{i:j,j'\in B_i(r)}z_{ijj'}^r &\geq 1-\gamma_{jj'}^r \label{eqn:hst4}\\
    \forall j,r \sum_{i: j\in B_i(r)}x_{ij}^r &=1 \label{eqn:hst5} \\
    \forall j,j',(r< d(j,j'))  \ \ \ \ \  \gamma_{jj'}^r &= 1\label{eqn:hst7} \\ 
    \forall j,j',r  \ \ \ \  \gamma_{jj'}^r &\leq 1 \label{eqn:hst8} \\
    \forall i,j,j',r \ \ \ \ \ \gamma_{jj'}^rx_{ij}^r,z_{ijj'}^r,\change{\delta_i} &\geq 0 ; \change{\delta_i\leq 1}\label{eqn:hst9}
\end{align}


\begin{table}
    \centering
    \begin{tabular}{c|c}
        variable & intuitive meaning \\
        \hline 
         $\delta_i$ & indicator for whether $i$ is an outlier \\
         $z_{ijj'}^r$ & the probability that $i$ is the representative of $j$ and $j'$ at level $r$ \\
         $\gamma_{jj'}^r$ & the probability that $j$ and $j'$ are separated at level $r$ \\
         $x_{ij}^r$ & the probability that $i$ is the representative of $j$ at level $r$\\
         \hline 
    \end{tabular}
    \caption{We imagine assigning a ``representative'' at each node of the HST such that the representative of a leaf node is the leaf node itself, and the representative of any other node is the same as one of its children's representatives. This table describes the intuitive meaning of each variable in the LP (which is largely the same as in \cite{mungala2023}). }
    \label{tab:variables}
\end{table}

Given an optimal solution to this linear program, we need to define an outlier embedding with low distortion and few outliers. To do so, we will first use the rounding algorithm of \cite{mungala2023} to find an embedding on the entire metric, and then we take as outliers all vertices $i$ such that $\delta_i\geq \delta\st$ for some threshold $\delta\st$.  In 
\ifdagstuhl
the full version of the paper
\else 
Appendix \ref{sec:proofoutlier} 
\fi
we formally prove that this rounding algorithm works and gives us Theorem \ref{thm:outlier} as a result. In a similar manner to $\ell_2$ outlier embeddings in \cite{chawla2024}, we use the existence of nested embeddings to bound the optimal LP solution and thus obtain our final result.

Further, if we modify (\ref{eqn:hst1}) by changing the objective function to $\sum_iw_i\cdot \delta_i$ for some set of weights $w_i$, we can use the same technique to obtain a bicriteria approximation for the weighted variant of the outlier embedding problem, where the cost of making $i$ an outlier is $w_i$. This result is expressed as Corollary \ref{thm:weighted}, and we prove it in 
\ifdagstuhl
the full version of the paper.
\else 
Appendix \ref{sec:weighted}.
\fi

\section{Applications\label{sec:applications}}



In the Minimum Cost Communication Tree (MCCT) problem, we are given a finite metric $\cX=(X,d)$ and weights $r_{ij}$ for all pairs of nodes $i,j$ and we are asked for a non-contracting embedding $\alpha$ of $X$ into a tree with vertex set $X$ such that $\sum_{i,j\in X}r_{ij}\cdot d_\alpha(i,j)$ is minimized. 
This problem is motivated by evolutionary biology, in which a scientist may attempt to reconstruct a tree of relationships between multiple organisms, and she comes to the problem with some known minimum distances between them, derived through a method such as DNA sequence analysis. In such a setting, the $r_{ij}$ are an additional tunable parameter that can allow the scientist to indicate pairs she considers ``more likely'' to be related, by giving them a bigger $r_{ij}$ value. 
As noted by Wu et al. \cite{wu2000}, this problem can be approximated in expectation to within a factor of $O(\log n)$ by simply embedding randomly into an HST and converting it to a tree with vertex set $X$.
Directly applying the algorithm of Munagala et al. \cite{mungala2023} with the observations of Wu et al. \cite{wu2000}, we have a randomized algorithm that gets an $O(c(\cX))$-approximation for the MCCT problem, where $c(\cX)$ is the minimum distortion needed to probabilistically embed $\cX$ into  HSTs.

It is quite natural to consider an outlier version of the MCCT problem. In the evolutionary biology setting, we could consider the case that we are given some noisy data points, or points that don't really belong in the set being analyzed. 
One approach to this type of outlier problem may be to look for a $k$ outlier tree embedding of minimum distortion, but the set of outliers that minimizes the embedding distortion does not necessarily also minimize the MCCT solution cost.
We may expect that the outlier MCCT problem is easier to solve on tree metrics than on general metrics, so we could instead try the following approach: embed the input metric into a tree, and {\em then} try to find the best outliers for the resulting tree metric.  In fact, if no outliers are allowed, MCCT is trivial on tree metrics. 
Further, in Subsection \ref{sec:mcctontree} we give a bicriteria approximation to the $k$-outlier MCCT problem when the input is a tree metric. In particular, if the optimal cost of a $k$ outlier solution for a given instance of a tree metric is $OPT_k$, then we give an algorithm that finds a solution with at most $3k$ outliers and at most $24OPT_k$ cost. 
Let $c_k(\cX)$ be the minimum distortion required for probabilistically embedding all but at most $k$ points in $\cX$ into HSTs.
We can use the outlier embedding algorithm from this paper to embed any input metric into a tree with outliers and then apply the approximation algorithm for outlier MCCT on trees, obtaining a solution with at most $O(\frac{k}{\epsilon}\log k)$ outliers and expected cost $O((1+\epsilon)\cdot c_k(\cX)\cdot OPT_k)$. By contrast, directly using the embedding algorithm of \cite{mungala2023} with the tree approximation algorithm from this section instead gives a solution with at most $3k$ outliers and $O((1+\epsilon)\cdot c_0(\cX)\cdot OPT_k)$ expected cost, so we are unable to  save on the cost of the optimal solution if there is a large subset that embeds well into HSTs but the entire metric does not. We explain this further in 
\ifdagstuhl
the full version of the paper. 
\else 
Appendix \ref{sec:extra-app}.
\fi

\subsection{MCCT with outliers\label{sec:mcctontree}}

Let $(T,d)$ be a tree metric, and let $k$ be an integer. Then we define variables $x_{ij}$ for each pair of nodes $ij$ and variables $\delta_i$ for each variable $i$. We can use them to make the following linear program.

\begin{align*}
    &\min \sum_{i,j\in T} x_{ij} \cdot r_{ij} \cdot d(i,j) \quad s.t \label{eqn:mcctout}. \tag{MCCT Outlier LP} \\
    \forall i,j\in T: \quad &\delta_i+\delta_j+x_{ij}\geq 1 \\
     & \ \ \ \ \ \ \ \ \ \sum_{i\in T}\delta_i\leq k \\
    \forall i,j\in T: \quad & \delta_i,x_{ij}\in [0,1]
\end{align*}

\begin{claim}\label{claim:mcct-out}
Let $(T,d)$ be a tree metric such that there exists 
a subset $K\subseteq T$, $|K|\leq k$, and $\sum_{i,j\in T\setminus K}r_{ij}\cdot d(i,j)\leq D$.

Then the following procedure finds a subset $K'\subseteq T$ with $|K'|\leq 3k$ and an embedding $\alpha$ of $T\setminus K'$ into a tree with the same vertex set  and $\sum_{i,j\in T\setminus K'}d_{\alpha}(i,j)\leq 24D$.

\begin{enumerate}
    \item Solve \ref{eqn:mcctout} and let $x_{ij},\delta_j$ be the resulting values
    \item Let $K'=\set{i:\delta_i\geq 1/3}$
    \item Embed $T\setminus K'$ into a tree $T'$ with distortion at most $8$, using the algorithm of Gupta \cite{gupta2001steiner}
\end{enumerate}
\end{claim}

\begin{proof}
First, \ref{eqn:mcctout} has a solution of value at most $D$. In particular, let $K$ be as in the claim statement and consider setting $\delta_i=\begin{cases}
    1 & i\in K \\
    0 & else
\end{cases}$ and $x_{ij}=\begin{cases}
    1 & \delta_i=\delta_j=0 \\
    0 & else
\end{cases}$. As there are only $k$ outliers, the second constraint is clearly satisfied, and we have defined $x_{ij}$ such that the other constraints are also satisfied. Further, $x_{ij}$ is $0$ if either $i$ or $j$ is an outlier in $K$, so the objective function value is just $\sum_{i,j\in T\setminus K}r_{ij}d(i,j)$, which by the claim statement is bounded by $D$. 

Let $K'$ be as in the claim. As the $\delta_i$ sum to at most $k$, we know $|K'|\leq 3k$. Consider any $x_{ij}$ with $i,j\notin K'$ We have $x_{ij}\geq 1-\delta_i-\delta_j\geq 1/3$ by how we chose $K'$. The cost associated with a given pair $ij$ of non-outliers in our solution is $r_{ij}d(i,j)$, but that same pair contributed at least $r_{ij}d(i,j)/3$ to the optimal objective function value, so our final cost is at most $3$ times the objective, which is at most $3D$. At this point, we have an isometric (distortion $1$) embedding of the metric induced by $T\setminus K'$ into the original metric $T$ itself. Thus, we can view $K'$ as a set of Steiner points in our original metric.

Finally we consider the loss incurred applying the algorithm of Gupta \cite{gupta2001steiner}. This creates a tree embedding that removes Steiner points and incurs distortion at most $8$, so the cost of this embedding on $T\setminus K'$ is at most $24D$, as desired. This step is what prevents us from applying this algorithm directly to general metrics. 
\end{proof}



\section{Open questions}

There are still many open questions we can ask about topics in this space, such as:

\begin{itemize}
    \item Is it NP-hard to find the optimal outlier set size when probabilistically embedding a given metric into HSTs with a small fixed distortion? This question is known to be hard for deterministic embeddings into some spaces \cite{sidiropolous17,chawla2024}.

    \item Even if finding the optimal outlier set size for probabilistically embedding into HSTs is NP-hard, can the approximation factors in this paper be improved?

    \item Which other metric spaces admit perfect merge functions (and thus \bl extensions), and can we approximate optimal outlier embeddings in such spaces?
\end{itemize}


\bibliography{main,ArnoldBib}

\newcommand{\etalchar}[1]{$^{#1}$}
\begin{thebibliography}{BLMN05b}

\bibitem[AA97]{awerbuch1997}
B.~Awerbuch and Y.~Azar.
\newblock Buy-at-bulk network design.
\newblock In {\em Proceedings 38th Annual Symposium on Foundations of Computer Science}, pages 542--547, 1997.

\bibitem[ABD{\etalchar{+}}08]{alon2008}
Noga Alon, Mihai Badoiu, Erik~D. Demaine, Martin Farach{-}Colton, Mohammad~Taghi Hajiaghayi, and Anastasios Sidiropoulos.
\newblock Ordinal embeddings of minimum relaxation: General properties, trees, and ultrametrics.
\newblock {\em {ACM} Trans. Algorithms}, 4(4):46:1--46:21, 2008.

\bibitem[ABK{\etalchar{+}}05]{abraham2005}
I.~Abraham, Y.~Bartal, J.~Kleinberg, T.-H.H. Chan, O.~Neiman, Kedar Dhamdhere, A.~Slivkins, and Anupam Gupta.
\newblock Metric embeddings with relaxed guarantees.
\newblock In {\em 46th Annual IEEE Symposium on Foundations of Computer Science (FOCS'05)}, pages 83--100, 2005.

\bibitem[ABN09]{abraham2009}
Ittai Abraham, Yair Bartal, and Ofer Neiman.
\newblock On low dimensional local embeddings.
\newblock In {\em Proceedings of the 2009 Annual ACM-SIAM Symposium on Discrete Algorithms (SODA)}, pages 875--884, 2009.

\bibitem[ABN11]{ABN11}
Ittai Abraham, Yair Bartal, and Ofer Neiman.
\newblock Advances in metric embedding theory.
\newblock {\em Advances in Mathematics}, 228(6):3026--3126, 2011.

\bibitem[ABN15]{ABN15}
Ittai Abraham, Yair Bartal, and Ofer Neiman.
\newblock Embedding metrics into ultrametrics and graphs into spanning trees with constant average distortion.
\newblock {\em {SIAM} J. Comput.}, 44(1):160--192, 2015.

\bibitem[ACE{\etalchar{+}}20]{ACEFN20}
Ittai Abraham, Shiri Chechik, Michael Elkin, Arnold Filtser, and Ofer Neiman.
\newblock Ramsey spanning trees and their applications.
\newblock {\em {ACM} Trans. Algorithms}, 16(2):19:1--19:21, 2020.
\newblock preliminary version published in SODA 2018.

\bibitem[AKPW95]{alon1995graph}
Noga Alon, Richard~M Karp, David Peleg, and Douglas West.
\newblock A graph-theoretic game and its application to the k-server problem.
\newblock {\em SIAM Journal on Computing}, 24(1):78--100, 1995.

\bibitem[ALN{\etalchar{+}}12]{arora2012}
Sanjeev Arora, L\'{a}szl\'{o} Lov\'{a}sz, Ilan Newman, Yuval Rabani, Yuri Rabinovich, and Santosh Vempala.
\newblock Local versus global properties of metric spaces.
\newblock {\em SIAM Journal on Computing}, 41(1):250--271, 2012.

\bibitem[Bal90]{ball1990isometric}
Keith Ball.
\newblock Isometric embedding in lp-spaces.
\newblock {\em Eur. J. Comb.}, 11(4):305--311, 1990.

\bibitem[Bar96]{bartal1996}
Y.~Bartal.
\newblock Probabilistic approximation of metric spaces and its algorithmic applications.
\newblock In {\em Proceedings of 37th Conference on Foundations of Computer Science}, pages 184--193, 1996.

\bibitem[Bar98]{bartal1998}
Yair Bartal.
\newblock On approximating arbitrary metrices by tree metrics.
\newblock In {\em Proceedings of the Thirtieth Annual ACM Symposium on Theory of Computing}, STOC '98, page 161–168, New York, NY, USA, 1998. Association for Computing Machinery.

\bibitem[Bar04]{Bartal04}
Yair Bartal.
\newblock Graph decomposition lemmas and their role in metric embedding methods.
\newblock In {\em Algorithms - {ESA} 2004, 12th Annual European Symposium, Bergen, Norway, September 14-17, 2004, Proceedings}, pages 89--97, 2004.

\bibitem[Bar21]{Bar21}
Yair Bartal.
\newblock Advances in metric ramsey theory and its applications.
\newblock {\em CoRR}, abs/2104.03484, 2021.

\bibitem[BCIS05]{badoiu2005}
Mihai Badoiu, Julia Chuzhoy, Piotr Indyk, and Anastasios Sidiropoulos.
\newblock Low-distortion embeddings of general metrics into the line.
\newblock In Harold~N. Gabow and Ronald Fagin, editors, {\em Proceedings of the 37th Annual {ACM} Symposium on Theory of Computing, Baltimore, MD, USA, May 22-24, 2005}, pages 225--233. {ACM}, 2005.

\bibitem[BCIS06]{badoiu2006}
Mihai Badoiu, Julia Chuzhoy, Piotr Indyk, and Anastasios Sidiropoulos.
\newblock Embedding ultrametrics into low-dimensional spaces.
\newblock In Nina Amenta and Otfried Cheong, editors, {\em Proceedings of the 22nd {ACM} Symposium on Computational Geometry, Sedona, Arizona, USA, June 5-7, 2006}, pages 187--196. {ACM}, 2006.

\bibitem[BFM86]{BFM86}
J.~Bourgain, T.~Figiel, and V.~Milman.
\newblock On {H}ilbertian subsets of finite metric spaces.
\newblock {\em Israel J. Math.}, 55(2):147--152, 1986.

\bibitem[BFN19]{BFN19}
Yair Bartal, Arnold Filtser, and Ofer Neiman.
\newblock On notions of distortion and an almost minimum spanning tree with constant average distortion.
\newblock {\em J. Comput. Syst. Sci.}, 105:116--129, 2019.
\newblock preliminary version published in SODA 2016.

\bibitem[BGS16]{BGS16}
Guy~E. Blelloch, Yan Gu, and Yihan Sun.
\newblock A new efficient construction on probabilistic tree embeddings.
\newblock {\em CoRR}, abs/1605.04651, 2016.
\newblock \url{https://arxiv.org/abs/1605.04651}.

\bibitem[BIS07]{badoiu2007}
Mihai Badoiu, Piotr Indyk, and Anastasios Sidiropoulos.
\newblock Approximation algorithms for embedding general metrics into trees.
\newblock In Nikhil Bansal, Kirk Pruhs, and Clifford Stein, editors, {\em Proceedings of the Eighteenth Annual {ACM-SIAM} Symposium on Discrete Algorithms, {SODA} 2007, New Orleans, Louisiana, USA, January 7-9, 2007}, pages 512--521. {SIAM}, 2007.

\bibitem[BLMN03]{bartal2003}
Yair Bartal, Nathan Linial, Manor Mendel, and Assaf Naor.
\newblock On metric ramsey-type phenomena.
\newblock In {\em Proceedings of the Thirty-Fifth Annual ACM Symposium on Theory of Computing}, STOC '03, page 463–472, New York, NY, USA, 2003. Association for Computing Machinery.

\bibitem[BLMN05a]{BLMN05b}
Yair Bartal, Nathan Linial, Manor Mendel, and Assaf Naor.
\newblock On metric {R}amsey-type dichotomies.
\newblock {\em Journal of the London Mathematical Society}, 71(2):289--303, 2005.

\bibitem[BLMN05b]{BLMN05}
Yair Bartal, Nathan Linial, Manor Mendel, and Assaf Naor.
\newblock Some low distortion metric ramsey problems.
\newblock {\em Discret. Comput. Geom.}, 33(1):27--41, 2005.

\bibitem[CCG{\etalchar{+}}98]{charikar1998}
Moses Charikar, Chandra Chekuri, Ashish Goel, Sudipto Guha, and Serge Plotkin.
\newblock Approximating a finite metric by a small number of tree metrics.
\newblock In {\em Proceedings of the 39th Annual Symposium on Foundations of Computer Science}, FOCS '98, page 379, USA, 1998. IEEE Computer Society.

\bibitem[CDG06]{chan2006}
T.~H.~Hubert Chan, Michael Dinitz, and Anupam Gupta.
\newblock Spanners with slack.
\newblock In Yossi Azar and Thomas Erlebach, editors, {\em Algorithms -- ESA 2006}, pages 196--207, Berlin, Heidelberg, 2006. Springer Berlin Heidelberg.

\bibitem[CDN{\etalchar{+}}12]{chepoi2012}
Victor Chepoi, Feodor~F. Dragan, Ilan Newman, Yuri Rabinovich, and Yann Vax{\`{e}}s.
\newblock Constant approximation algorithms for embedding graph metrics into trees and outerplanar graphs.
\newblock {\em Discret. Comput. Geom.}, 47(1):187--214, 2012.

\bibitem[CKR05]{calinescu2005extension}
Gruia Calinescu, Howard Karloff, and Yuval Rabani.
\newblock Approximation algorithms for the 0-extension problem.
\newblock {\em SIAM Journal on Computing}, 34(2):358--372, 2005.

\bibitem[CMM10]{charikar2010}
Moses Charikar, Konstantin Makarychev, and Yury Makarychev.
\newblock Local global tradeoffs in metric embeddings.
\newblock {\em SIAM Journal on Computing}, 39(6):2487--2512, 2010.

\bibitem[CN24]{CN24}
Yeshwanth Cherapanamjeri and Jelani Nelson.
\newblock Terminal embeddings in sublinear time.
\newblock {\em TheoretiCS}, 3, 2024.

\bibitem[CR98]{charikar1998finite}
Moses Charikar and Balaji Raghavachari.
\newblock The finite capacity dial-a-ride problem.
\newblock In {\em Proceedings 39th Annual Symposium on Foundations of Computer Science (Cat. No. 98CB36280)}, pages 458--467. IEEE, 1998.

\bibitem[CS20]{Chubarian20}
Karine Chubarian and Anastasios Sidiropoulos.
\newblock Computing bi-lipschitz outlier embeddings into the line.
\newblock In {\em International Workshop and International Workshop on Approximation, Randomization, and Combinatorial Optimization. Algorithms and Techniques}, pages 36:1--36:21, 2020.

\bibitem[CS24]{chawla2024}
Shuchi Chawla and Kristin Sheridan.
\newblock Composition of nested embeddings with an application to outlier removal.
\newblock In {\em Proceedings of the 2024 Annual ACM-SIAM Symposium on Discrete Algorithms (SODA)}, pages 1641--1668, 2024.

\bibitem[dBOS10]{deberg2010}
Mark de~Berg, Krzysztof Onak, and Anastasios Sidiropoulos.
\newblock Fat polygonal partitions with applications to visualization and embeddings.
\newblock {\em J. Comput. Geom.}, 4:212--239, 2010.

\bibitem[EFN17]{EFN17}
M.~Elkin, A.~Filtser, and O.~Neiman.
\newblock Terminal embeddings.
\newblock {\em Theoretical Computer Science}, 697:1 -- 36, 2017.

\bibitem[EFN18]{EFN18}
Michael Elkin, Arnold Filtser, and Ofer Neiman.
\newblock Prioritized metric structures and embedding.
\newblock {\em {SIAM} J. Comput.}, 47(3):829--858, 2018.

\bibitem[EGK{\etalchar{+}}14]{EGKRTT14}
Matthias Englert, Anupam Gupta, Robert Krauthgamer, Harald R{\"{a}}cke, Inbal Talgam{-}Cohen, and Kunal Talwar.
\newblock Vertex sparsifiers: New results from old techniques.
\newblock {\em {SIAM} J. Comput.}, 43(4):1239--1262, 2014.

\bibitem[EN22]{EN22}
Michael Elkin and Ofer Neiman.
\newblock Lossless prioritized embeddings.
\newblock {\em {SIAM} J. Discret. Math.}, 36(3):1529--1550, 2022.

\bibitem[ESZ10]{edmonds2010}
Jeff Edmonds, Anastasios Sidiropoulos, and Anastasios Zouzias.
\newblock Inapproximability for planar embedding problems.
\newblock In Moses Charikar, editor, {\em Proceedings of the Twenty-First Annual {ACM-SIAM} Symposium on Discrete Algorithms, {SODA} 2010, Austin, Texas, USA, January 17-19, 2010}, pages 222--235. {SIAM}, 2010.

\bibitem[FFL{\etalchar{+}}13]{fellows2013}
Michael~R. Fellows, Fedor~V. Fomin, Daniel Lokshtanov, Elena Losievskaja, Frances~A. Rosamond, and Saket Saurabh.
\newblock Distortion is fixed parameter tractable.
\newblock {\em {ACM} Trans. Comput. Theory}, 5(4):16:1--16:20, 2013.

\bibitem[FGN24]{FGN24}
Arnold Filtser, Yuval Gitlitz, and Ofer Neiman.
\newblock Light, reliable spanners.
\newblock In Wolfgang Mulzer and Jeff~M. Phillips, editors, {\em 40th International Symposium on Computational Geometry, SoCG 2024, June 11-14, 2024, Athens, Greece}, volume 293 of {\em LIPIcs}, pages 56:1--56:15. Schloss Dagstuhl - Leibniz-Zentrum f{\"{u}}r Informatik, 2024.

\bibitem[FHRT03]{FHRT03}
Jittat Fakcharoenphol, Chris Harrelson, Satish Rao, and Kunal Talwar.
\newblock An improved approximation algorithm for the 0-extension problem.
\newblock In {\em SODA '03: Proceedings of the fourteenth annual ACM-SIAM symposium on Discrete algorithms}, pages 257--265, Philadelphia, PA, USA, 2003. Society for Industrial and Applied Mathematics.

\bibitem[FKT19]{FKT19}
Arnold Filtser, Robert Krauthgamer, and Ohad Trabelsi.
\newblock Relaxed voronoi: {A} simple framework for terminal-clustering problems.
\newblock In Jeremy~T. Fineman and Michael Mitzenmacher, editors, {\em 2nd Symposium on Simplicity in Algorithms, {SOSA} 2019, January 8-9, 2019, San Diego, CA, {USA}}, volume~69 of {\em OASIcs}, pages 10:1--10:14. Schloss Dagstuhl - Leibniz-Zentrum f{\"{u}}r Informatik, 2019.

\bibitem[FL21]{FL21}
Arnold Filtser and Hung Le.
\newblock Clan embeddings into trees, and low treewidth graphs.
\newblock In Samir Khuller and Virginia~Vassilevska Williams, editors, {\em {STOC} '21: 53rd Annual {ACM} {SIGACT} Symposium on Theory of Computing, Virtual Event, Italy, June 21-25, 2021}, pages 342--355. {ACM}, 2021.

\bibitem[FL22]{FL22tw}
Arnold Filtser and Hung Le.
\newblock Low treewidth embeddings of planar and minor-free metrics.
\newblock In {\em 63rd {IEEE} Annual Symposium on Foundations of Computer Science, {FOCS} 2022, Denver, CO, USA, October 31 - November 3, 2022}, pages 1081--1092. {IEEE}, 2022.

\bibitem[FRT03]{frt04}
Jittat Fakcharoenphol, Satish Rao, and Kunal Talwar.
\newblock A tight bound on approximating arbitrary metrics by tree metrics.
\newblock In {\em Proceedings of the thirty-fifth annual ACM symposium on Theory of computing}, pages 448--455, 2003.

\bibitem[Gup01]{gupta2001steiner}
Anupam Gupta.
\newblock Steiner points in tree metrics don't (really) help.
\newblock In {\em Proceedings of the twelfth annual ACM-SIAM symposium on Discrete algorithms}, pages 220--227, 2001.

\bibitem[JL84]{Johnson1984}
William~B. Johnson and Joram Lindenstrauss.
\newblock Extensions of lipschitz mappings into hilbert space.
\newblock {\em Contemporary mathematics}, 26:189--206, 1984.

\bibitem[KGR25]{KGR25}
Rasmus Kyng, Maximilian~Probst Gutenberg, and Tim Rieder.
\newblock Random-shift revisited: Tight approximations for tree embeddings and $\ell_1$-oblivious routings.
\newblock In {\em 66th {IEEE} Annual Symposium on Foundations of Computer Science, {FOCS} 2025, Sydney, Australia, December 14-17, 2025}, pages 2132--2142. {IEEE}, 2025.

\bibitem[KSW09]{KSW04}
Jon Kleinberg, Aleksandrs Slivkins, and Tom Wexler.
\newblock Triangulation and embedding using small sets of beacons.
\newblock {\em J. ACM}, 56(6):32:1--32:37, September 2009.

\bibitem[LN05]{LN05}
J.~R. Lee and A.~Naor.
\newblock Extending lipschitz functions via random metric partitions.
\newblock {\em Inventiones Mathematicae}, 160(1):59--95, 2005.

\bibitem[LSS09]{lammersen2009}
Christiane Lammersen, Anastasios Sidiropoulos, and Christian Sohler.
\newblock Streaming embeddings with slack.
\newblock In Frank Dehne, Marina Gavrilova, J{\"o}rg-R{\"u}diger Sack, and Csaba~D. T{\'o}th, editors, {\em Algorithms and Data Structures}, pages 483--494, Berlin, Heidelberg, 2009. Springer Berlin Heidelberg.

\bibitem[Mat97]{matousek1997lplowerbounds}
Ji{\v{r}}{\'\i} Matou{\v{s}}ek.
\newblock On embedding expanders into $\ell_p$ spaces.
\newblock {\em Israel Journal of Mathematics}, 102(1):189--197, 1997.

\bibitem[MMMR18]{MMMR18}
Sepideh Mahabadi, Konstantin Makarychev, Yury Makarychev, and Ilya~P. Razenshteyn.
\newblock Nonlinear dimension reduction via outer bi-lipschitz extensions.
\newblock In Ilias Diakonikolas, David Kempe, and Monika Henzinger, editors, {\em Proceedings of the 50th Annual {ACM} {SIGACT} Symposium on Theory of Computing, {STOC} 2018, Los Angeles, CA, USA, June 25-29, 2018}, pages 1088--1101. {ACM}, 2018.

\bibitem[MN07]{mendel2007ramsey}
Manor Mendel and Assaf Naor.
\newblock Ramsey partitions and proximity data structures.
\newblock {\em Journal of the European Mathematical Society}, 9(2):253--275, 2007.

\bibitem[MS08]{matousek2008}
Jir{\'{\i}} Matousek and Anastasios Sidiropoulos.
\newblock Inapproximability for metric embeddings into r{\^{}}d.
\newblock In {\em 49th Annual {IEEE} Symposium on Foundations of Computer Science, {FOCS} 2008, October 25-28, 2008, Philadelphia, PA, {USA}}, pages 405--413. {IEEE} Computer Society, 2008.

\bibitem[MST23]{mungala2023}
Kamesh Munagala, Govind~S. Sankar, and Erin Taylor.
\newblock {Probabilistic Metric Embedding via Metric Labeling}.
\newblock In Nicole Megow and Adam Smith, editors, {\em Approximation, Randomization, and Combinatorial Optimization. Algorithms and Techniques (APPROX/RANDOM 2023)}, volume 275 of {\em Leibniz International Proceedings in Informatics (LIPIcs)}, pages 2:1--2:10, Dagstuhl, Germany, 2023. Schloss Dagstuhl -- Leibniz-Zentrum f{\"u}r Informatik.

\bibitem[NN19]{NN19}
Shyam Narayanan and Jelani Nelson.
\newblock Optimal terminal dimensionality reduction in euclidean space.
\newblock In Moses Charikar and Edith Cohen, editors, {\em Proceedings of the 51st Annual {ACM} {SIGACT} Symposium on Theory of Computing, {STOC} 2019, Phoenix, AZ, USA, June 23-26, 2019}, pages 1064--1069. {ACM}, 2019.

\bibitem[NR15a]{naor2017}
Assaf Naor and Yuval Rabani.
\newblock On lipschitz extension from finite subsets.
\newblock {\em Israel Journal of Mathematics}, 219, 06 2015.

\bibitem[NR15b]{nayyeri2015}
Amir Nayyeri and Benjamin Raichel.
\newblock Reality distortion: Exact and approximate algorithms for embedding into the line.
\newblock In Venkatesan Guruswami, editor, {\em {IEEE} 56th Annual Symposium on Foundations of Computer Science, {FOCS} 2015, Berkeley, CA, USA, 17-20 October, 2015}, pages 729--747. {IEEE} Computer Society, 2015.

\bibitem[NT12]{NT12}
Assaf Naor and Terence Tao.
\newblock Scale-oblivious metric fragmentation and the nonlinear dvoretzky theorem.
\newblock {\em Israel Journal of Mathematics}, 192(1):489--504, 2012.

\bibitem[SBD{\etalchar{+}}19]{sidiropulos2019}
Anastasios Sidiropoulos, Mihai Badoiu, Kedar Dhamdhere, Anupam Gupta, Piotr Indyk, Yuri Rabinovich, Harald R{\"{a}}cke, and R.~Ravi.
\newblock Approximation algorithms for low-distortion embeddings into low-dimensional spaces.
\newblock {\em {SIAM} J. Discret. Math.}, 33(1):454--473, 2019.

\bibitem[SWW15]{sidiropolous17}
Anastasios Sidiropoulos, Dingkang Wang, and Yusu Wang.
\newblock Metric embeddings with outliers.
\newblock In {\em ACM-SIAM Symposium on Discrete Algorithms}, 2015.

\bibitem[WLB{\etalchar{+}}00]{wu2000}
Bang~Ye Wu, Giuseppe Lancia, Vineet Bafna, Kun-Mao Chao, R.~Ravi, and Chuan~Yi Tang.
\newblock A polynomial-time approximation scheme for minimum routing cost spanning trees.
\newblock {\em SIAM Journal on Computing}, 29(3):761--778, 2000.

\end{thebibliography}

\ifdagstuhl

\else 

\appendix 

\section{Missing proofs from \texorpdfstring{\Cref{sec:components}}{Section \ref*{sec:components}}}
\label{sec:exp_contr_proofs}

In this appendix, we give the proofs of \Cref{lem:contraction-general,lem:expansion-general}.

\subsection{Proof of \texorpdfstring{\Cref{lem:contraction-general}}{Lemma \ref*{lem:contraction-general}}}

\begin{proofof}{\Cref{lem:contraction-general}}
We first consider when the perfect merge function has a contraction bound $\eta$. Recall from \Cref{fact:induct} that for any pair of nodes $x,y$ we only need to consider the first time when $x$ and $y$ are both in the domain of $\alpha$. We consider three cases.

\begin{enumerate}
    \item If $x,y\in S$, they both enter the domain of $\alpha$ when it is set to be $\alpha_s\sim \cD_S$. As $\cD_S$ is expanding, $d_{\alpha_S}(x,y)\geq d(x,y)$, so the nodes incur no contraction.

    \item If $x\in S,y\in K$, $y$ must be in some cluster $K_i$ of the sampled \onion partition and have some assigned anchor point $r(y)$. Fix $i$ and $r(y)$. For this choice of partition, $x$ and $y$ first appear in the domain of $\alpha$ together when we merge $\alpha$ (the embedding up to this point in the algorithm) and $\alpha_i$ (the embedding we sample for $K_i^+$). Let $\alpha'$ be the embedding after this merge step and let $\alpha$ be the embedding immediately before. We have 
    \begin{align*}
        d_{\alpha'}(x,y) &\geq \frac{d_\alpha(x,r(y))+d_{\alpha_i}(r(y),y)}{\eta } \\
        &\geq \frac{d(x,r(y)) + d(r(y),y)}{\eta} \\
        &\geq \frac{d(x,y)}{\eta},
    \end{align*}

    where the first line is by the properties of the merge function; the second is by the fact that $\alpha_i$ is expanding and because $x,r(y)\in S$ so case 1 applies; and the third is by the triangle inequality on $d$.

    \item Finally, consider $x,y\in K$. Fix a sample $(\cK,r)$ from an \onion partition and let $r(x),r(y)$ be the assigned anchor points for $x,y$ in that partition. We divide into two subcases:
    \begin{itemize}
        \item If $x,y\in K_i$ for some $K_i\in \cK$, they both enter the domain of $\alpha$ when we sample an expanding embedding $\alpha_i$ on $K_i^+$ and merge it into $\alpha$. By the fact that $\alpha_i$ is expanding and the properties of perfect merge functions, $d_\alpha(x,y)\geq d(x,y)$. 

        \item Otherwise, $x\in K_i,y\in K_j$ for some $i\neq j$. Assume wlog that $K_j$ is processed by the algorithm after $K_i$. Then $y$ first enters the domain of $\alpha$ when $K_j$ is processed and the sampled embedding $\alpha_j$ on $K_j^+$ is merged with $\alpha$. Let $\alpha$ be the embedding before this merge step (which includes $x$ in its domain), and let $\alpha'$ be the embedding after this merge step. 

        \begin{align*}
            d_{\alpha'}(x,y) &\geq \frac{d_\alpha(x,r(y))+d_{\alpha_j}(r(y),y)}{\eta} \\
            &\geq \frac{d(x,r(y))/\eta + d(r(y),y)}{\eta } \\
            &\geq \frac{d(x,r(y))+d(r(y),y)}{\eta^2}
        \end{align*}

    \end{itemize}

\end{enumerate}
\end{proofof}

\subsection{Proof of \texorpdfstring{\Cref{lem:expansion-general}}{Lemma \ref{lem:expansion-general}} }

\begin{proofof}{\Cref{lem:expansion-general}} 
We first analyze the most general case where $x,y\in K$. Fix $x,y$ and denote $\delta = d(x,y)$.
Wlog we assume $d(x,S)\leq d(y,S)$. 

We split the analysis into two cases based on the distance $\delta$ relative to the distance from $x$ to $S$:
\begin{description}

\item[{Case 1: Large distance}]($\delta > \frac{1}{4} d(x,S)$).

By the triangle inequality, $d(y,S)\leq d(x,S)+\delta<5\delta$. Thus, $d(r(x),r(y))\leq d(r(x),x)+d(x,y)+d(y,r(y))\leq 2d(x,S)+2d(y,S)+\delta\leq 4d(x,S)+3\delta<19\delta$ (where the last inequality is by the fact that $\delta>\frac{1}{4}d(x,S)$ in this case).

Let $(\cK,r)$ be a random variable equal to the partition chosen by the \onion partition, and let $\Pr[(\cK,r)]$ be the probability that this partition is chosen. Let $r(x)$ be the anchor point of $x$'s cluster in $\cK$ and let $r(y)$ be the same for $y$. 
Fix a choice of $(\cK,r)$ and note that once we have done so $r(x)$ and $r(y)$ are also fixed to some points, say $s$ and $t$. Further, let $\cD_x$ be the bounded-diameter embedding we use on $x$'s cluster in $\cK$, and let $\cD_y$ be the same for $y$.
If $\alpha$ is the embedding returned by the algorithm, we get

\begin{align*}
    \expec_{\alpha\sim \cD}[d_\alpha(x,y)|(\cK,r)] 
    &\leq \expec_{\alpha\sim \cD}[d_\alpha(x,s)|(\cK,r)]+\expec_{\alpha\sim \cD}[d_\alpha(s,t)|(\cK,r)]+\expec_{\alpha\sim \cD}[d_\alpha(t,y)|(\cK,r)] \\ 
    &\leq \expec_{\alpha\sim \cD_x}[d_\alpha(x,s)|(\cK,r)] +\expec_{\alpha\sim \cD_S}[d_\alpha(s,t)|(\cK,r)]+\expec_{\alpha\sim \cD_y}[d_\alpha(t,y)|(\cK,r)]  \\
    &\leq c_k\cdot d(x,s)+c_S\cdot d(s,t)+c_k\cdot d(t,y) \\
    &\leq 2c_k\cdot (d(x,S)+d(y,S))+c_S\cdot d(s,t)  \\
    &\leq 2c_k\cdot (4\delta + 5\delta )+c_S\cdot 19\delta \\
    &\leq (18c_k+19c_S)\cdot \delta.
\end{align*}

The first line is by the triangle inequality on $\alpha$. The second line is by the properties of perfect merge functions and the fact that $x$ and $s$ first appear in the domain of the same embedding when we draw an embedding from $\cD_x$.\footnote{Technically, the embedding $\cD_x$ is on a set of size at most $k+1$, not at most $k$ (since it must embed the anchor point as well as the cluster). However, up to constant factors, $c_k$ is no smaller than $c_{k+1}$ and we write it this way for convenience. To see this, note that given an embedding on $k$ points, if the target metric has a perfect merge function we can add a single point to the embedding while blowing up the worst-case distortion and the diameter expansion by at most a constant factor. Consider simply finding an embedding of the new point and its closest neighbor among the original $k$ points and merging that embedding with the original embedding on $k$ points. .} (Recall that if we have fixed $(\cK,r)$, then $\cD_x$ is also fixed, so we can take an expectation over it.) The third line is because while $(\cK,r)$ determines the distributions $\cD_x$ and $\cD_y$, the randomness used in sampling from each of these embeddings is independent of $(\cK,r)$ (as is the randomness in sampling from $\cD_S$). The fourth line is by the second property of \onion partitions ($x$ is close to its anchor point). The rest are  by the bounds we established at the beginning of this case.

By the law of total expectation, we have $\expec_{\alpha \sim \cD}[d_\alpha(x,y)]=\sum_{(\cK,r)}\Pr[(\cK,r)]\cdot \expec_{\alpha\sim \cD}[d_\alpha(x,y)|(\cK,r)]$, so we get our bound immediately.

\item[{Case 2: Small distance}]($\delta \le \frac{1}{4} d(x,S)$).

We begin by bounding the distance between $x$ and $y$ with respect to a specific partition.
As before, fix a choice of partition $(\cK,r)$ and let $s$ be the anchor point of $x$'s cluster under this partition, let $t$ be the anchor point of $y$'s cluster under this partition, and let $\cD_x$ be the bounded-diameter embedding corresponding to $x$'s cluster in $\cK$, plus its anchor point, and let $\cD_y$ be the same for $y$. Further, let $\Delta_x$ be the diameter of $x$'s cluster in $\cK$, and let $\Delta_y$ be analogous for $y$.

We have 

\begin{align*}
\expec_{\alpha\sim \cD}[d_\alpha(x,y)|(\cK,r)]&\leq \expec_{\alpha\sim \cD}[d_\alpha(x,s)|(\cK,r)]+\expec_{\alpha\sim\cD}[d_\alpha(s,t)|(\cK,r)]+\expec_{\alpha\sim \cD}[d_\alpha(t,y)|(\cK,r)]\\
&\leq \expec_{\alpha\sim \cD_x}[d_\alpha(x,s)|(\cK,r)]+\expec_{\alpha\sim\cD_S}[d_\alpha(s,t)|(\cK,r)]+\expec_{\alpha\sim \cD_y}[d_\alpha(t,y)|(\cK,r)] \\
&\leq c_\Delta \cdot \Delta_x + c_S\cdot d(s,t) + c_\Delta \cdot \Delta_y \\
&\leq \gamma \cdot c_\Delta\cdot d(x,s) + c_S\cdot d(s,t)+\gamma \cdot c_\Delta\cdot d(y,t) \\
&\leq 2\gamma \cdot c_\Delta \cdot d(x,S) + 2\gamma \cdot c_\Delta\cdot d(y,S) + c_S\cdot d(s,t).
\end{align*}

The first line is by the triangle inequality, the second is by the properties of perfect merge functions, and the third is due to the independence between the sampled distribution $\cD_S$ and the partition $(\cK,r)$ and the expected diameter expansion bound of the diameter-bounded embeddings. The fourth line is by the properties of \onion partitions with \band $\gamma$, and the fifth is by the properties of \onion partitions with closeness $2$.

Note that since $\delta \le \frac{1}{4}d(x,S)$, we know $d(y,S) \le d(x,S) + \delta \le \frac{5}{4}d(x,S) = O(d(x,S))$. Similarly, $d(s,t) \le 2d(x,S) + \delta + 2d(y,S) \le \frac{19}{4}d(x,S) = O(d(x,S))$. Thus, we have shown that \fbox{for any $(\cK,r)$, $\expec_{\alpha\sim \cD}[d_\alpha(x,y)|(\cK,r)]\leq O(\gamma \cdot (c_S+c_\Delta)\cdot d(x,S))$}.\footnote{Alternatively, $\expec_{\alpha\sim \cD}[d_\alpha(x,y)|\cK]\leq O((c_S+c_k)\cdot d(x,S))$, with no dependence on $\gamma$ or $c_\Delta$. }

Further, note that if $x$ and $y$ are {\em not} separated by $(\cK,r)$ (i.e. they're in the same cluster in $\cK$), then $\expec_{\alpha\sim \cD}[d_\alpha(x,y)|(\cK,r)]$ is bounded by the expected distance between $x$ and $y$ in the embedding drawn from $\cD_x$ (which $x$ and $y$ must both be the in the domain of). Using bounds on our bounded-diameter embedding and the independence of this draw from $(\cK,r)$, we get that  \fbox{for any $(\cK,r)$ that does not separate $x,y$, $\expec_{\alpha\sim \cD}[d_\alpha(x,y)|(\cK,r)]\leq c_k\cdot d(x,y)$}.


Let $K_{sep}$ be the set of partitions that separate $x$ and $y$ and let $K_{\neg sep}$ be the set of partitions that do not separate $x$ and $y$. Further, let $\phi$ be the event that $x$ and $y$ are separated by the partition.
Combining everything so far, we get

\begin{align*}
    \expec_{\alpha\sim \cD}[d_\alpha(x,y)] &= \sum_{(\cK,r)}\Pr[(\cK,r)]\cdot \expec_{\alpha\sim \cD}[d_\alpha(x,y)|(\cK,r)] \\
    &= \sum_{(\cK,r)\in K_{\neg sep}} \Pr[(\cK,r)]\cdot \expec_{\alpha\sim \cD}[d_\alpha(x,y)|(\cK,r)] 
    \\
    & \ \ \ \ \ + \sum_{(\cK,r)\in K_{sep}}\Pr[(\cK,r)]\cdot \expec_{\alpha\sim \cD}[d_\alpha(x,y)|(\cK,r)] \\
    &\leq c_k\cdot d(x,y) \cdot \sum_{(\cK,r)\in K_{\neg sep}} \Pr[(\cK,r)] \\
    & \ \ \ \ \ + O((c_S+c_\Delta)\cdot d(x,y))\cdot \sum_{(\cK\,r)\in K_{sep}}\Pr[(\cK,r)] \\
    &= c_k \cdot d(x,y)\cdot \Pr[\neg \phi] + O((c_S+c_\Delta)\cdot d(x,S)) \cdot \Pr[\phi] \\
    &\leq 1\cdot c_k\cdot d(x,y)+ O(\gamma \cdot (c_S+c_\Delta)\cdot d(x,S))\cdot \left(\rho\cdot  \frac{d(x,y)}{d(x,S)}\right) \\
    &\leq O( (c_k+\gamma \cdot (c_S+c_\Delta)\cdot \rho)\cdot d(x,y)).
\end{align*}
 
The first line is by the law of total expectation, the third is by our earlier bounds on the expected distance between $x$ and $y$ when they are separated versus when they are not, and the fifth is by the first property of \onion partitions (the bound on the probability of separation).\footnote{Note that if we want to eliminate dependence on $\gamma$ and $c_\Delta$, we get a bound of $O((c_S+c_k)\cdot \rho \cdot d(x,y))$.} 

\end{description}

The cases where at least one point is in $S$ follow directly from this analysis. If $x\in S,y\in K$, this can be analyzed identically to adding a point in $K$ at distance $0$ from $x$. Because $d(x,S) = 0$, the condition $\delta > \frac{1}{4}d(x,S)$ is trivially satisfied, placing us inherently in the ``Large distance'' regime where everything holds under the worst-case bound without a $\log k$ penalty. Finally, if both $x, y \in S$, their distance is determined entirely by the embedding drawn from $\cD_S$, so it is bounded by $c_S\cdot d(x,y)$ in expectation.
\end{proofof}

\section{Missing proofs from \texorpdfstring{\Cref{sec:merge}}{Section \ref*{sec:merge}} } \label{sec:mergeproofs}

\subsection{Proof of \texorpdfstring{\Cref{thm:bilipl1}}{Theorem \ref*{thm:bilip-hst}} }\label{sec:bilipl1-lastproof}

\begin{proof}[Proof of \Cref{thm:bilipl1}, continued]
Here we prove the second part of \Cref{thm:bilipl1}.

First, we assume that vectors in the image of $\alpha_S$ have dimension $\tau$. Our final embedding $\alpha$ will have $\tau+n^2$ coordinates, and for each $x\in S$, $\alpha(x)$ will match $\alpha_S(x)$ on the first $\tau$ indices and be $0$ on all other indices.
We will define two embeddings on $X$: $\alpha_c$, which we call the ``core embedding,'' and $\alpha_e$, which we call the ``extra embedding.'' To ensure we meet the requirements of strong \bl extensions, we will require that for all $x\in S$, $\alpha_c(x)=\alpha_S(x)$ and $\alpha_e(x)=0^{n^2}$ (the $0$ vector in $n^2$ dimensions). We define each of these embeddings as follows. Let $\cD$ be the probabilistic embedding \Cref{alg:extend} samples from when the input is $\alpha_S$. 

\begin{itemize}
    \item Let $\cD_c$ be the probabilistic embedding found by sampling from $\cD$ and truncating all points in the image of the embedding by restricting them to the first $\tau$ indices (the ones corresponding to $\alpha_S$).  
    

    Let $\alpha_c(x)$ be the coordinate-wise average of the embeddings of $x$ under $\cD_c$. That is, if $\cD_c$ outputs $\alpha^t$ with probability $p_t$, then $\alpha_c(x):=\sum_{t}p_t\cdot \alpha^t(x)$, where the sum is a coordinate-wise sum over vectors.

    Note that for all $x\in S$, $\alpha_c(x)=\alpha_S(x)$, as the first $\tau$ indices of $\alpha(x)$ match $\alpha_S(x)$ for all embeddings $\alpha$ in the support of $\cD$. 
    
    \item Let $\cD_e$ be the probabilistic embedding found by sampling from $\cD$ and deleting the first $\tau$ coordinates of every vector in the image of the embedding. As discussed earlier, there must be some deterministic embedding $\alpha_e$ such that for all $x,y\in X$, $d_{\alpha_e}(x,y)=\expec_{\alpha\sim \cD_e}[d_\alpha(x,y)]$. 

    Note that any set of $n$ points in finite $\ell_1$ space can be isometrically embedded into $n^2$-dimensional $\ell_1$ space \cite{ball1990isometric}, so we assume wlog that $\alpha_e$ maps points to vectors in $n^2$-dimensional space. Further, note that for any $x,y\in S$, an embedding $\alpha\sim \cD$ can only have $\alpha(x)$ and $\alpha(y)$ differ on the first $\tau$ indices (as they are mapped to their vectors under $\alpha_S$ on the first $\tau$ indices, and they are all mapped to the same vector as the ``anchor point'' of subsequent cluster embeddings $\alpha_i$). Thus, their distance is $0$ under $\alpha_e$ and they are all mapped to the same point. We assume wlog (by translation) that this point is the all $0$s vector.
\end{itemize}

Now we are ready to define our final embedding $\alpha$. For all $x\in X$, we let $\alpha(x)=\alpha_c(x)\circ \alpha_e(x)$. Note that if $x\in S$, $\alpha_c(x)=\alpha_S(x)$ and $\alpha_e(x)=0^{n^2}$, so if we project $\alpha(x)$ to the subspace spanned by $\set{\alpha(y):y\in S}$, we get exactly $\alpha_S(x)$.
Thus, we are left with bounding expansion and contraction for general pairs of nodes in $X$.

\textbf{Expansion:}
Consider a pair of points $x,y\in X$. Let $p_t$ be the probability that $\alpha^t$ is sampled from $\cD_c$. (Note that the support of $\cD_c$ is finite.) We have

\begin{align*}
    d_{\alpha_c}(x,y) &= \sum_{i=1}^\tau |\alpha_c(x)_i-\alpha_c(y)_i|\\
    &= \sum_{i=1}^\tau |\sum_{t}p_t(\alpha^t(x)_i-\alpha^t(y)_i)| \\
    &\leq\sum_{t} p_t\sum_{i=1}^\tau|\alpha^t(x)-\alpha^t(y)| \\
    &= \sum_t p_t \cdot d_{\alpha^t}(x,y)\\
    &= \expec_{\alpha'\sim \cD_c}[d_{\alpha'}(x,y)].
\end{align*}

Thus we get

\begin{align*}
    d_\alpha(x,y) &= \alpha_c(x,y) + \alpha_e(x,y) \\
    &\leq \expec_{\alpha_1\sim \cD_c}[d_{\alpha_1}(x,y)] + \expec_{\alpha_2\sim \cD_e}[d_{\alpha_2}(x,y)] \\
    &\leq \expec_{\alpha'\sim \cD}[d_{\alpha'}(x,y)] \\
    &\leq O(\log k\cdot c_S)\cdot d(x,y).
\end{align*}

\textbf{Contraction:}
Now we bound contraction for each pair of nodes $x,y\in X$. We begin by considering $x,y\in K$, and we divide into two cases. 

    \begin{itemize}
        \item Case 1: $d(x,S)+d(y,S)\geq  \frac{d(x,y)}{7}$

        In this case, we show that the distance between $x$ and $y$ is at least $d(x,y)/7$ under $\alpha$. In particular, we show $d_{\alpha_e}(x,y)\geq d(x,y)/7$. 

        As $d_{\alpha_e}(x,y)=\expec_{\alpha'\sim \cD_e}[d_{\alpha'}(x,y)]$, it will be sufficient for us to show that $d_{\alpha'}(x,y)\geq d(x,y)/7$ for all $\alpha'$ in the support of $\cD_e$. 

        Let $\alpha'$ be in the support of $\cD_e$, and let $(\cK,r)$ be the partition that was sampled in the process of sampling $\alpha'$ from $\cD_e$. If $x,y\in K_i$ for some $i$, then $d_{\alpha_i}(x,y)\geq d(x,y)$ because FRT embeddings are non-contracting. Thus, $d_{\alpha'}(x,y)\geq d(x,y)$, by the criteria of perfect merge functions. 

        Otherwise, let $x\in K_i,y\in K_j$ for $i\neq j$, and let $r(x)$ and $r(y)$ be the anchor points of $x$ and $y$. We have $d_{\alpha_i}(x,r(x))\geq d(x,r(x))$ and $d_{\alpha_j}(y,r(y))\geq d(y,r(y))$ by the fact that FRT embeddings are expanding. 

        Note that because our merge function is just concatenation, under $\alpha'$, $x$ and $y$ differ on exactly the indices corresponding to $\alpha_i$ where $y$ is mapped to the same vector as $r(x)$ and those corresponding to $\alpha_j$, where $x$ is mapped to the same vector as $r(y)$. Thus, their distance under $\alpha'$ is $d_{\alpha_i}(x,r(x))+d_{\alpha_j}(y,r(y))\geq d(x,r(x))+d(y,r(y))\geq d(x,S)+d(y,S)\geq d(x,y)/7$.


        \item Case 2: $d(x,S)+d(y,S)<\frac{d(x,y)}{7}$

        In this case, we show that $x$ and $y$ are sufficiently far apart on the coordinates corresponding to $\alpha_c$. In particular, let $S_x$ be the set of points in $S$ that could be assigned as $x$'s anchor point, and let $S_y$ be the same for $y$.    
        The coordinate-wise average map for $x$  under $\cD$ must be in the convex hull of $S_x$, and likewise the point $y$ is mapped to must be in the convex hull of $S_y$. We will argue that points in the convex hull of $S_x$ and $S_y$ must be far apart.

        Let $s$ be the point in the convex hull of $S_x$ that $x$ is mapped to. Likewise, let $t$ be the point in the convex hull of $S_y$ that $y$ is mapped to. Then we are interested in $d(s,t)$. Let $a$ be any point in $S_x$ and let $b$ be any point in $S_y$. Note that $d(a,s)$ is upper bounded by the maximum distance between any pair of points in $S_x$. However, if $a,a'\in S_x$, then $d(x,a)\leq 2d(x,S)$ and $d(x,a')\leq 2d(x,S)$, so $d(a,a')\leq d(a,x)+d(x,a')\leq 4d(x,S)$. Thus, $d(s,a)\leq 4d(x,S)$. Likewise, $d(b,y)\leq 4d(y,S)$. 
        
        Additionally, we argue that $a$ and $b$ are ``far apart.'' We have $d(a,x)\leq 2d(x,S)$ and $d(b,y)\leq 2(y,S)$. Thus, by the triangle inequality $d(a,b)\geq d(x,y)-d(a,x)-d(b,y)\geq d(x,y)-2(d(x,S)+d(y,S))$.
        
        We get 

        \begin{align*}
            d(s,t) &\geq d(a,b) - d(a,s) - d(b,t) \\
            &\geq d(a,b) - 4(d(x,S)+d(y,S)) \\
            &\geq d(x,y)-6(d(x,S)+d(y,S))\\
            &> d(x,y)-\frac{6}{7}d(x,y) \\
            &= \frac{d(x,y)}{7}.
        \end{align*}

\end{itemize}

If $x,y\in S$, we have already lower bounded their distance by $1$. If $x\in K,y\in S$, we can consider a point in $K$ a distance $0$ from $y$ and note that the above analysis applies. This point must be mapped to the same point as $y$, so this implies our desired result.
\end{proof}

\subsection{Proof of \texorpdfstring{\Cref{lem:merge-hst}}{Lemma \ref*{lem:merge-hst}} }\label{sec:hstmergeproof}

In this section we prove that \Cref{alg:merge} is a perfect merge function with contraction factor $2$.

\begin{proofof}{\Cref{lem:merge-hst}}
Let $x\in Z_1$ and $y\in Z_2$ with $Z_1\cap Z_2=\set{u}$ and let $\alpha_1:Z_1\rightarrow T_1$, $\alpha_2:Z_2\rightarrow T_2$, where both $T_1$ and $T_2$ are $\beta$-HSTs with elements of $Z_1$ and $Z_2$ mapped to their leaves respectively. Let $\alpha\from \merge(\alpha_1,\alpha_2)$ be the resulting $\beta$-HST embedding from $Z_1\cup Z_2$ into $T$. 

We first argue that $T$ is indeed a $\beta$-HST. 
Consider a node at height $i$ in $T$. If this node was originally in $T_1$, then any subtree that may have been added as a subtree of this node has height exactly $i-1$ (by the fact that they are copies of subtrees of the children of $\alpha_2(u)$'s ancestor at height $i$), so this node is still at height $i$ and is still labeled properly. Now consider the case that this is a copy of a node from $T_2$. If the original was at height $i$ in $T_2$, then this copy is also at height $i$, as its subtree is identical to that in $T_2$. Therefore, the node still retains the correct label. 

We will show each of the \goodmerge\ merge criteria separately. Let $LCA(x,y)$ be the least common ancestor of $\alpha(x)$ and $\alpha(y)$ in the final merged tree $T$.

\begin{itemize}

    \item First consider the case that $x,y\in Z_1$. Note that $T$ is identical to $T_1$ except with some extra children added to some internal nodes, and $\alpha$ maps $x$ and $y$ to exactly the same leaves as under $\alpha_1$, so their distance is unchanged.

    \item Now consider $x,y\in Z_2$.  Let $i\st$ be the height of the least common ancestor of $\alpha_2(x)$ and $\alpha_2(y)$ in $T_2$. Let $S_i(x)$ be the subtree of $T_2$ of height $i$ that contains $x$. Note that by definition of $i\st$, $S_{i\st-1}(x)\neq S_{i\st-1}(y)$ (which also holds for all $i<i\st$). 

    First, consider the case that $S_{i\st-1}(x)\neq S_{i\st-1}(u)$ and  $S_{i\st-1}(y)\neq S_{i\st-1}(u)$. Then the root of $S_{i\st-1}(x)$ and root of $S_{i\st-1}(y)$ are both made children of $\alpha(u)$'s ancestor at height $i\st$. Thus, their least common ancestor is still at level $i\st$ and their distance is unchanged. 

    Next, consider the case that $S_{i\st-1}(x)=S_{i\st-1}(u)$ and $S_{i\st-1}(y)\neq S_{i\st-1}(u)$. (Note that at most one can be equal to $S_{i\st-1}(u)$ so we may assume this without loss of generality.) The first ancestor $\alpha(x)$ and $\alpha(y)$ could have in common is $\alpha(u)$'s ancestor at level $i\st$, as $\alpha(x)$'s first ancestor from $T_1$ is below this level (and thus all ancestors above this level are also from $T_1$), and $\alpha(y)$'s lowest ancestor in common with $u$ is at level $i\st$. Further, they do have this ancestor in common by construction, so the least common ancestor height under $\alpha$ is the same as under $\alpha_2$. Thus, we get $\dista(x,y)=\dista[\alpha_2](x,y)$.

    \item Finally consider the case that $x\in Z_1\setminus\set{u}$ and $y\in Z_2\setminus\set{u}$. Let $LCA(a,b)$ be the least common ancestor of $\alpha(a)$ and $\alpha(b)$ in $T$. Note that by definition of HSTs, for any set of nodes $\set{a,b,c,d}$, we have $\dista(a,b)\geq \dista(c,d)$ if and only if $LCA(a,b)$ is at or above the height of $LCA(c,d)$. 
    
    {We will show $\dista(x,y)\geq \dista(x,u)= \dista[\alpha_1](x,u)$ and $\dista(x,y)\geq \dista(y,u)= \dista[\alpha_2](y,u)$ separately.} 

    \begin{itemize}
        \item $\dista(x,y)\geq \dista(x,u)$: Note that $LCA(x,y)$ is an  original member of $T_1$, as nodes in $T_1$ only have ancestors from $T_1$. Further, by construction all ancestors of $\alpha(y)$ that are from $T_1$ are also ancestors of $\alpha(u)$. Thus, $LCA(x,y)$ must also be an ancestor of $\alpha(u)$ and it must be at or above the height of $LCA(x,u)$. 

        \item $\dista(x,y)\geq \dista(y,u)$: Note that $LCA(y,u)$ is $\alpha(y)$'s lowest height ancestor that was an original member of $T_1$, as all ancestors below that height are from a copied subtree from $T_2$. Further, since $\alpha(x)$ only has ancestors that were original members of $T_1$, $LCA(x,y)$ must be at or above the height of $LCA(y,u)$.  
    \end{itemize}
    Thus, $d_\alpha(x,y)\geq \max\set{d_\alpha(x,u),d_\alpha(u,y)}\ge \frac{d(x,u)+d(u,y)}{2}$. 
\end{itemize}
\end{proofof}

\section{Omitted proofs from Section \ref{sec:hst}\label{sec:proofoutlier}}

In this appendix, we formally show that the rounding algorithm we present for \ref{eqn:hst1} succeeds in giving the promised approximation from Theorem \ref{thm:outlier}. We also formalize the algorithm described in Section \ref{sec:hst}, and for completeness we write out the algorithms of Munagala et al. \cite{mungala2023} used in this paper.

\subsection{Bounding the optimal LP solution}

To bound the size of the final outlier set obtained by our rounding algorithm, we first need to argue that the optimal solution to this LP has value at most $k$. This is expressed in Lemma \ref{claim:lp-valid}.

\begin{lemma}
\label{claim:lp-valid}
If \ref{eqn:hst1} is parameterized by $((X,d),k,c)$ and $(X,d)$ has a probabilistic $(k,c)$-outlier embedding into HSTs, then the optimal solution to \ref{eqn:hst1} has value at most $k$.
\end{lemma}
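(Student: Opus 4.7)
The plan is to build a feasible solution to \ref{eqn:hst1} directly from the promised $(k,c)$-outlier embedding. Fix an outlier set $K \subseteq X$ with $|K| \le k$ and the corresponding probabilistic embedding $\cD_{S} : S \probembed{c} \cH$, where $S = X \setminus K$. I would first apply Theorem~\ref{thm:hst-nest} to $\cD_S$ to obtain a probabilistic extension $\cD : X \probembed{} \cH$ with expected distortion at most $4c$ on pairs from $S$ and at most $\zeta (c \log k + \log^2 k)$ on pairs involving $K$. The rest of the argument reads off LP variables from $\cD$ so that the objective equals $|K|$ and every constraint is met.

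Next I would set $\delta_i = 1$ for $i \in K$ and $\delta_i = 0$ for $i \in S$, so $\sum_i \delta_i = |K| \le k$. For the remaining variables, draw a random HST $\alpha \sim \cD$ and, for every internal node of the sampled tree, choose a canonical representative leaf (e.g., uniformly at random from the leaves of its subtree). For each $r = 2^i \in M$, define $x^r_{ij}$ as the probability (over both $\alpha$ and the representative choice) that node $i$ is the representative of the height-$i$ subtree containing $j$; define $z^r_{ijj'}$ as the probability that $j, j'$ lie in the same height-$i$ subtree and that $i$ is its representative; and define $\gamma^r_{jj'}$ as the probability that $j$ and $j'$ lie in different height-$i$ subtrees.

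With these assignments, constraints \eqref{eqn:hst3}, \eqref{eqn:hst5}, \eqref{eqn:hst8}, and \eqref{eqn:hst9} follow from event inclusion and the fact that each leaf has a unique representative at every level. Constraint \eqref{eqn:hst4} holds because whenever $j, j'$ share a height-$i$ subtree their representative witnesses the inequality; the restriction to $j, j' \in B_i(r)$ is consistent, since $\cD$ is expanding and the HST distance from the representative $i$ to each of $j, j'$ is at most $r$, forcing $d(i,j), d(i,j') \le r$. Constraint \eqref{eqn:hst7} likewise uses the expanding property: if $r < d(j,j')$ then $d_\alpha(j,j') \ge d(j,j') > r$ in every realization, so $j,j'$ must lie in different subtrees at level $i = \log r$.

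The principal step is constraint \eqref{eqn:hst2}, and this is where I expect the main work. A telescoping argument shows $\sum_{r \in M} r \cdot \gamma^r_{jj'} \le \expec_{\alpha \sim \cD}[d_\alpha(j,j')]$, using that HST distances are powers of two and $\sum_{i \le \ell-1} 2^i = 2^\ell - 1$. Applying Theorem~\ref{thm:hst-nest} then bounds the right-hand side by $4c \cdot d(j,j')$ when $j, j' \in S$, matching the RHS of \eqref{eqn:hst2} exactly for $\delta_j = \delta_{j'} = 0$; and by $\zeta (c \log k + \log^2 k) \cdot d(j,j')$ when at least one of $j, j'$ lies in $K$, which must be compared to $(4 + \zeta (\delta_j + \delta_{j'}) \log^2 k) c \cdot d(j,j')$. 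The obstacle here is checking this last inequality cleanly: one absorbs the $\zeta c \log k$ and $\zeta \log^2 k$ terms into $\zeta c \log^2 k$ using $c \ge 1$ and $\log k \le \log^2 k$, choosing the universal constant $\zeta$ in the LP large enough relative to the one supplied by Theorem~\ref{thm:hst-nest} so that the residual is dominated by the $4c$ slack. Once \eqref{eqn:hst2} is verified, the constructed vector is LP-feasible with objective $|K| \le k$, which proves the lemma.
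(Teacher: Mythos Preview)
Your proposal is correct and follows essentially the same approach as the paper: invoke Theorem~\ref{thm:hst-nest} to extend $\cD_S$ to all of $X$, set $\delta_i$ to the indicator of $i\in K$, read off $x^r_{ij}, z^r_{ijj'}, \gamma^r_{jj'}$ as the natural probabilities from the extended embedding, use the telescoping bound $\sum_{r\in M} r\,\gamma^r_{jj'}\le \expec_{\alpha}[\dista(j,j')]$ for constraint~\eqref{eqn:hst2}, and verify the remaining constraints via the expanding property. Your observation that the constant $\zeta$ in the LP may need to be taken slightly larger than the one in Theorem~\ref{thm:hst-nest} to absorb the $c\log k$ term is a detail the paper glosses over, but otherwise the two arguments are the same.
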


This lemma is true for the same reasons given in \cite{mungala2023}.\footnote{Note that in order to obtain Lemma \ref{claim:lp-valid}, we need to have that the final embedding has no contraction, even on outliers, as we need to be able to guarantee that any node $j$ is represented at level $r$ by a node $i$ in $B_j(r)$. If we allow contraction for outliers, this may not be the case, and $j$ may be represented at each level by something that is actually quite far away from it. } The primary change we have made is that $c$ is now a constant, and in finding a feasible solution to the LP, we need to assign values to the $\delta_i$ in addition to assigning the other variables as in \cite{mungala2023}. Note that if we assign $\delta_i$ to be $1$ if $i$ is an outlier, then the right hand side of constraint \ref{eqn:hst2} is an upper bound on the distortion for any pair of nodes under the nested composition of embeddings by \Cref{thm:bilip-hst}. 
Note in particular that constraints \ref{eqn:hst4} through \ref{eqn:hst7} require that the underlying embedding be {\em expanding} on all pairs, even when one or both nodes may be an outlier. Thus, bounding the optimal solution to this LP is why it was essential to find nested compositions of embeddings for HSTs, rather than just Lipschitz extensions.

\begin{proofof}{Lemma \ref{claim:lp-valid}}
Let $\cD_S$ be a probabilistic embedding on some $S\subseteq X$ into HSTs  with distortion $c$ such that $K:=|X\setminus S|$ has size at most $k$. By \Cref{thm:bilip-hst} (plus scaling up by a factor of $4$), there exists a probabilistic embedding $\cD$ that is non-contracting, has distortion at most $4c$ on pairs of nodes in $S$, and has overall distortion at most $\zeta\cdot (c\log k)$ for some constant $\zeta$. Set the variables $z_{ijj'}^r,\gamma_{jj'}^r,$ $x_{ij}^r$, and $\delta_i$ as in Table \ref{tab:variables}. (That is, $z_{ijj'}^r$ is the probability that $i$ is the ``representative'' of $j$ and $j'$ at level $r$, $\gamma_{jj'}^r$ is the probability that $j$ and $j'$ are separated at level $r$, $x_{ij}^r$ is the probability that $i$ is the representative of $j$ at level $r$, and $\delta_i$ is $1$ if $i$ is an outlier and $0$ otherwise.) By how we set the $\delta_i$ values, this solution clearly has value $k$, so we need only show it satisfies the constraints.

Each of the constraints in this LP is satisfied for essentially the same reason as in \cite{mungala2023}. We will briefly state those reasons here, paying special attention to Constraint \ref{eqn:hst2}, as we did alter this constraint from \cite{mungala2023} (though it is satisfied for essentially the same reasons as in \cite{mungala2023}).

To see why constraint \ref{eqn:hst2} is satisfied, let $\mu_{jj'}^r$ be the probability that $j$ and $j'$ are a distance exactly $r$ apart in the embedding sampled from $\cD$. Note that if $j$ and $j'$ are separated at level $r$, then their distance in that embedding is {\em at least} $r\st>r$. Thus, $\gamma_{jj'}^r$ is the probability that the embedding drawn from the distribution places $j$ and $j'$ at a distance strictly greater than $r$ apart, which means $\gamma_{jj'}^r=\sum_{r\st\in M:r\st>r}\mu_{jj'}^{r\st}$. Thus we get

\begin{align*}
\sum_{r\in M}r\cdot \gamma_{jj'}^{r} &= \sum_{r \in M}r\cdot \sum_{r\st\in M:r\st>r}\mu_{jj'}^{r\st}\\
&= \sum_{r\st\in M}\mu_{jj'}^{r\st}\sum_{r\in M: r\st>r}r \\
&\leq \sum_{r\st \in M}r\st\mu_{jj'}^{r\st} \\
&= \expec_{\alpha\sim \cD}\left[\dista(j,j') \right].
\end{align*}

The first line is by what we argued above; the second is by reordering the sums; the third is by the fact that $\sum_{r\in M:r<r\st}r\leq r\st$ (as all values in $M$ are powers of $2$); the fourth line is by definition of $\mu_{jj'}^r$ as the probability that $j$ and $j'$ are a distance exactly $r$ apart. Note that if at least one of $j,j'$ is an outlier, then  at least one of $\delta_j$ and $\delta_{j'}$ is $1$, so the right hand side of constraint \ref{eqn:hst2} is at least $\zeta\cdot c\cdot \log k\cdot d(j,j')$, which is an upper bound on the expected distance between $j$ and $j'$ under $\cD$. If neither $j$ nor $j'$ is an outlier, then the right hand side of constraint \ref{eqn:hst2} is $4c\cdot d(j,j')$, which is an upper bound on their expected distance under $\cD$.

Now we can briefly describe why the other constraints are satisfied.

\begin{enumerate}
    \item Constraint \ref{eqn:hst3} is satisfied, as the probability that both $j$ and $j'$ have $i$ as a representative at level $r$ is at most the probability that either of them individually has this representative.

    \item 
    Constraint \ref{eqn:hst4} is satisfied because the embedding is non-contracting, so every node has a representative within a distance $r$ of it at level $r$.

    \item 
    Constraint \ref{eqn:hst5} is
    because by non-contraction of the embedding, we have $x_{ij}^r=0$ if $j\notin B_i(r)$ so the left side is the probability that $j$ has some representative at level $r$, which happens with probability $1$.

    \item Constraint \ref{eqn:hst7} is satisfied by non-contraction of all embeddings, and constraints \ref{eqn:hst8} and \ref{eqn:hst9} are satisfied because all probabilities are in the range $[0,1]$. 
                           
\end{enumerate}
\end{proofof}

\subsection{Rounding the LP \label{sec:lproundproof}}

We formally describe the rounding algorithm of Section \ref{sec:hst} in Algorithm \ref{alg:round}. This algorithm largely relies on the algorithms of \cite{mungala2023} (presented here as Algorithms \ref{alg:alg1hst} and \ref{alg:alg2hst}), but it takes the extra step of excluding from the domain of the output embedding all nodes in $X$ with optimal $\delta_j$ value above some threshold $\delta\st$. These excluded nodes form the outlier set for this embedding. Note that the outlier set is deterministic, as it is decided by a simple thresholding of the deterministic LP solution.

\begin{algorithm}
\caption{Algorithm for approximating outlier sets for HST embeddings}
\label{alg:round}
\textbf{Input:} Metric space $(X,d)$, target distortion $c$, approximation factor $\epsilon>0$ \\
\textbf{Output:} An embedding $\alpha:S\rightarrow T$ for some HST $T$ and some $S\subseteq X$
\begin{algorithmic}[1]
\For{$k=1$ to $n$}
\State Let $LP_k$ be \ref{eqn:hst1} for $(X,d),c,k$ \Comment{{\scriptsize\change{For weighted outlier sets, use  $\min \sum_{i\in X}w_i\delta_i$ as the objective}}}
\State Find an optimal solution to $LP_k$ of value $v_k$
\State Use the rounding algorithm of \cite{mungala2023} (Algorithms \ref{alg:alg1hst} and \ref{alg:alg2hst}) to sample an embedding $\alpha_k:X\rightarrow T$ for HST $T$
\State $\delta\st\leftarrow \frac{\epsilon}{16\zeta\cdot \log k}$
\State $K_k\leftarrow \set{j\in X:\delta_j\geq \delta\st\text{ in the optimal LP solution}}$
\EndFor
\State Let $k\st$ be the smallest  value of $k$ with $v_k\leq k$ \Comment{{\scriptsize\change{For weighted outlier sets, let $k\st$ minimize $v_k\log k$}}}
\State Return $\alpha_{k\st}$ and $K_{k\st}$
\end{algorithmic}
\end{algorithm}

\begin{algorithm}
\caption{(Algorithm 1 from \cite{mungala2023}): Random selection of partitions for the nodes in $X$}
\label{alg:alg1hst}
\begin{algorithmic}[1]
\For{$r\in M$ in decreasing order}
\State $S\leftarrow X$; $P_i^r\leftarrow \emptyset$ for all $i\in  X$
\While{$S\neq \emptyset$}
\State Choose a center $i\in  X$ uniformly at random independent of past choices 
\State Choose $\ell_i^r\in [0,1]$ uniformly at random independently of past choices
\State For each $j\in S\cap B_i(r)$, if $x_{ij}^r\geq \ell_i^r$, assign $j$ to $P_i^r$ and remove $j$ from $S$.
\EndWhile
\EndFor
\end{algorithmic}
\end{algorithm}

\begin{algorithm}
\caption{(Algorithm 2 from \cite{mungala2023}): Embedding $(X,d)$ (with diameter $\Delta$) into an HST given randomly chosen partitions $P_i^r$ from Algorithm \ref{alg:alg1hst}}
\label{alg:alg2hst}
\begin{algorithmic}[1]
\State Place a root node $w$ at the highest level with $\levelval_w\leftarrow2\cdot 2^{\lfloor \log \Delta \rfloor}$ and set $S_w\leftarrow X$
\For{$r\in M$ in decreasing order}
\For{each node $w$ at the previous (parent) level with set $S_w$}
\For{each $i$ with $P_i^r\cap S_w\neq \emptyset$}
\State Place a child node $v$ with set $S_v\leftarrow P_i^r\cap S_w$ and $\levelval_v\leftarrow 2r$
\EndFor
\EndFor
\EndFor
\end{algorithmic}
\end{algorithm}

In order to show Algorithm \ref{alg:round} does in fact sample from a probabilistic $(O(\frac{k}{\epsilon}\log^2k),(32+\epsilon)c)$-outlier embedding, we will reference the following lemmas from \cite{mungala2023}.

\begin{lemma}[counterpart of Lemma 10 from \cite{mungala2023}]\label{lem:mstlem10}
The embedding produced by the rounding Algorithms \ref{alg:alg1hst} and \ref{alg:alg2hst} is non-contracting. 
\end{lemma}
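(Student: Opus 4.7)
My plan is to prove non-contraction by tracing the HST construction level by level. The key structural fact driving everything is an observation about Algorithm~\ref{alg:alg1hst}: whenever two points $j, j'$ are placed in the same cluster $P_i^r$, both must lie in $B_i(r)$ (since the algorithm only considers elements of $S\cap B_i(r)$ when assigning to $P_i^r$), so the triangle inequality yields $d(j, j') \leq 2r$. I would extract this as a single-line first observation; equivalently, $d(j,j') > 2r$ forces $j$ and $j'$ into distinct clusters at level $r$.

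Given an arbitrary pair $u, v \in X$ with $d(u, v) = D$, I would then identify the largest $r_0 \in M$ with $2r_0 < D$. By the observation above, $u$ and $v$ cannot share any $P_i^{r_0}$, so in Algorithm~\ref{alg:alg2hst} they must be placed in distinct tree nodes at the level whose label is $\eta = 2r_0$. Consequently their least common ancestor $z$ sits strictly above that level, and because the tree is a $2$-HST the next level up has label $\eta = 4r_0$, giving $\dista(u, v) = \eta_z \geq 4r_0$. Since $r_0$ is the largest power of two less than $D/2$, the next power of two satisfies $2r_0 \geq D/2$, hence $\eta_z \geq 4r_0 \geq D$, yielding the desired non-contraction.

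The remaining thing to check is the degenerate case where no such $r_0$ exists (i.e., the minimum element $1 \in M$ already satisfies $2 \cdot 1 \geq D$, so $D \leq 2$). Here I would fall back on the fact that the lowest level produced by Algorithm~\ref{alg:alg2hst} already carries $\eta = 2 \geq D$, so any LCA has $\eta \geq 2 \geq D$ and non-contraction holds trivially (leveraging the paper's normalization that the minimum pairwise distance is at least $1$, so that we need not worry about $D < 1$).

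The only obstacle I anticipate is purely bookkeeping: keeping the $\eta$ labels produced by Algorithm~\ref{alg:alg2hst} lined up with the $2r$ convention, and confirming that two points placed in distinct nodes at level $r$ indeed have their LCA at a node whose $\eta$ is at least one doubling higher. There is no deep step---the whole lemma reduces to ``balls of radius $r$ enforce pairwise distance at most $2r$'' composed with the doubling level structure of a $2$-HST, exactly mirroring the corresponding argument in \cite{mungala2023}.
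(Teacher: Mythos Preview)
Your proposal is correct and takes essentially the same approach as the paper. Both arguments rest on the single observation that $j,j'\in P_i^r$ forces $d(j,j')\le 2r$ via the triangle inequality, combined with the $\eta_v=2r$ labeling in Algorithm~\ref{alg:alg2hst}; the paper phrases it by looking directly at the LCA level and noting $d_\alpha(j,j')=2r\ge d(j,j')$, while you phrase it contrapositively by locating the highest level at which separation is forced and pushing the LCA one level up, but the content is identical.
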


\begin{lemma}[counterpart of Lemma 11 from \cite{mungala2023}]\label{lem:mstlem11}
If $\delta_j$ is the optimal LP value of the variable $\delta_j$ and if $\alpha$ is sampled by Algorithms \ref{alg:alg1hst} and \ref{alg:alg2hst}, we have $\expec_\alpha[\dista(j,j')]\leq 8\cdot (4+\zeta\cdot \log k\cdot (\delta_j+\delta_{j'}))\cdot c\cdot d(j,j')$
\end{lemma}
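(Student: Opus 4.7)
The plan is to mirror the proof of Lemma 11 in \cite{mungala2023} almost verbatim, since the rounding algorithms (Algorithms~\ref{alg:alg1hst} and \ref{alg:alg2hst}) and the LP variables $x_{ij}^r, z_{ijj'}^r, \gamma_{jj'}^r$ are unchanged from Munagala et al.\ -- the only new ingredient is that the right-hand side of constraint~\eqref{eqn:hst2} now has an extra factor depending on $\delta_j + \delta_{j'}$. So the strategy is: reduce the expected HST distance to an expression in $\sum_r r\gamma_{jj'}^r$ using the original analysis, then apply our modified constraint to convert that sum into $(4 + \zeta(\delta_j+\delta_{j'})\log^2 k)\cdot c\cdot d(j,j')$.

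First, I would decompose the expected HST distance by level. Because the HST returned by Algorithm~\ref{alg:alg2hst} assigns label $\levelval_v = 2r$ to the nodes created at stage $r \in M$, and because $\beta=2$-HST labels double from child to parent, one can show by a geometric-series argument that
\begin{equation*}
\dista(j,j') \;\le\; 4\sum_{r\in M} r\cdot \mathbf{1}\bigl[j,j'\text{ lie in different parts at level }r\bigr].
\end{equation*}
Taking expectations over $\alpha$ gives
\begin{equation*}
\expec_\alpha[\dista(j,j')] \;\le\; 4\sum_{r\in M} r\cdot \Pr\bigl[j,j'\text{ separated at level }r\bigr].
\end{equation*}

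Second, I would invoke (and if needed re-derive) the pivotal probabilistic bound from \cite{mungala2023}: the randomized centre-and-threshold procedure in Algorithm~\ref{alg:alg1hst} satisfies
\begin{equation*}
\Pr\bigl[j,j'\text{ separated at level }r\bigr] \;\le\; 2\gamma_{jj'}^r.
\end{equation*}
This follows by a coupling argument using LP constraints \eqref{eqn:hst3}, \eqref{eqn:hst4} and \eqref{eqn:hst5}: with probability at least $1-\gamma_{jj'}^r$ the two points share a common representative at level $r$ (by \eqref{eqn:hst4}), and the factor of $2$ absorbs the loss incurred by selecting the representative via the uniform threshold $\ell_i^r$ together with \eqref{eqn:hst3}. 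This step is the only place the internals of the rounding algorithm matter, and none of our modifications affect it -- since $\delta_i$ does not appear in Algorithms~\ref{alg:alg1hst}--\ref{alg:alg2hst}, the original argument goes through unchanged.

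Finally, I would apply our modified constraint~\eqref{eqn:hst2} to close out:
\begin{equation*}
\expec_\alpha[\dista(j,j')] \;\le\; 8\sum_{r\in M} r\cdot \gamma_{jj'}^r \;\le\; 8\bigl(4+\zeta(\delta_j+\delta_{j'})\log^2 k\bigr)\cdot c\cdot d(j,j').
\end{equation*}
The main obstacle is purely bookkeeping: verifying that the geometric-series bound on $\dista(j,j')$ really yields the factor $4$ (rather than some other constant), and that the two invocations of LP constraints in Step~2 combine to exactly the factor $2$. If either of these constants differs slightly, the leading coefficient $8$ in the lemma statement would need to be adjusted accordingly, but the proof structure is unchanged. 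There is no new conceptual difficulty introduced by the outlier variables; their role is entirely confined to the right-hand side of~\eqref{eqn:hst2}.
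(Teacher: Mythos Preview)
Your proposal is correct and follows essentially the same approach as the paper: both arguments observe that Algorithms~\ref{alg:alg1hst}--\ref{alg:alg2hst} and constraints~\eqref{eqn:hst3}--\eqref{eqn:hst9} are unchanged from \cite{mungala2023}, so their bound $\expec_\alpha[\dista(j,j')]\leq 8\sum_{r\in M} r\cdot\gamma_{jj'}^r$ carries over verbatim, and then the modified constraint~\eqref{eqn:hst2} is applied to finish. The paper's proof is terser in that it simply cites \cite{mungala2023} for the intermediate inequality rather than sketching how the constants $4$ and $2$ arise, but the logic is identical.
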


Note that Lemma \ref{lem:mstlem10} can be proven identically to \cite{mungala2023}, as the relevant algorithms and constraints involved in that proof are unchanged, but we briefly outline their reasoning here.

\begin{proofof}{Lemma \ref{lem:mstlem10}}
As Algorithms \ref{alg:alg1hst} and \ref{alg:alg2hst} are unchanged from \cite{mungala2023}, we can use the same reasoning. In particular, for any pair of nodes $j,j'$, if $j$ and $j'$ are in the same part of the partition at the level corresponding to $r$, then there is some $i$ that is a distance at most $r$ from each of them. This means that the first level at which they share a parent is the level corresponding to $r$, and since we double all distances at the end of Algorithm \ref{alg:alg2hst} by scaling everything up by $2$, this implies non-contraction.
\end{proofof}

\begin{proofof}{Lemma \ref{lem:mstlem11}}
Lemma 11 of \cite{mungala2023} stated that $\expec_\alpha[\dista(j,j')]\leq 8q$, where $q$ is the optimal LP solution value for their distortion variable. (Recall that for us, the distortion variable is replaced with a constant and we focus on our new outlier variables.) To prove this, they used only constraints \ref{eqn:hst3} through \ref{eqn:hst9} to show that $\expec_\alpha[\dista(j,j')]\leq 8\cdot \sum_{r\in M}r\cdot \gamma_{jj'}^r$ and then applied their version of constraint \ref{eqn:hst2} to get the final result. As constraints \ref{eqn:hst3} through \ref{eqn:hst9}  and the rounding algorithm remain unchanged in our algorithm, the statement $\expec_\alpha[\dista(j,j')]\leq 8\cdot \sum_{r\in M}r\cdot \gamma_{jj'}^r$ still holds. Thus, by our slightly modified version of constraint \ref{eqn:hst2}, we  get $\expec_\alpha[\dista(j,j')]\leq 8\cdot \sum_{r\in M}r\cdot \gamma_{jj'}^r\leq 8\cdot (4+\zeta\cdot \log k\cdot (\delta_j+\delta_{j'}))\cdot c\cdot d(j,j')$, as stated in the lemma. 
\end{proofof}

Finally we can prove Theorem \ref{thm:outlier}, which says that we can efficiently sample from an $(O(\frac{k}{\epsilon}\log k),(32+\epsilon)c)$ embedding. 

\begin{proofof}{Theorem \ref{thm:outlier}}
We know from Lemma \ref{lem:mstlem10} that the sampled embedding from Algorithm \ref{alg:round} is non-contracting. To consider its expected expansion, we need only consider pairs of nodes $j,j'$ with $\delta_j,\delta_{j'}\leq \delta\st$. Then by Lemma \ref{lem:mstlem11}, the expected distance between such a $j$ and $j'$ is at most 

\begin{align*}
    (32+8\zeta\cdot \log k\cdot 2\delta\st)\cdot c\cdot d(j,j') \leq (32+\epsilon)\cdot c\cdot d(j,j'),
\end{align*}

by the choice of $\delta\st$ in Algorithm \ref{alg:round}. 

Additionally, note that the set of outliers produced by Algorithm \ref{alg:round} is deterministic and has size at most $k/\delta\st=O(\frac{k}{\epsilon}\log k)$, as Lemma \ref{claim:lp-valid} tells us that the optimal solution to \ref{eqn:hst1} has value at most $k$.
\end{proofof}

\subsubsection{A note on the runtime}

Notably, Algorithm \ref{alg:alg1hst} (Algorithm 1 from \cite{mungala2023}) runs in {\em expected} polynomial time, not deterministic polynomial time. Following the same reasoning as \cite{mungala2023}, we can select some additional error term $\epsilon'$ and a sufficiently large constant $c'$ and terminate the while loop after at most $c'\cdot n\ln\frac{\Delta}{\epsilon'}$ steps. If the algorithm would have otherwise exceeded that number of steps, we just output an HST in which all distances are $2^{\lceil \log\Delta\rceil}$. Using sufficient concentration bounds, the rounding scheme incurs a distortion at most $\epsilon'$ higher than in the original form. If we want the distortion of the final embedding to be at most $(32+\epsilon)c$, we can select $\epsilon/2$ for $\epsilon'$ in this section and for the approximation factor in choosing a threshold $\delta\st$. Note that because of this choice, the runtime of the algorithm depends on $1/\epsilon$ and $\Delta$ (specifically growing with $\ln(\frac{\Delta}{\epsilon})$), unlike the $\ell_2$ outlier embedding algorithm in \cite{chawla2024}.

\subsection{Weighted outlier embeddings \label{sec:weighted}}

In the weighted outlier set problem, we are given $\cX=(X,d)$ and $c$, as well as a set of weights $w$ on the nodes, and our goal is to find a subset $K\subseteq X$ such that $X\setminus K \probembed{c}\cH$ and $w(K)$ is minimized.
So far we have considered the unweighted version of this problem, in which $w(i)=1$ for all $i\in X$. In this subsection, we will discuss the slight changes needed to apply the results of Section \ref{sec:hst} to the weighted outlier setting and thus obtain a bicriteria approximation for this problem.

The weights of the nodes affect the cost of leaving those nodes out, but they have no effect on how good the resulting embedding may be. This means that if we attempt to modify \ref{eqn:hst1} for the weighted setting, the weights should only influence the objective function.  
To solve this problem, we will use Algorithm \ref{alg:round} with the changes noted in red for weighted outlier embeddings.
In the weighted setting, we are not necessarily concerned with simply minimizing the outlier set size, so intuitively we should select the LP solution corresponding to the $k$ that minimizes the objective value function for that LP. However, in order to account for the extra $\log k$ approximation factor incurred by our algorithm, we will instead look for a value of $k$ that minimizes the objective value of its LP solution, times $\log k$. The necessary changes are noted as  comments in Algorithm \ref{alg:round}. Below we give the proof of Corollary \ref{thm:weighted}.

\begin{proofof}{Corollary \ref{thm:weighted}}
Let $K$ be an outlier set of $\cX$ that has size at most $\sminoutset{c}$, has weight at most $\sminoutval{c}$, and has $X\setminus K\probembed{c}\cH$. Define $k:=|K|$.

Let $k\st$ be the value chosen by the algorithm to minimize $v_{k'}\log{k'}$. Then
\begin{align*}
    v_{k\st}\cdot \log(k\st) &\leq v_k\cdot \log k \\
    &\leq \sminoutval{c} \cdot \log \sminoutset{c},
\end{align*}

where the first line is by definition of $k\st$ and the second line is by the fact that $\delta_i=\begin{cases}
    1 & i\in K \\
    0 & else
\end{cases}$ is part of a valid solution to \ref{eqn:hst1} with parameter $k$, and this solution has objective value at most $\sminoutval{c}$. (Note that this is part of a valid solution because only the objective function has changed, so all arguments with respect to feasibility of such a solution still go  through.) 

Now we consider the rounding step of the algorithm. We know that by selecting all $i$ with $\delta_i\geq \frac{\epsilon}{\zeta\cdot \log k\st}$ in the LP solution corresponding to $k\st$, we obtain a solution with at most $(32+\epsilon)c$ distortion on the non-outliers (again following identical reasoning to that in the unweighted case). Thus, we are just left with bounding the weight of the outlier set. We have that the total cost is at most $v_{k\st}\cdot \frac{\zeta \cdot \log(k\st)}{\epsilon}$ as a result of our rounding, as $v_{k\st}$ is the value of the objective function under this solution. This implies an outlier set of weight at most $O(\frac{1}{\epsilon}\cdot \sminoutval{c} \cdot \log \sminoutset{c})$.
\end{proofof}

\section{More applications\label{sec:extra-app}}

In this appendix, we discuss some interesting applications of probabilistic outlier embeddings into HSTs. In particular, we start by elaborating on our discussion of the MCCT problem from Section \ref{sec:applications}. Then we discuss the existence of metrics with small outlier sets for probabilistic HST embeddings and two other problems (buy-at-bulk and dial-a-ride) whose best-known solutions can be improved for certain inputs using the techniques from this paper.


\subsection{MCCT with outliers}


In Section \ref{sec:applications}, we argued that we can use the outlier embeddings algorithm from this paper to obtain a solution with at most $O((1+\epsilon)c_k(\cX)(OPT_k))$ cost and $O(\frac{k}{\epsilon}\log k)$ outliers, where $c_k(\cX)$ is the minimum distortion for embedding $\cX$ into HSTs with at most $k$ outliers and $OPT_k$ is the cost of the optimal solution that is allowed to select $k$ outliers. We will discuss this in additional detail here.

In particular, assume that an optimal algorithm for the $k$-outlier MCCT problem removes a subset $K\st\subseteq X$. Let $S=X\setminus K\st$. The cost of this algorithm, which is $OPT_k$, is at least $\sum_{i,j\in S}r_{ij}d(i,j)$. 
Let $OPT_{\alpha,k}$ be the cost of the optimal $k$-outlier solution on the metric $d_\alpha$ (i.e. the optimal solution after the embedding $\alpha$ is applied).
We get that if $\cD$ is a probabilistic embedding with expected distortion at most $c$, then $\expec_{\alpha\sim \cD}[OPT_{\alpha,k}]\leq E_\alpha[\sum_{i,j\in S}r_{ij}d_\alpha(i,j)]\leq c\sum_{i,j\in S}r_{ij}d(i,j)\leq 8c\cdot OPT_k$, where we used the fact that the optimal cost for the $k$-outlier problem on $d_\alpha$ is bounded by the cost of embedding all points but $K\st$ into a tree, and we applied the Steiner point removals results of Gupta \cite{gupta2001steiner}. Consider probabilistically embedding a given metric into trees using $\cD$ and then applying the outlier approximation algorithm from Claim  \ref{claim:mcct-out}. For an input metric $d_\alpha$, this algorithm produces a solution with at most $3k$ outliers and cost at most $3OPT_{\alpha,k}$. Thus, the expected cost of this solution is at most $\expec_{\alpha\sim \cD}[3OPT_{\alpha,k}]\leq 24c\cdot OPT_k$, where $c$ is the distortion of embedding $\cD$. This is enough to obtain the results we claimed in Section \ref{sec:applications}.

The outlier embedding results from this paper let us improve the factor $c$ that we can obtain for such an approximation if we are willing to increase the number of outliers. Note also that because $\epsilon$ is a tunable parameter in our algorithm, it does not necessarily need to be a constant, so if we are more interested in minimizing the number of outliers, we can also obtain a solution of cost at most $O(\log k \cdot c_k(\cX)\cdot OPT_k)$ with only $O(k)$ outliers.

\subsection{Some metrics have small outlier sets}
To motivate this work, we may ask: do there exist  metrics for which removing a small number of points results in a significantly better probabilistic embedding into HSTs?  In this subsection, we discuss types of metrics for which this is the case. In particular, consider the distance metric on an unweighted graph obtained by ``linking'' two unweighted graphs through an edge, the first graph being an expander on $n$ nodes and the second being the complete graph on $n-\log n$ nodes.  Bartal \cite{bartal1996} shows that, for any $x$, certain expander graphs on $x$ nodes incur a distortion $\Omega(\log x)$ when probabilistically embedding into HSTs. Thus, the graph we have described requires $\Omega(\log \log n)$ distortion  for HST embeddings (inheriting the distortion bound of the expander subgraph), whereas the same graph with the expander removed can be embedded with $O(1)$ distortion.

This graph is highly specific and the ``good'' and ``bad'' parts of the graph are easy to spot. We will use this graph as an example in the coming subsections because its simplicity makes it easy to analyze, but we also note that the class of metrics whose optimal distortion can be improved by removing a small number of vertices is much more diverse than this simple example.

To demonstrate a large family of metrics with good outlier embeddings, we will turn to the notion of {\em metric composition} as defined by Bartal et al. \cite{bartal2003}. 

\begin{definition}[Metric composition \cite{bartal2003}]
Let $(M,d_M)$ be a metric space with minimum distance exactly $1$, and let $\cN=\set{(N_x,d_{N_x})}_{x\in M}$ be a set of metric spaces. Let $\Delta := \max_{x\in M}\set{diameter(N_x)}$. Then the $\beta$-composition of $M$ and $\set{N_x}$ denoted $M\st=M_\beta[\cN]$ is a metric on the vertex set $\set{(x,u):x\in M,u\in N_x}$ such that distances in $M\st$ are
\begin{align*}
    d_{M\st}((x,u),(y,v)) := \begin{cases}
        d_{N_x}(u,v) & x=y \\
        \beta\cdot \Delta\cdot  d_M(x,y) & else
    \end{cases}.
\end{align*} 
\end{definition}

We provide a visualization of this composition in Figure \ref{fig:compose}. Bartal et al. 
\cite{bartal2003} show that for any $M$ and $\cN$, $M_\beta[\cN]$ defines a proper metric if $\beta\geq 1/2$. While it's easy to factor a composition back into its original components if $\beta$ is quite large, it's less clear how to do so when $\beta=1/2$.

\begin{figure}
    \centering
    \tikzset{every picture/.style={line width=0.75pt}} 

\begin{tikzpicture}[x=0.75pt,y=0.75pt,yscale=-1,xscale=1]

\draw   (203.28,186.75) .. controls (202.64,181.92) and (204.75,177.13) .. (208.73,174.42) .. controls (212.7,171.71) and (217.85,171.56) .. (221.97,174.03) .. controls (223.43,171.21) and (226.1,169.27) .. (229.18,168.79) .. controls (232.26,168.3) and (235.38,169.33) .. (237.6,171.57) .. controls (238.84,169.02) and (241.29,167.31) .. (244.06,167.04) .. controls (246.84,166.77) and (249.55,167.98) .. (251.24,170.25) .. controls (253.49,167.55) and (257.07,166.41) .. (260.42,167.33) .. controls (263.78,168.25) and (266.32,171.06) .. (266.93,174.55) .. controls (269.69,175.31) and (271.98,177.27) .. (273.22,179.9) .. controls (274.47,182.53) and (274.53,185.58) .. (273.41,188.27) .. controls (276.12,191.88) and (276.75,196.68) .. (275.07,200.89) .. controls (273.39,205.11) and (269.65,208.09) .. (265.24,208.74) .. controls (265.21,212.69) and (263.09,216.32) .. (259.7,218.22) .. controls (256.3,220.12) and (252.16,220.01) .. (248.88,217.91) .. controls (247.48,222.65) and (243.54,226.13) .. (238.76,226.86) .. controls (233.99,227.59) and (229.23,225.43) .. (226.55,221.32) .. controls (223.26,223.34) and (219.31,223.93) .. (215.59,222.94) .. controls (211.88,221.95) and (208.71,219.46) .. (206.8,216.05) .. controls (203.44,216.45) and (200.19,214.67) .. (198.67,211.59) .. controls (197.14,208.51) and (197.66,204.79) .. (199.98,202.27) .. controls (196.98,200.47) and (195.45,196.89) .. (196.18,193.4) .. controls (196.92,189.92) and (199.76,187.31) .. (203.21,186.94) ; \draw   (199.98,202.27) .. controls (201.39,203.13) and (203.03,203.51) .. (204.66,203.38)(206.8,216.05) .. controls (207.51,215.96) and (208.2,215.79) .. (208.85,215.52)(226.55,221.32) .. controls (226.05,220.56) and (225.64,219.75) .. (225.31,218.9)(248.88,217.91) .. controls (249.13,217.05) and (249.3,216.16) .. (249.37,215.26)(265.24,208.74) .. controls (265.28,204.53) and (262.94,200.67) .. (259.23,198.83)(273.41,188.27) .. controls (272.81,189.7) and (271.89,190.97) .. (270.73,191.98)(266.93,174.55) .. controls (267.04,175.12) and (267.08,175.71) .. (267.07,176.3)(251.24,170.25) .. controls (250.68,170.92) and (250.22,171.68) .. (249.87,172.49)(237.6,171.57) .. controls (237.3,172.18) and (237.08,172.83) .. (236.93,173.5)(221.97,174.03) .. controls (222.84,174.55) and (223.65,175.18) .. (224.37,175.9)(203.28,186.75) .. controls (203.37,187.42) and (203.51,188.08) .. (203.7,188.72) ;
\draw   (297.28,46.75) .. controls (296.64,41.92) and (298.75,37.13) .. (302.73,34.42) .. controls (306.7,31.71) and (311.85,31.56) .. (315.97,34.03) .. controls (317.43,31.21) and (320.1,29.27) .. (323.18,28.79) .. controls (326.26,28.3) and (329.38,29.33) .. (331.6,31.57) .. controls (332.84,29.02) and (335.29,27.31) .. (338.06,27.04) .. controls (340.84,26.77) and (343.55,27.98) .. (345.24,30.25) .. controls (347.49,27.55) and (351.07,26.41) .. (354.42,27.33) .. controls (357.78,28.25) and (360.32,31.06) .. (360.93,34.55) .. controls (363.69,35.31) and (365.98,37.27) .. (367.22,39.9) .. controls (368.47,42.53) and (368.53,45.58) .. (367.41,48.27) .. controls (370.12,51.88) and (370.75,56.68) .. (369.07,60.89) .. controls (367.39,65.11) and (363.65,68.09) .. (359.24,68.74) .. controls (359.21,72.69) and (357.09,76.32) .. (353.7,78.22) .. controls (350.3,80.12) and (346.16,80.01) .. (342.88,77.91) .. controls (341.48,82.65) and (337.54,86.13) .. (332.76,86.86) .. controls (327.99,87.59) and (323.23,85.43) .. (320.55,81.32) .. controls (317.26,83.34) and (313.31,83.93) .. (309.59,82.94) .. controls (305.88,81.95) and (302.71,79.46) .. (300.8,76.05) .. controls (297.44,76.45) and (294.19,74.67) .. (292.67,71.59) .. controls (291.14,68.51) and (291.66,64.79) .. (293.98,62.27) .. controls (290.98,60.47) and (289.45,56.89) .. (290.18,53.4) .. controls (290.92,49.92) and (293.76,47.31) .. (297.21,46.94) ; \draw   (293.98,62.27) .. controls (295.39,63.13) and (297.03,63.51) .. (298.66,63.38)(300.8,76.05) .. controls (301.51,75.96) and (302.2,75.79) .. (302.85,75.52)(320.55,81.32) .. controls (320.05,80.56) and (319.64,79.75) .. (319.31,78.9)(342.88,77.91) .. controls (343.13,77.05) and (343.3,76.16) .. (343.37,75.26)(359.24,68.74) .. controls (359.28,64.53) and (356.94,60.67) .. (353.23,58.83)(367.41,48.27) .. controls (366.81,49.7) and (365.89,50.97) .. (364.73,51.98)(360.93,34.55) .. controls (361.04,35.12) and (361.08,35.71) .. (361.07,36.3)(345.24,30.25) .. controls (344.68,30.92) and (344.22,31.68) .. (343.87,32.49)(331.6,31.57) .. controls (331.3,32.18) and (331.08,32.83) .. (330.93,33.5)(315.97,34.03) .. controls (316.84,34.55) and (317.65,35.18) .. (318.37,35.9)(297.28,46.75) .. controls (297.37,47.42) and (297.51,48.08) .. (297.7,48.72) ;
\draw   (404.28,187.75) .. controls (403.64,182.92) and (405.75,178.13) .. (409.73,175.42) .. controls (413.7,172.71) and (418.85,172.56) .. (422.97,175.03) .. controls (424.43,172.21) and (427.1,170.27) .. (430.18,169.79) .. controls (433.26,169.3) and (436.38,170.33) .. (438.6,172.57) .. controls (439.84,170.02) and (442.29,168.31) .. (445.06,168.04) .. controls (447.84,167.77) and (450.55,168.98) .. (452.24,171.25) .. controls (454.49,168.55) and (458.07,167.41) .. (461.42,168.33) .. controls (464.78,169.25) and (467.32,172.06) .. (467.93,175.55) .. controls (470.69,176.31) and (472.98,178.27) .. (474.22,180.9) .. controls (475.47,183.53) and (475.53,186.58) .. (474.41,189.27) .. controls (477.12,192.88) and (477.75,197.68) .. (476.07,201.89) .. controls (474.39,206.11) and (470.65,209.09) .. (466.24,209.74) .. controls (466.21,213.69) and (464.09,217.32) .. (460.7,219.22) .. controls (457.3,221.12) and (453.16,221.01) .. (449.88,218.91) .. controls (448.48,223.65) and (444.54,227.13) .. (439.76,227.86) .. controls (434.99,228.59) and (430.23,226.43) .. (427.55,222.32) .. controls (424.26,224.34) and (420.31,224.93) .. (416.59,223.94) .. controls (412.88,222.95) and (409.71,220.46) .. (407.8,217.05) .. controls (404.44,217.45) and (401.19,215.67) .. (399.67,212.59) .. controls (398.14,209.51) and (398.66,205.79) .. (400.98,203.27) .. controls (397.98,201.47) and (396.45,197.89) .. (397.18,194.4) .. controls (397.92,190.92) and (400.76,188.31) .. (404.21,187.94) ; \draw   (400.98,203.27) .. controls (402.39,204.13) and (404.03,204.51) .. (405.66,204.38)(407.8,217.05) .. controls (408.51,216.96) and (409.2,216.79) .. (409.85,216.52)(427.55,222.32) .. controls (427.05,221.56) and (426.64,220.75) .. (426.31,219.9)(449.88,218.91) .. controls (450.13,218.05) and (450.3,217.16) .. (450.37,216.26)(466.24,209.74) .. controls (466.28,205.53) and (463.94,201.67) .. (460.23,199.83)(474.41,189.27) .. controls (473.81,190.7) and (472.89,191.97) .. (471.73,192.98)(467.93,175.55) .. controls (468.04,176.12) and (468.08,176.71) .. (468.07,177.3)(452.24,171.25) .. controls (451.68,171.92) and (451.22,172.68) .. (450.87,173.49)(438.6,172.57) .. controls (438.3,173.18) and (438.08,173.83) .. (437.93,174.5)(422.97,175.03) .. controls (423.84,175.55) and (424.65,176.18) .. (425.37,176.9)(404.28,187.75) .. controls (404.37,188.42) and (404.51,189.08) .. (404.7,189.72) ;
\draw   (257.33,149.3) -- (247.28,124.7) -- (256.22,129.17) -- (271.89,97.87) -- (262.95,93.4) -- (288.67,86.7) -- (298.72,111.3) -- (289.78,106.83) -- (274.11,138.13) -- (283.05,142.6) -- cycle ;
\draw   (297.02,192.29) -- (313.77,171.66) -- (314.14,181.65) -- (349.12,180.36) -- (348.75,170.37) -- (366.98,189.71) -- (350.23,210.34) -- (349.86,200.35) -- (314.88,201.64) -- (315.25,211.63) -- cycle ;
\draw   (372.45,95.23) -- (399.02,95.73) -- (391.37,102.17) -- (413.92,128.94) -- (421.57,122.5) -- (417.55,148.77) -- (390.98,148.27) -- (398.63,141.83) -- (376.08,115.06) -- (368.43,121.5) -- cycle ;

\draw (220,181) node [anchor=north west][inner sep=0.75pt]  [font=\Large] [align=left] {$\displaystyle N_{x}$};
\draw (312,43) node [anchor=north west][inner sep=0.75pt]  [font=\Large] [align=left] {$\displaystyle N_{y}$};
\draw (428,183) node [anchor=north west][inner sep=0.75pt]  [font=\Large] [align=left] {$\displaystyle N_{z}$};
\draw (174,100) node [anchor=north west][inner sep=0.75pt]   [align=left] {$\displaystyle \beta \Delta \cdot d_{M}( x,y)$};
\draw (419,96) node [anchor=north west][inner sep=0.75pt]   [align=left] {$\displaystyle \beta \Delta \cdot d_{M}( y,z)$};
\draw (290,224) node [anchor=north west][inner sep=0.75pt]   [align=left] {$\displaystyle \beta \Delta \cdot d_{M}( x,z)$};
\draw (411,31) node [anchor=north west][inner sep=0.75pt]   [align=left] {$\displaystyle \Delta :=\max_{w\in \{x,y,z\}}( diameter( N_{w}))$};

\end{tikzpicture}
    \caption{A visual example of metric composition where $M$ is some metric on the set $\set{x,y,z}$.}
    \label{fig:compose}
\end{figure}

The highest distortion required to embed a metric in $\set{M}\cup \cN$ into HSTs is a lower bound on the minimum distortion to embed $M_\beta[\cN]$ into HSTs, and careful composition reveals that in fact this quantity is an upper bound as well (up to constant factors).

\subsection{Buy-at-bulk\label{sec:buy}}

In the (uniform) buy-at-bulk problem,\footnote{We assume the metric version of the problem rather than the graph version, as Awerbuch et al. \cite{awerbuch1997} note that they are equivalent.} we are given a metric space $\cX$ 
as well as a set of demand pairs $(s_i,t_i)$, each with a demand $d_i$ and a sub-additive cost function $g$. (We will assume wlog that $g(1)=1$, as this is always possible by rescaling.) The goal is to select a single path $P_i$ for each demand pair in a way that minimizes $\sum_{e}\ell_eg(\sum_{i:e\in P_i}d_i)$, where $\ell_e$ is the length of edge $e$ and all the sum is over all edges. This has natural applications in many areas of operation research.

A key result of Awerbuch et al. \cite{awerbuch1997} states that if $\cD$ is a probabilistic tree embedding of metric $(X,d)$ with distortion $c$, then the optimal cost of the buy-at-bulk problem on $(X,d)$ is at most a factor of $c$ less than the expected value of the optimal solution to the same problem on the embedded metric. The buy-at-bulk problem is trivial on trees, where there is only one path between each pair of nodes. Thus, using the results of Munagala et al. \cite{mungala2023} we can obtain an expected approximation ratio $O(c)$, where $c$ is the optimal distortion for probabilistically embedding the input into HSTs.

In this subsection, we will work to improve this approximation ratio for certain metrics by solving the problem separately on a set of requests only involving a set of points with a low distortion HST embedding and then on the remaining set of requests, which we seek to ensure has relatively small total cost. We formalize this idea in Algorithm \ref{alg:appli}.\footnote{Algorithm \ref{alg:appli} actually gives a more general framework for solving the buy-at-bulk {\em or} dial-a-ride problem, and as a result it asks for an oracle that solves the underlying problem on HSTs. For the buy-at-bulk problem on trees, the only feasible solution is optimal, so $\cA$ is trivial.  } To outline this approach, we must first define some terms that will help us set up a ``weighted outlier'' problem on the input metric space. 

In particular, consider  a buy-at-bulk instance $(\cX,\set{(s_i,t_i,d_i)}_{i=1}^m)$. Let $OPT_i$ be the cost of the optimal solution to the buy-at-bulk instance $(\cX, (s_i,t_i,d_i))$ (i.e. the same instance but with only demand pair $i$). This is easy to solve optimally, as we can simply pick a shortest $s_i$-$t_i$ path and purchase $d_i$ capacity on that path. We will consider $OPT_i$ to be the ``cost'' of demand pair $i$. In order to decide the ``cost'' of removing a node $x$ from $\cX$, we will sum the costs of all demand pairs that involve $x$. That is, let $R_x:=\set{i:x\in \set{s_i,t_i}}$, and we will set the weight of $x$ to be $w(x):=\sum_{i\in R_x}OPT_i$. Further, for any $K\subseteq X$, let $w(K)=\sum_{x\in K}w(x)$. This weighting of the points of $\cX$ sets up a weighted outlier problem for $\cX$. 

For a given metric $\cX$  with weights $w$ and any choice of $c\geq 1$, we will define $\sminoutval{c}:=\min_{K\subseteq X;(X\setminus K)\probembed{c}\cH}w(K)$ to be  the cost of the cheapest set we can leave out while still embedding the remaining metric into HSTs with at most $c$ distortion. We will also be interested in the size of such a cheap set, so we define $\sminout{c}:= |\arg\min_{K\subseteq X;(X\setminus K)\probembed{c}\cH}w(K)|$. In Claim \ref{claim:buy-correct}, we give a bound on the cost of Algorithm \ref{alg:appli}.  Note that for some constant $\eta$, if  $c=\eta\log n$,\footnote{In this section, we let $\eta$ be a constant such that for all $n$-point metric spaces $X$, $X\probembed{\eta \log n}\cH$. (Such a constant is guaranteed to exist by \cite{frt04}.) } we have $\sminoutval{c}=0$, so  Algorithm \ref{alg:appli} never obtains worse than an $O(\log n)$ approximation.

\begin{algorithm}
\caption{Approximation algorithm for dial-a-ride or buy-at-bulk}
\label{alg:appli}
\textbf{Input:}  metric $\cX$, request set $\set{(s_i,t_i)}_{i=1}^m$, approximation algorithm $\cA$ for buy-at-bulk or dial-a-ride on HSTs, integer $\ell$ if dial-a-ride and demands $\set{d_i}$ if buy-at-bulk \\
\textbf{Output:} A buy-at-bulk or dial-a-ride solution
\begin{algorithmic}[1]
\For{$x=1$ to $n$}
\State $R_x\leftarrow\set{i:x\in \set{s_i,t_i}}$
\State Let $OPT_i$ be the cost of an optimal solution that fulfills only the $i$th request
\State $w_x\leftarrow \sum_{i\in R_x} OPT_i$
\EndFor
\For{$i=0$ to $\log \log n$}
\State Let $c_i\leftarrow \eta \log n/2^i$ \Comment{here $\eta$ is the appropriate constant from \cite{frt04}}
\State Run the weighted version of Algorithm \ref{alg:round} with input distortion $c_i$ and $\epsilon=1$ 
\State Receive an outlier set $K_i$ of cost $w^i$ and an embedding $\alpha_i$ of distortion $2c_i$
\State Embed $X\setminus K_i$ into HSTs
\State Use $\cA$ to solve the relevant problem on requests $R_1:=\set{j:\set{s_j,t_j}\cap K_i=\emptyset}$ and nodes in $X\setminus K_i$
\State Let $sol_i^{1}$ be the solution obtained
\State Let $R_2:=\set{j:\set{s_j,t_j}\cap K_i\neq \emptyset}=R\setminus R_1$
\State Find a minimum distortion embedding of $X_i=K_i \cup\set{x:\exists y \in K_i \text{ s.t. }\set{x,y}\in R}$ into HSTs
\State Use $\cA$ to find a solution to the problem with node set $X_i$ and request set $R_2$
\State Let $sol_i^{2}$ be the solution obtained
\State Find the ``naive'' solution to the problem with node set $X_i$ and request set $R_2$ 
\State Let $sol_i^{3}$ be the solution obtained 
\State Let $sol_i$ be the solution obtained by combining solution $sol_i^{1}$ with either $sol_i^{2}$ or $sol_i^{3}$, whichever is cheaper
\EndFor 
\State Return the solution of lowest cost
\end{algorithmic}
\end{algorithm}

\begin{clm}\label{claim:buy-correct}
Let $(\cX,R=\set{(s_i,t_i,d_i)}_{i=1}^m)$ be an instance of the buy-at-bulk problem with an overall solution of cost at most $OPT$.

Then Algorithm \ref{alg:appli} on this instance (and with the trivial oracle $\cA$ for buy-at-bulk) returns a solution of cost at most 
\begin{align*}
    O(\min_{c}\set{c\cdot OPT + \sminoutval{c}\cdot \log(n)}).
\end{align*}
In fact, for any $c$ directly considered by the algorithm in line 6, the algorithm's cost is bounded by $O({c\cdot OPT + \sminoutval{c}\cdot \log(\sminout{c})}$.
\end{clm}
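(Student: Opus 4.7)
The plan is to prove the per-$c_i$ statement first (the ``in fact'' sentence in the claim) and then to deduce the overall $\min_c$ bound by showing that the geometric grid $c_i=\eta\log n/2^i$ tried in line~6 approximates any target $c\ge 1$ to within a factor of two. Fix an iteration $i$ with target distortion $c_i$, outlier set $K_i$ (deterministic given the LP), and HST embedding $\alpha_i$ of $X\setminus K_i$. Partition the request set into $R_1=\{j:\{s_j,t_j\}\cap K_i=\emptyset\}$ and $R_2=R\setminus R_1$. The algorithm serves $R_1$ via $sol_i^1$ and $R_2$ via $\min(sol_i^2,sol_i^3)$, so the returned $sol_i$ costs at most $sol_i^1+sol_i^3$; in expectation this is at most $\expec[sol_i^1]+sol_i^3$.

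For $\expec[sol_i^1]$, Corollary~\ref{thm:weighted} with $\epsilon=1$ yields a non-contracting HST embedding of $X\setminus K_i$ with expected distortion $O(c_i)$. Since $\cA$ is optimal on HSTs (paths are unique), the Awerbuch et al.~\cite{awerbuch1997} reduction gives expected HST-cost at most $O(c_i)$ times the optimal cost on the original metric restricted to $R_1$, which is in turn at most $OPT$; non-contraction lets this routing be pulled back to shortest paths in the original metric without increasing cost, giving $\expec[sol_i^1]=O(c_i\cdot OPT)$. For $sol_i^3$, the naive solution routes each $j\in R_2$ on a shortest $s_j$-$t_j$ path with capacity $d_j$, paying exactly $OPT_j$, so $sol_i^3\le\sum_{j\in R_2}OPT_j$. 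Every $j\in R_2$ has at least one endpoint in $K_i$, hence $\sum_{j\in R_2}OPT_j\le\sum_{x\in K_i}w_x=w(K_i)$, and Corollary~\ref{thm:weighted} guarantees $w(K_i)=O(\sminoutval{c_i}\log^2\sminout{c_i})$. Summing gives $\expec[sol_i]=O(c_i\cdot OPT+\sminoutval{c_i}\log^2\sminout{c_i})$, proving the per-$c_i$ statement since $\expec[\min_i sol_i]\le\min_i\expec[sol_i]$.

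For the $\min_c$ bound, the grid $c_i=\eta\log n/2^i$ for $i=0,\ldots,\log\log n$ covers $[\eta,\eta\log n]$ within a factor of two, so for any $c\ge 1$ one can pick $c_i=O(\max(c,\eta))=O(c)$; by monotonicity of $\sminoutval{\cdot}$ in its argument together with $\sminout{c_i}\le n$, this collapses to $O(c\cdot OPT+\sminoutval{c}\log^2 n)$. The main delicate points I expect to need care with are (i)~translating the HST routing back to the input metric, which is immediate from non-contraction but should be stated explicitly; (ii)~verifying that concatenating $sol_i^1$ on $R_1$ with $\min(sol_i^2,sol_i^3)$ on $R_2$ yields a valid buy-at-bulk solution, which holds because the two request subsets are disjoint and exhaust $R$; and (iii)~the benign double-counting of $OPT_j$ in $w(K_i)$ when both endpoints of $j$ lie in $K_i$, which only strengthens the weight inequality.
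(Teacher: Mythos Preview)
Your proposal is correct and follows essentially the same approach as the paper's proof: partition the requests into those avoiding $K_i$ and those touching $K_i$, bound the first part by $O(c_i\cdot OPT)$ via the HST embedding, bound the second part by the naive per-pair routing cost $\sum_{j\in R_2}OPT_j\le w(K_i)=O(\sminoutval{c_i}\log^2\sminout{c_i})$, and then use the geometric grid on $c$ to pass to the $\min_c$ statement. Your writeup is in fact a bit more careful than the paper's (you explicitly address pulling the HST routing back to the original metric via non-contraction, the validity of concatenating the two partial solutions, and the harmless double-counting when both endpoints lie in $K_i$), but the argument is the same.
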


First we briefly justify why proving the second part of the statement is sufficient to prove the overall statement, which minimizes over all $c$ not just the $c_i$ considered by the algorithm. First, for all $c\geq \eta \log n=c_0$, $\sminoutval{c}=0$, so we need only justify this for $c<\eta \log n$.  Let $c$ be the value that minimizes $c\cdot OPT + \sminoutval{c}\log (n)$, and note that any $c_i>c$ has optimal outlier set value at most equal to that of $c$. As the algorithm will consider some value of $c_i$ that is bigger than $c$ and at most $\eta$ times its value (if $c\geq \eta/2$, it will  in fact consider a choice of $c_i$ that is at most twice this value), we will end up with the desired approximation.

\begin{proofof}{Claim \ref{claim:buy-correct}}
We show that if index $i$ is selected as the optimal solution by the algorithm, then the cost of the solution is at most $O(c_i\cdot OPT + \sminoutval{c_i}\log( \sminoutset{c_i}))$. Let $K_i$ be the outlier set from line 8 of Algorithm \ref{alg:appli}.

Let $S_1$ be the set of pairs with neither endpoint in $K_i$ and let $S_2$ be the remaining pairs. We know that the cost of the solution we obtain for pairs in $S_1$ is at most $c_i\cdot OPT$, as the tree embedding incurs distortion $c_i$ and we get an optimal solution for the resulting tree. 

Now consider the cost associated with our purchase of edges for pairs in $S_2$. The total cost of purchasing for each demand pair independently is at most $\sum_{i\in S_2}OPT_i$. (As the cost function is sub-additive, buying more capacity on a single edge never costs more than splitting that capacity into two parts and buying them separately.)  We know that $\sum_{i\in S_2}OPT_i=O(\sminoutval{c_i}\log(\sminoutset{c_i}))$ by Algorithm \ref{alg:round}'s guarantees on the outlier set cost.
\end{proofof}

\subsubsection*{An example where the approximation ratio improves}

Let $n$ be such that $\log n$ is an even integer dividing $n$. Let $M$ be an HST metric of size $n/\log n$ and diameter $\ell \geq \log^3 n$, and let $\cN=\set{N_x}_{x\in M}$ be a set of HST metrics, each of size $\log n$.  Let $\cN'=\set{N_x'}_{x\in M}$ be the same as $\cN$, but for one  $x\st\in M$, replace $N_{x\st}$ with the metric of an unweighted expander graph that does not embed well into HSTs.   Let $M\st:=M_{1/2}[\cN']$ be a metric of size $n$.

Consider an instance of the buy-at-bulk problem on $M\st$ defined as follows
\begin{itemize}
    \item For each $x\in M$, divide the nodes of $N_x'$ into pairs $(u,v)$, and make $((x,u),(x,v))$ a demand pair in $M\st$ with demand $1$. 
    
    \item Let $x_1,x_2\neq x\st$ be a pair of nodes in $M$ with distance $\ell$. (We assume $x\st$ is chosen such that such a pair exists.) Select some $u\in N_{x_1}$ and $v\in N_{x_2}$ and add $((x_1,u),(x_2,v))$ as a demand pair with demand $1$.  
\end{itemize}

Note that $M\st$ requires distortion $\Omega(\log \log n)$ to  probabilistically embed into HSTs (due to the presence of an expander graph of size $\log n$ as a submetric), so if we run the algorithm of Munagala et al. \cite{mungala2023} and buy paths on the resulting tree, we could potentially pay as much as $\Omega(\log \log n \cdot OPT)$, depending on which distances were distorted. 

Now consider the result of running Algorithm \ref{alg:appli} on this instance. Note that if we remove from $M\st$ the nodes corresponding to $N_{x\st}'$, we get a metric that embeds into HSTs with constant distortion. We also know that the outlier cost of the nodes corresponding to $N_{x\st}'$ is at most $\log^2 n$, as there are $O(\log n)$ demand pairs associated with these nodes; each such pair has demand $1$; nodes in $N_{x\st}'$ only participate in demand pairs with other nodes in $N_{x\st}'$; and the diameter of $N_{x\st}'$ is at most $\log n$ (due to being the path metric of an unweighted graph on $\log n$ nodes). Thus, Algorithm \ref{alg:appli} gives us a solution of cost at most $O(OPT+\log^3 n)=O(OPT)$, where we used the fact that $OPT\geq \ell \geq \log^3n$, as there is some demand pair with distance at least $\ell$.

\subsection{Dial-a-ride}

In the dial-a-ride problem, we have a van of capacity $\ell$ and we are given a metric space $(X,d)$ as well as a set of transportation requests $\set{(s_i,t_i)}_{i=1}^m$ from $m$ people such that person $i$ wants the van to transport them from location $s_i$ to location $t_i$.  The goal of the problem is to find a schedule of pickups and drop-offs for the van such that:\footnote{Technically we should also require that the tour start and end at some specified ``depot'' location, but solving this relaxed version of the problem and then adding on beginning and ending drops from/to the depot only worsens the overall approximation ratio by a constant factor (as each of the new trips we add has cost at most $OPT$), so we will discuss this simplified version instead. }
\begin{enumerate}
    \item The number of passengers in the van is the number that have been picked up minus the number that have been dropped off, and it is always between $0$ and $\ell$,

    \item For any customer $i$, that customer is picked up at their starting location $s_i$ and dropped off later at their destination $t_i$, and

    \item The cost of the solution is the sum of distances traveled by the van between each consecutive stop on its schedule.
\end{enumerate}

This problem is NP-hard, as it generalizes the Traveling Salesman Problem \cite{charikar1998finite}. Note that if we serve only a single request $i$, the problem is easy and incurs cost $d(s_i,t_i)$ in the optimal solution. Thus, computing $OPT_i$ for each request $i$ is efficient, and we can run Algorithm \ref{alg:appli} for this problem. 

However, we will need to use a more complicated algorithm $\cA$ to solve the dial-a-ride problem on HSTs than we did for buy-at-bulk. In particular, we will consider an algorithm $\cA$ that gets approximation ratio $r_{n,m,\ell}$ for dial-a-ride instances on HSTs with $n$ nodes, $m$ requests, and $\ell$ van capacity. In Claim \ref{claim:dial-alg}, we consider the approximation ratio achieved by Algorithm \ref{alg:appli} with such an input $\cA$. Again note that for $c=\eta\cdot \log n$, $\sminoutval{c}=0$, so the worst-case approximation ratio is $O(r_{n,m,\ell})$.

\begin{clm}\label{claim:dial-alg}
Let $\cA$ be an approximation algorithm for the dial-a-ride problem on HSTs that achieves approximation ratio $r_{n,m,\ell}$ for instances of $n$ nodes, $m$ requests, and $\ell$ van capacity. 

Let $(\cX,\set{(s_i,t_i)}_{i=1}^m,\ell)$ be a dial-a-ride input such that $OPT$ is the cost of an optimal solution for this instance, $\Delta$ is the  diameter of $\cX$, and $\sminoutval{c}$ and $\sminoutset{c}$ are as in Section \ref{sec:buy}.
Then Algorithm \ref{alg:appli} achieves a feasible solution with expected cost at most 
\begin{align*}
    O(\min_c\set{r_{n,m,\ell}\cdot c\cdot OPT +  \Delta \cdot \sminoutval{c}\log n}).
\end{align*}
In fact, for any $c$ directly considered by the algorithm in line 6, the algorithm's cost is bounded by \\ $O({c\cdot OPT + \sminoutval{c} \cdot \Delta\cdot  \log (\sminout{c})})$.
\end{clm}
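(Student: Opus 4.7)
The plan is to mirror the proof of Claim~\ref{claim:buy-correct}: fix an iteration index $i$ (with $c_i = \eta \log n / 2^i$), show that the per-iteration solution cost $sol_i = sol_i^1 + \min(sol_i^2, sol_i^3)$ satisfies the stated bound, and then use the geometric spacing of $\{c_i\}$ to recover the $\min_c$ expression.  Since Algorithm~\ref{alg:appli} returns the cheapest $sol_i$, it suffices to exhibit \emph{any} favorable index.

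For the $R_1$ portion, the restriction of an optimal dial-a-ride solution on $\cX$ to the requests in $R_1$ has cost at most $OPT$.  Composing this schedule with the non-contracting embedding $\alpha_i$, which has distortion $O(c_i)$ by the guarantee of Algorithm~\ref{alg:round} with $\epsilon=1$, yields a feasible schedule on the HST with expected cost at most $O(c_i)\cdot OPT$; in particular, the optimal HST cost for $R_1$ is bounded by the same quantity in expectation.  Running $\cA$ on the HST then produces $sol_i^1$ of expected cost at most $O(r_{n,m,\ell}\cdot c_i \cdot OPT)$, and because $\alpha_i$ is non-contracting, executing the same schedule on $\cX$ does not cost more.

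For the $R_2$ portion, I upper bound $\min(sol_i^2, sol_i^3)$ by $sol_i^3$, the naive per-request fallback.  Serving a single request $j\in R_2$ requires the van to travel from its previous stop to $s_j$ (cost at most $\Delta$) and then from $s_j$ to $t_j$ (cost $OPT_j$), so $sol_i^3 \leq \Delta \cdot |R_2| + \sum_{j\in R_2} OPT_j$.  Each $j\in R_2$ has at least one endpoint in $K_i$, so by the weight definition $\sum_{j\in R_2} OPT_j \leq \sum_{x\in K_i} w_x = w^i$; since distances are at least $1$, we also have $OPT_j \geq 1$ and thus $|R_2|\leq \sum_{j\in R_2} OPT_j \leq w^i$.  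Hence $sol_i^3 = O(\Delta \cdot w^i)$, and the weighted guarantee of Algorithm~\ref{alg:round} (Corollary~\ref{thm:weighted}) gives $w^i = O(\sminoutval{c_i}\log^2 \sminoutset{c_i})$.

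Combining the two bounds, iteration $i$ yields expected cost $O(r_{n,m,\ell}\cdot c_i \cdot OPT + \Delta \cdot \sminoutval{c_i}\log^2 \sminoutset{c_i})$, matching the ``for any $c$ directly considered'' statement.  The values $\{c_i\}$ form a factor-$2$ geometric sequence covering $[\eta, \eta\log n]$, and for $c\geq \eta\log n$ we have $\sminoutval{c}=0$ by~\cite{frt04}, so rounding any target $c$ up to the nearest $c_i$ changes the bound by only a constant, giving the $\min_c$ form.  The main subtlety lies in the $R_2$ fallback: unlike buy-at-bulk, where subadditivity lets us serve each request on its shortest path at exactly $OPT_j$, dial-a-ride forces the van to physically move between requests, and this is precisely what forces the $\Delta$ factor into the second term of the bound.
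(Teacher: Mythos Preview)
Your proposal is correct and follows essentially the same approach as the paper: fix an iteration $i$, bound the $R_1$ cost by $O(r_{n,m,\ell}\cdot c_i\cdot OPT)$ via the distortion guarantee of Algorithm~\ref{alg:round}, bound the $R_2$ cost via the naive per-request schedule using $|R_2|\le w^i$ and Corollary~\ref{thm:weighted}, and then pass to the $\min_c$ statement by the geometric spacing of the $c_i$. The only cosmetic difference is that the paper separates out the transition cost (moving the van from the last $R_1$ drop-off to the first $R_2$ pickup) as a third term bounded by $OPT$, whereas you absorb it into the first ``move to $s_j$'' step of $sol_i^3$; the paper also records an alternative bound on the $R_2$ part via an FRT embedding of the points in $R_2$, which you omit but which is not needed for the claim as stated.
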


Note that this algorithm's performance is generally better on instances whose optimal cost is much bigger than $\Delta$ (such as instances in which the total number of requests is large compared to $\ell\Delta$). 

Charikar et al. \cite{charikar1998} give an algorithm $\cA$ for dial-a-ride on HSTs with $r_{n,m,\ell}=O(\sqrt \ell)$. Thus, we get an algorithm that produces a solution such that for any $c\geq 1$, the solution's cost is at most $O(\sqrt{\ell}\cdot c\cdot OPT + \Delta\cdot\sminoutval{c}\log n)$. If each location is involved in at most $t$ requests (i.e. the demand is ``spread out''), then the solution also has cost at most $O(\sqrt{\ell}\cdot (c+\log(t\cdot  \sminoutset{c}))\cdot OPT)$.

As in Claim \ref{claim:buy-correct}, to prove Claim \ref{claim:dial-alg}, we only need to prove the given bound for each value of $i$ tested by the algorithm.

\begin{proofof}{Claim \ref{claim:dial-alg}}
We will show a slightly stronger version of the claim by showing that for each $i$, the cost associated with the solution for $c_i$ is bounded by $O(r_{n,m,\ell}\cdot c_i\cdot OPT +  \sminoutval{c_i}\log\sminoutset{c_i}\cdot \Delta)$ and also by $O(r_{n,m,\ell}\cdot c_i +  r_{2|R_2^i|,m,\ell}\log(|R_2^i|)\cdot OPT)$, where $R_2^i$ is the set of requests that intersect $K_i$ in iteration $i$ of the algorithm. (Only the first of these two bounds will be necessary to show the claim.) 
For the rest of this proof we will fix a choice of $i$ and consider the solution the algorithm obtains for that choice of $i$ and we refer to $R_2^i$ as $R_2$.

Let $K_i$ be the outlier set from line 8 of Algorithm \ref{alg:appli}, let $R_1$ be the set of requests $(s_j,t_j)$ that contain a point in $K_i$, and recall that $R_2$ is the set of remaining requests. Note that each solution the algorithm finds is feasible, as we handle all requests in $R_1$ according to some feasible solution given by $\cA$, and then we handle all requests in $R_2$ according to another feasible solution from $\cA$ or by a naive algorithm for this set of requests.

We will divide the cost of the algorithm's solution into three sections: the cost of servicing requests in $R_1$, the cost of servicing requests in $R_2$, and the cost of moving the van from the last drop-off location in the $R_1$ solution to the first pickup location in the $R_2$ solution.

\begin{enumerate}
    \item First consider the cost of the solution we obtain for servicing requests in $R_1$. Our expected cost is at most $r_{n,m,\ell}$ times the optimum solution on the tree this solution is embedded in, and the expected distortion of that embedding is bounded by $2c_{i}$ (as we potentially lose a factor of $2$ by setting $\epsilon=1$). Thus, the total cost of this part of the solution is bounded by $2r_{n,m,\ell}\cdot c_{i}\cdot OPT$ (noting that the optimal solution for $R_1$ is no more than the optimal solution for $R$).

    \item Now consider the cost of the second part of the solution. From the analysis of the weighted version of Algorithm \ref{alg:round}, we know that the total cost of the outlier set is at most $O(\sminoutval{c_{i}}\log(\sminoutset{c_{i}})$ This means 
    \begin{align*}
        \sum_{i\in R_2}d(s_i,t_i) &\leq  \sum_{x\in K_{i\st}}\sum_{i\in R_x} d(s_i,t_i) \\
        &\leq \sum_{x\in K_{i\st}}w_x  \\
        &\leq O(\sminoutval{c_{i}}\log(\sminoutset{c_{i}}).
    \end{align*}

    The first line comes immediately from the definition of $R_x$ (the set of requests involving $x$) and $R_2$; the second comes from the definition of $w_x$; and the third comes from our bound on the outlier set value of Algorithm \ref{alg:round}.
    
    Consider a naive algorithm that simply picks an arbitrary order on the requests and services the pick-ups and drop-offs one at a time in that order. The cost of the $j$th pickup and drop-off is $d(s_j,t_j)$, and the cost of moving from $t_j$ to $s_{j+1}$ is $d(t_j,s_{j+1})\leq \Delta$. In total we service at most $\sminoutval{c_{i}}$ requests in $R_2$, as we assume each request has a cost at least $1$. Thus, the total cost incurred on this part of the solution is bounded by $O(\sminoutval{c_{i}}\log\sminoutset{c_{i}}+\sminoutval{c_{i}}\cdot \Delta)$. 

    Additionally, we know that the optimal cost on this set of requests is bounded by $OPT$, and the number of points we need to embed to get a solution is at most $2|R_2|$. Thus, we can use a standard Fakcharoenphol et al. \cite{frt04} embedding on the set of points involved in this request and we get a solution of value at most $O(r_{2|R_2|,m,\ell} \log |R_2|OPT)$.

    \item Finally, consider the cost of the algorithm moving from the last drop-off spot $t_j$ in $R_1$ to the first pickup spot $s_{j'}$ in $R_2$. This total cost is $d(t_j,s_{j'})$, but this is upper bounded by $OPT$, as at some point any optimal algorithm must visit both locations.
\end{enumerate}
\end{proofof}

\fi

\end{document}